\newtheorem{theorem}{Theorem}
\newtheorem{conjecture}[theorem]{Conjecture}
\newtheorem{corollary}[theorem]{Corollary}
\newtheorem{definition}[theorem]{Definition}
\newtheorem{example}[theorem]{Example}
\newtheorem{lemma}[theorem]{Lemma}
\newtheorem{proposition}[theorem]{Proposition}
\newtheorem{remark}[theorem]{Remark}
\newenvironment{proof}[1][Proof]{\noindent\textbf{#1.} }{\ \rule{0.5em}{0.5em}}
\definecolor{awesome}{rgb}{1.0, 0.13, 0.32}
\numberwithin{equation}{section}
\begin{document}

\pagenumbering{arabic}

\title{Short-time Fourier transform of the pointwise product of two functions with application to the nonlinear Schr\"odinger equation}
\author{Nuno Costa Dias{\thanks{ncdias@meo.pt}}
\and Jo\~{a}o Nuno Prata{\thanks{joao.prata@mail.telepac.pt }}
\and Nenad Teofanov{\thanks{nenad.teofanov@dmi.uns.ac.rs}}}

\maketitle

\begin{abstract}
We show that the short-time Fourier transform of the pointwise product of two functions $f$ and $h$ can be written as a suitable product of the short-time Fourier transforms of $f$ and $h$. The same result is then shown to be valid for the Wigner wave-packet transform. We study the main properties of the new products. Furthermore, we use these products to derive integro-differential equations on the time-frequency space equivalent to, and generalizing, the cubic nonlinear Schr\"odinger equation. We also obtain the Weyl-Wigner-Moyal equation satisfied by the Wigner-Ville function associated with the solution of the nonlinear Schr\"odinger equation. The new equation resembles the Boltzmann equation.
\end{abstract}

\section{Introduction}

Motivated by applications in mathematical physics we propose phase-space analogues of the
convolution identity for the Fourier transform
\begin{equation}
\widehat{f \cdot h}= \widehat{f} \star \widehat{h}~,
\label{eqIntro2}
\end{equation}
where  $f,h \in L^1 (\mathbb{R}^d)$, $\cdot$ is the pointwise product of functions, $\star$ is  the convolution,
and $ \;  \widehat{\cdot} \; $ denotes the Fourier transform.
In our study, the Fourier transform is replaced by the short-time Fourier transform (STFT) and the Wigner wave-packet transform,
while the convolution is replaced by the Gabor product $ \natural_g $ and the Wigner product $ \sharp_{g} $, respectively.

Given a fixed window $g \in S_0 (\mathbb{R}^d) \backslash \left\{0 \right\}$
(see Definition \ref{def:Feichtingeralgebra}), the STFT (see \eqref{eq2.1}) which maps
$S_0 (\mathbb{R}^d) $ into $ S_0(\mathbb{R}^{2d})$, $f \mapsto \left(V_g f\right)(x, \omega)$, gives some joint time-frequency description of a signal $f$. An inequality due to Lieb \cite{Lieb} permits the extension of the STFT to other function spaces. Let $r\geq 2$ and $\frac{1}{r}+\frac{1}{r^{\prime}}=1$.
If $f \in L^p(\mathbb{R}^d)$ with $r^{\prime} \leq p \leq r$, then there exists a constant $C>0$ (which depends on the window $g$), such that
\begin{equation}
\|V_g f \|_{L^r (\mathbb{R}^{2d})} \leq C \|f\|_{L^p(\mathbb{R}^d)}~.
\label{eqIntro4}
\end{equation}
In particular, for $p=r=2$, we have the equality
\begin{equation}
\|V_g f \|_{L^2 (\mathbb{R}^{2d})} = \|f\|_{L^2(\mathbb{R}^d)} \|g\|_{L^2(\mathbb{R}^d)}~,
\label{eq1AParseval}
\end{equation}
which is a consequence of the orthogonality relations (see Subsection \ref{subsec:transforms}).

By analogy with the Fourier transform, we introduce  the product $\natural_g: S_0 (\mathbb{R}^{2d}) \times S_0 (\mathbb{R}^{2d}) \to S_0 (\mathbb{R}^{2d})$, such that
\begin{equation}
V_g (f \cdot h)=\left(V_g f\right) \natural_g \left(V_g h\right)~,
\label{eqIntro5}
\end{equation}
and then try to extend it continuously to some function spaces $\mathcal{F}_0 (\mathbb{R}^{2d})$, $\mathcal{F}_1 (\mathbb{R}^{2d}) $, $\mathcal{F}_2 (\mathbb{R}^{2d})$, so that $\natural_g: \mathcal{F}_1 \times \mathcal{F}_2 \to \mathcal{F}_0$, with:
\begin{equation}
\|F_1 \natural_g F_2 \|_{\mathcal{F}_0} \leq C \|F_1\|_{\mathcal{F}_1} \|F_2\|_{\mathcal{F}_2}.
\label{eqIntro6}
\end{equation}
It is  clear that, in this analogy, (\ref{eqIntro4}) plays the r\^ole of the Hausdorff-Young inequality
$$
\|\widehat{f}\|_{L^{p^{\prime}}} \leq \|f\|_{L^p}~,
$$
when $p\in \left[1,2 \right]$,  $f \in L^p (\mathbb{R}^d)$, and $\frac{1}{p}+\frac{1}{p^{\prime}}=1$;
equation (\ref{eq1AParseval}) plays the r\^ole of Parseval's Theorem; equation (\ref{eqIntro5}) plays the r\^ole of the convolution identity (\ref{eqIntro2}); and  (\ref{eqIntro6}) plays the r\^ole of the Young convolution inequality (Theorem \ref{TheoremYoung}).

The product $\natural_g$ is introduced in Definition \ref{definitionGaborProduct} and the identity (\ref{eqIntro5}) is proved in Theorem \ref{theoremWindowedProduct1}. Various extensions of the form (\ref{eqIntro6}) are discussed in Subsection \ref{subsec:extension} (Theorems \ref{TheoremModulationSpace4}, \ref{thm:extension-version1} and Conjecture \ref{Conjecture1}). Modulation spaces, introduced by H.G. Feichtinger \cite{Feichtinger1,Feichtinger2} will play a crucial r\^ole in our derivations.

Similar considerations hold for the Wigner wave-packet transform and the corresponding  product $ \sharp_{g} $.

\par

Before giving a brief summary of the paper, let us explain the main motivation which comes from mathematical physics. We want to derive a phase space version of the nonlinear Schr\"odinger equation (NLSE). Equations of this form appear in many different situations. It may be a classical field equation with applications to optics and water waves (with its soliton solutions) \cite{Carles1}. NLSE's are used for modeling the propagation of deep-water wavetrains. Their doubly localized breather solutions can be connected to the sudden formation of extreme waves known as rogue or freak waves \cite{Vitanov}. But it may also be considered as  a nonlinear modification of the quantum mechanical Schr\"odinger equation. The nonlinearity is assumed to be of small magnitude in order not to violate the superposition principle in a dramatic way, and its purpose is, generally speaking, to induce the collapse of the wave function. A famous example is for instance the Schr\"odinger-Newton equation proposed by Roger Penrose \cite{Penrose} and Lajos Diosi \cite{Diosi}, where the nonlinearity is derived from Gauss's law for gravity. In this case, one would have a gravity induced collapse of the wave function.

The cubic NLSE is also used in Bose-Einstein condensate theory \cite{Kevrekidis}. Upon second quantization, one obtains a Bose gas of particles interacting through binary delta function interactions \cite{Korepin}. In one spatial dimension, this system is integrable and displays an infinite hierarchy of involutive integrals of motion.

In this work we will focus on the cubic NLSE. We shall try to rewrite equivalent equations in the corresponding phase-space. We believe that this may not be only of academic interest, but also of practical usefulness. Let us list some interesting characteristics of
phase-space representations: (i) Position and momentum (resp. time and frequency) appear on equal footing. (ii) In the case of the Weyl-Wigner formulation this leads to a beautiful symplectic/metaplectic covariance structure \cite{Folland,Gosson1,Gosson2}. (iii) Semiclassical (or high-frequency) limits are frequently addressed in this framework \cite{Carles2,Carles3,Carles4,Giulini}. (iv) Another reason for considering  phase-space representations comes from recent work that suggests that this seems to be a suitable framework for addressing hybrid quantum-classical systems appearing in molecular collision theory \cite{Bondar1,Bondar2}. (v) More specifically, the Weyl-Wigner formulation  for the NLSE leads to an equation which is strongly reminiscent of kinetic equations such as Boltzmann's equation. One is then in a position to apply analytical as well as numerical techniques developed in the context of kinetic theory to the NLSE. (vi) The Weyl-Wigner representation of the NLSE (and also the other phase-space representations) is equivalent to the standard NLSE if one plugs in as the initial datum the Wigner function associated with the initial wave function. However, the initial value problem in phase-space may be well-posed for more general initial distributions. This means that the phase-space formulation "contains" the solutions of the standard NLSE, but it is much richer, as it admits other solutions. This has been the {\it rationale} for a series of papers by the authors \cite{Dias1,Dias2,Dias3}, where the phase-space extensions are explored in several different contexts. We refer to \cite[Chapter 7]{Benyi} for an overview of results related to well-posedness of the NLSE in the framework of modulation spaces, see also the recent contributions \cite{OhWang, OhWang2}.

The paper is essentially divided in two parts. The first part (Section \ref{sec:2}) is self-contained and certainly interesting strictly from the point of view of harmonic analysis. Its aim is to derive the properties and extensions of products $\natural_g$  and $\sharp_{g}$ mentioned above. To that end we employ the powerful machinery of modulation spaces which are introduced in Subsection \ref{subs:modsp}. Apart from the well-known facts, we consider the range of the STFT and prove a representation and a density theorem (Theorems \ref{TheoremDensityWindows} and \ref{Theorem2}). We proceed with possible extensions of the products $\natural_g$ and $\sharp_{g}$ in Subsection \ref{subsec:extension}. By using different techniques we obtain partially overlapping results
(Theorems \ref{TheoremModulationSpace4} and \ref{thm:extension-version1}),  which we compare in Remark \ref{rem:extensions}. Finally, we determine an involution suitable for the products $\natural_g$ and $\sharp_{g}$ in Subsection \ref{subsec:algebras}.

In the second part of the paper (Section \ref{sec:3}), we derive three different representations of the cubic NLSE in phase-space. The first two are obtained via two windowed transforms: the STFT and the Wigner wave-packet transform. The main advantages of the STFT are: (i) it is the phase-space representation closest to the Fourier transform, and (ii) there is an enormous amount of analytical and numerical techniques readily available (see e.g. \cite{Grochenig}). The draw-back of the STFT as compared with the windowed wave-packet transform is the fact that the latter transform is symplectic/metaplectic covariant \cite{Gosson1,Gosson2,Wong}.
The third representation is the Wigner transform. The resulting equation is akin to the Boltzmann equation: it has a bilinear "collision" term. However, this term is nonlocal in the spatial variable (unlike the Boltzmann equation).

\subsection*{Notation}

We denote by $x \in \mathbb{R}^d$ a "time" (or position) variable and by $\omega \in \mathbb{R}^d$ a "frequency" (or momentum) variable, and  write $|x|= (x_1^2 + \cdots x_d^2)^{1/2}$. Functions on $\mathbb{R}_x^d$ are denoted $f,g,h, \dots$, and for those on $\mathbb{R}^{2d} \simeq \mathbb{R}_x^d \times \mathbb{R}_{\omega}^d$ we shall use capital letters $F,G,H, \dots$.
For a given function $f$ its complex conjugate, reflection, and involution are respectively given by
\begin{equation}
\overline{f} (x) = \overline{f (x) }, \;\;\;
\mathcal{I} f (x) = f(-x), \;\;\; \text{and} \;\;\;
f^\dagger (x) = \overline{f(-x)}.
\label{eqReflectionOperator}
\end{equation}
The Fourier transform of $f \in L^1 (\mathbb{R}^d)$ given by
\begin{equation}
\widehat{f} (\omega) := \int_{\mathbb{R}^d} f(x) e^{-2 i \pi x \cdot \omega} dx, \qquad \omega \in \mathbb{R}^d,
\label{eqFourierTransform}
\end{equation}
extends to $ L^2 (\mathbb{R}^d)$ by standard approximation procedure.
$ C_0 ^\infty (\mathbb{R}^d)$ is the space of smooth compactly supported functions,  $\mathcal S (\mathbb{R}^d)$ is the Schwartz space of smooth rapidly  decreasing functions and its dual $\mathcal S^{\prime} (\mathbb{R}^d)$ is the space of tempered distributions.
By $  {\mathcal S} ^{(1)} (\mathbb{R}^d ) $ we denote the Gelfand-Shilov space of smooth functions
given by:
\begin{equation}
f \in  {\mathcal S}^{(1)} (\mathbb{R}^d) \Leftrightarrow
\| f(x)  e^{h\cdot |x|}\|_{L^\infty} < \infty \;
\; \text{and} \;\;
\| \hat f (\omega)  e^{h\cdot |\omega|}\|_{L^\infty}< \infty, \;\; \forall  h > 0.
\label{eqGelfandShilovspace}
\end{equation}
Any $ f \in  {\mathcal S}^{(1)} (\mathbb{R}^d) $ can be extended
to the complex domain as holomorphic functions in a strip, \cite{GS}.
The dual space of  $  {\mathcal S} ^{(1)} (\mathbb{R}^d ) $ will be denoted by
$ ({\mathcal S}^{(1)})^{\prime} (\mathbb{R}^d ). $

We shall use the notation
$< \cdot, \cdot> $ for the duality bracket between a space of distributions $ \mathcal A^{\prime} (\mathbb{R}^d) $ and its test function space $ {\mathcal A} (\mathbb{R}^d)$,
and $\langle \cdot, \cdot \rangle_{L^2 (\mathbb{R}^d)}$ denotes the inner product in $L^2 (\mathbb{R}^d)$. Thus $< \cdot, \cdot> = \langle  \cdot, \overline{\cdot}  \rangle_{L^2 (\mathbb{R}^d)}$.
  The norm on the Lebesgue space $L^p (\mathbb{R}^d)$ is denoted by $|| \cdot ||_{L^p}$, $ 1\leq p \leq \infty$. Sometimes, if we need to emphasize the dimension, we write $|| \cdot ||_{L^p (\mathbb{R}^d)}$. Mixed-norm Lebesgue spaces $L^{p,q} (\mathbb{R}^{2d})$, $ 1\leq p,q < \infty$, consist of all $F \in \mathcal(\mathcal S^{(1)})^{\prime}  (\mathbb{R}^{2d})$ such that
\begin{equation}
||F||_{L^{p,q}} := \left( \int_{\mathbb{R}^d} \left(  \int_{\mathbb{R}^d} \left|F (x, \omega) \right|^p dx \right)^{\frac{q}{p}} d \omega\right)^{\frac{1}{q}}
\label{eqMixedNormSpace} < \infty.
\end{equation}
If $p= \infty $ or $q= \infty$, then the sup-norm is used.

If there exists $C > 0$ such that $a \le C b $ for some quantities $a$ and $b$, then we shall write $a \precsim b$. If $a \precsim b$ and $b \precsim a$, then we write $a\asymp b$.

The most important operators in time-frequency analysis are  the translation (or time-shift) operator
\begin{equation}
T_x f (y) = f(y -x),
\label{eqTranslationOperator}
\end{equation}
and the modulation (or frequency-shift) operator
\begin{equation}
M_{\omega} f(x) = e^{2 \pi i \omega \cdot x} f(x),
\label{eqModulationOperator}
\end{equation}
which act unitarily on $L^2 (\mathbb{R}^d)$, and satisfy the commutation relation
\begin{equation}
T_x M_{\omega} = e^{- 2 i \pi x \cdot \omega} M_{\omega} T_x.
\label{eqCommutationRelations}
\end{equation}
We also recall the (unitary) dilation operator:
\begin{equation}
D_s f(x) = s^{- d/2} f(s^{-1} x), \qquad  s >0.
\label{eqDilationOperator}
\end{equation}

\section{Time-frequency analysis} \label{sec:2}

In this section we start by recapitulating basic properties of the STFT, the Wigner wave-packet transform and the Wigner transform. Then we present two products $\natural_g$ and $\sharp_{g}$  in phase-space which generalize the convolution, and discuss their properties. To extend those products we employ Feichtinger's modulation spaces which we recall in Subsection \ref{subs:modsp}.
There we also consider the range of the STFT to some extent. Different extensions of the products are discussed in Subsection \ref{subsec:extension}, and their algebraic properties are treated in Subsection \ref{subsec:algebras}.

\par

The product and the convolution product are well defined under the conditions of the following theorem
(cf. \cite{Adams,Beckner,Brascamp}).

\begin{theorem}\label{TheoremYoung}
Let $ p, q, r \in [1,\infty].$
If $f \in L^p(\mathbb{R}^d)$ and $g \in L^{p^{\prime}}(\mathbb{R}^d)$ with $\frac{1}{p} + \frac{1}{p^{\prime}}=1$,
then $f \cdot g \in L^1 (\mathbb{R}^d)$ and the H\"older inequality holds:
\begin{equation*}
||f  g||_{L^1} \le  ||f||_{L^p} ||g||_{L^{p^{\prime}}},
\label{eqHolder}
\end{equation*}

If $f \in L^p(\mathbb{R}^d)$, $g \in L^q(\mathbb{R}^d)$ and $\frac{1}{p} + \frac{1}{q} = 1+ \frac{1}{r}$, then
$f \star g \in L^r(\mathbb{R}^d)$ and the Young inequality holds:
\begin{equation}
||f \star g||_{L^r}  \precsim ||f||_{L^p} ||g||_{L^q}.
\label{eqYoung}
\end{equation}
Here $\star $ denotes
the convolution product of $f$ and $g$:
\begin{equation}
\left( f \star g \right) (x)  = \int_{\mathbb{R}^d } f(x- y) g(y) dy
= \langle T_x \mathcal{I} f, \overline{g} \rangle_{L^2 (\mathbb{R}^d)}.
\label{eqConvolutionInnerProduct}
\end{equation}


\end{theorem}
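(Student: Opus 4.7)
Both inequalities are classical, so the plan is simply to present the standard integral-theoretic arguments; the only genuine task is to organize the exponent bookkeeping cleanly.

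For the H\"older part, the plan is to start from the elementary Young inequality for non-negative reals, $ab \le \tfrac{1}{p}a^p + \tfrac{1}{p'}b^{p'}$, which follows from convexity of the exponential (or concavity of $\log$). Applying this pointwise to $|f(x)|/\|f\|_{L^p}$ and $|g(x)|/\|g\|_{L^{p'}}$ and integrating over $\mathbb{R}^d$ yields $\|fg\|_{L^1} \le \|f\|_{L^p}\|g\|_{L^{p'}}$ at once. The limiting cases $p\in\{1,\infty\}$ are handled separately: they reduce to the trivial estimate $|fg|\le \|f\|_{L^\infty}|g|$ and its symmetric counterpart. One should also observe that if either norm vanishes, both sides are zero, so the normalisation step is legitimate.

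For the Young convolution inequality, the idea is a three-factor H\"older decomposition. Writing
\begin{equation*}
|f(x-y)g(y)| = \bigl[|f(x-y)|^{p}|g(y)|^{q}\bigr]^{1/r}\cdot |f(x-y)|^{1-p/r}\cdot |g(y)|^{1-q/r},
\end{equation*}
one applies H\"older's inequality with exponents $r$, $s=\tfrac{pr}{r-p}$, $t=\tfrac{qr}{r-q}$, whose reciprocals sum to $\tfrac{1}{r}+\tfrac{1}{p}-\tfrac{1}{r}+\tfrac{1}{q}-\tfrac{1}{r} = \tfrac{1}{p}+\tfrac{1}{q}-\tfrac{1}{r} = 1$ precisely by the hypothesis on $(p,q,r)$. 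This gives the pointwise-in-$x$ bound
\begin{equation*}
|(f\star g)(x)| \le \Bigl(\int |f(x-y)|^{p}|g(y)|^{q}\,dy\Bigr)^{1/r}\|f\|_{L^p}^{1-p/r}\|g\|_{L^q}^{1-q/r}.
\end{equation*}
Raising to the $r$-th power and integrating over $x$, then invoking Fubini on the inner double integral (using translation invariance of Lebesgue measure to reduce $\int|f(x-y)|^p\,dx = \|f\|_{L^p}^p$), one collects $\|f\|_{L^p}^{p}\|g\|_{L^q}^{q}\cdot \|f\|_{L^p}^{r-p}\|g\|_{L^q}^{r-q} = \|f\|_{L^p}^r\|g\|_{L^q}^r$, and extraction of the $r$-th root delivers \eqref{eqYoung}. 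The boundary cases ($r=\infty$, i.e.\ $1/p+1/q=1$, or $p=1$) follow directly from H\"older or from the trivial $L^1\star L^q \hookrightarrow L^q$ estimate.

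I do not expect any real obstacle: the only point that demands care is verifying that the three chosen exponents are all in $[1,\infty]$ under the given constraint (which forces $p,q\le r$) and checking the degenerate endpoints. Once the factorisation is written down correctly, everything reduces to H\"older and Fubini.
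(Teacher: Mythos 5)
Your proof is correct. Note, however, that the paper does not prove Theorem \ref{TheoremYoung} at all: it is stated as a classical result with references (Adams--Fournier, Beckner, Brascamp--Lieb), so there is no argument in the paper to compare yours against. Your two arguments are the standard ones --- the elementary Young inequality $ab \le \tfrac{1}{p}a^p + \tfrac{1}{p'}b^{p'}$ for H\"older, and the three-exponent H\"older splitting plus Fubini for the convolution inequality --- and both are sound (indeed you obtain \eqref{eqYoung} with constant $1$, stronger than the stated $\precsim$). The only point worth tightening is the endpoint $p=1$, $q=r$: the ``trivial'' estimate $L^1 \star L^r \hookrightarrow L^r$ is really Minkowski's integral inequality, so you should either invoke it by name or observe that your three-factor argument degenerates gracefully there (the exponent $s$ becomes $r/(r-1)$ and $t=\infty$, and the computation still goes through).
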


The Fourier transform of the convolution is tantamount to point-wise multiplication.

\begin{theorem}\label{theoremFourierTransformConvolution}
Let $f, g \in L^1 (\mathbb{R}^d)$. Then
\begin{equation}
\left(f \star g \right)^{\widehat{}} = \widehat{f} \cdot \widehat{g}.
\label{eqFourierTransformConvolution1}
\end{equation}
If, additionally $\widehat{f}, \widehat{g} \in L^1 (\mathbb{R}^d)$, then
\begin{equation}
\left(f \cdot g \right)^{\widehat{}} = \widehat{f} \star \widehat{g}  .
\label{eqFourierTransformConvolution2}
\end{equation}
\end{theorem}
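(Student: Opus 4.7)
The plan is to prove (\ref{eqFourierTransformConvolution1}) by direct computation and then obtain (\ref{eqFourierTransformConvolution2}) as a consequence via Fourier inversion. First I would note that for $f,g \in L^1(\mathbb{R}^d)$, the Young inequality (\ref{eqYoung}) with $p=q=r=1$ gives $f \star g \in L^1(\mathbb{R}^d)$, so that the left-hand side of (\ref{eqFourierTransformConvolution1}) is well defined via (\ref{eqFourierTransform}). The right-hand side is well defined pointwise because $\widehat{f}$ and $\widehat{g}$ are bounded (in fact in $C_0(\mathbb{R}^d)$ by Riemann--Lebesgue).

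For (\ref{eqFourierTransformConvolution1}), I would write out the definitions explicitly,
\begin{equation*}
\left(f \star g\right)^{\widehat{\,}}(\omega) = \int_{\mathbb{R}^d} \left( \int_{\mathbb{R}^d} f(x-y) g(y)\, dy \right) e^{-2\pi i x \cdot \omega}\, dx,
\end{equation*}
and then invoke Fubini's theorem, whose hypotheses are satisfied because the integrand is absolutely integrable on $\mathbb{R}^{2d}$ (its modulus is $|f(x-y)||g(y)|$, whose double integral equals $\|f\|_{L^1}\|g\|_{L^1}$ by Tonelli and a change of variables). After swapping the order of integration, the substitution $x \mapsto x+y$ decouples the integrals and produces a factor $e^{-2\pi i y \cdot \omega}$, yielding the product $\widehat{f}(\omega)\widehat{g}(\omega)$.

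For (\ref{eqFourierTransformConvolution2}), the additional hypothesis $\widehat{f},\widehat{g} \in L^1(\mathbb{R}^d)$ is exactly what allows me to invoke Fourier inversion to recover $f$ and $g$ (almost everywhere) as the inverse Fourier transforms of $\widehat{f}$ and $\widehat{g}$. By part (\ref{eqFourierTransformConvolution1}) applied to the pair $\widehat{f},\widehat{g} \in L^1$, the Fourier transform of $\widehat{f}\star\widehat{g}$ equals the pointwise product of the Fourier transforms of $\widehat{f}$ and $\widehat{g}$; using the elementary relation $\widehat{\widehat{f}}(x) = f(-x)$ (valid under our integrability assumptions) and the reflection behaviour of the Fourier transform, one then identifies $\widehat{f}\star\widehat{g}$ with $\widehat{f \cdot g}$. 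An equivalent route is to observe that both sides of (\ref{eqFourierTransformConvolution2}) are continuous functions, and then to evaluate the pairing of each side against the Gaussian $e^{-\pi|\omega|^2}$ (or any Schwartz function) using Parseval and Fubini.

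The only delicate point is bookkeeping: part (\ref{eqFourierTransformConvolution1}) is a pure Fubini computation, while part (\ref{eqFourierTransformConvolution2}) requires that one actually know the Fourier inversion theorem for $L^1 \cap \mathcal{F}L^1$. If one prefers to avoid an \emph{a priori} appeal to inversion, the main obstacle becomes showing that $f \cdot g \in L^1$ and that its Fourier transform, computed by Fubini against the absolutely convergent representation $g(x) = \int \widehat{g}(\omega) e^{2\pi i x\cdot\omega} d\omega$, equals $\widehat{f}\star\widehat{g}$; this is essentially the same computation as for (\ref{eqFourierTransformConvolution1}) with the roles of product and convolution swapped.
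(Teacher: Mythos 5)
Your proof is correct. The paper states Theorem \ref{theoremFourierTransformConvolution} as a classical fact and gives no proof of its own, so there is nothing to compare against; your argument (Fubini--Tonelli for \eqref{eqFourierTransformConvolution1}, then either Fourier inversion applied to the pair $\widehat{f},\widehat{g}\in L^1$ or the direct computation against the absolutely convergent representation of $g$ for \eqref{eqFourierTransformConvolution2}) is the standard textbook proof, and you correctly identify the one genuinely delicate point, namely that the second identity requires the $L^1$ inversion theorem (or an equivalent injectivity/duality argument) and the observation that $f\cdot g\in L^1$ because $\widehat{g}\in L^1$ forces $g$ to agree a.e.\ with a bounded continuous function.
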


The following identities, which follow from the definitions above, will also be useful.

\begin{proposition}\label{propositionFourierTransform}
Let $f \in L^2 (\mathbb{R}^d)$. Then the following identities are valid
\begin{equation}
\left(\overline{f} \right)^{\widehat{}} = \overline{\mathcal{I} \widehat{f}} , \hspace{0.5 cm} \left(T_x f \right)^{\widehat{}} = M_{-x} \widehat{f}, \hspace{0.5 cm} \left(M_{\omega} f \right)^{\widehat{}} = T_{\omega} \widehat{f}, \hspace{0.5 cm} \left(D_s f \right)^{\widehat{}} = D_{\frac{1}{s}} \widehat{f},
\label{eqFourierTransformProperties}
\end{equation}
for every $ x,\omega \in \mathbb{R}^d,$ and $ s>0$.

\end{proposition}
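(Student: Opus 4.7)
My plan is to verify each of the four identities first on a dense subspace of $L^2(\mathbb{R}^d)$ on which the Fourier integral converges absolutely, and then extend to all of $L^2(\mathbb{R}^d)$ by continuity. A convenient choice is $\mathcal{S}(\mathbb{R}^d)$ (or $L^1 \cap L^2$), since every operator appearing in the statement — complex conjugation, reflection $\mathcal{I}$, translation $T_x$, modulation $M_\omega$, dilation $D_s$, and the Fourier transform itself — is a bounded (indeed isometric) bijection of $L^2(\mathbb{R}^d)$, so the four identities, once established on the dense subspace, extend unambiguously.

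For $f \in \mathcal{S}(\mathbb{R}^d)$ each identity follows by a one-line manipulation of the defining integral \eqref{eqFourierTransform}: for the conjugation formula, pull the conjugation out of the integral and use $\overline{e^{-2\pi i x\cdot\omega}}=e^{2\pi i x\cdot\omega}$ to get $\widehat{f}(-\omega)$, which is $\mathcal{I}\widehat{f}(\omega)$; for the translation formula, substitute $z=y-x$ in $\int f(y-x)e^{-2\pi i y\cdot\omega}dy$ to factor out $e^{-2\pi i x\cdot\omega}=M_{-x}(\omega)$; for the modulation formula, combine the two exponentials into $e^{-2\pi i x\cdot(\xi-\omega)}$ and recognize $\widehat{f}(\xi-\omega)=T_\omega\widehat{f}(\xi)$; and for the dilation formula, substitute $y=s^{-1}x$, which contributes $s^d$ from the Jacobian, so after combining with the prefactor $s^{-d/2}$ one obtains $s^{d/2}\widehat{f}(s\omega)$, which equals $D_{1/s}\widehat{f}(\omega)$ by \eqref{eqDilationOperator}.

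To pass to arbitrary $f \in L^2(\mathbb{R}^d)$, pick a sequence $f_n \in \mathcal{S}(\mathbb{R}^d)$ with $f_n \to f$ in $L^2$. Each of the operators $\overline{(\cdot)}$, $T_x$, $M_\omega$, $D_s$ is a unitary bijection of $L^2(\mathbb{R}^d)$ (unitarity of $T_x$ and $M_\omega$ is stated right after \eqref{eqCommutationRelations}, unitarity of $D_s$ is stated in \eqref{eqDilationOperator}), and the Fourier transform is unitary on $L^2$ by Plancherel. Therefore both sides of each identity are continuous in $f$ with respect to the $L^2$ norm, and the identities obtained for $f_n$ pass to the limit.

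No step poses a real obstacle here; the only mildly delicate point is that the integral defining $\widehat{f}$ is only convergent in the $L^2$ sense, not pointwise, for general $f \in L^2(\mathbb{R}^d)$, so one must phrase everything via the $L^2$-extension described in the paragraph after \eqref{eqFourierTransform} rather than directly manipulating divergent integrals. The density-plus-continuity argument handles this cleanly, and the proposition follows.
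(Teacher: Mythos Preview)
Your proof is correct. The paper does not give a proof of this proposition at all; it simply states that the identities ``follow from the definitions above,'' so your direct computation on $\mathcal{S}(\mathbb{R}^d)$ followed by a density-plus-continuity extension to $L^2(\mathbb{R}^d)$ is exactly the standard verification the paper is tacitly invoking.
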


\subsection{Windowed transforms} \label{subsec:transforms}

In this work we shall consider two time-frequency (phase-space) windowed representations of a function $f \in L^2 (\mathbb{R}^d)$: the Wigner wave-packet transform and the short-time Fourier transform.

Let $f,g \in L^2 (\mathbb{R}^d)$. The cross-Wigner transform is given by:
\begin{equation}
W(f,g) (x, \omega):= \int_{\mathbb{R}^d} f \left(x + \frac{y}{2} \right) \overline{g \left(x - \frac{y}{2} \right)} e^{- 2 \pi i \omega \cdot y} dy.
\label{eqCrossWignerFunction}
\end{equation}
If $g \in L^2 (\mathbb{R}^d)$ is a fixed window then the Wigner wave-packet transform is defined by \cite{Gosson1,Nazaikiinskii}:
\begin{equation}
\begin{array}{c}
f \in L^2 (\mathbb{R}^d) \mapsto W_g f (x, \omega) := 2^{- d} W (f,g) \left(\frac{x}{2}, \frac{\omega}{2} \right) = \\
\\
= 2^{-d} \int_{\mathbb{R}^d} f \left( \frac{x+y}{2} \right) \overline{g \left( \frac{x-y}{2} \right)} e^{-  \pi i \omega \cdot y} dy.
\end{array}
\label{eqWavepacketTransform}
\end{equation}
Likewise, the short-time Fourier transform (STFT) of $f$ with respect to the window $g$ is defined by
\begin{equation}
V_g f(x, \omega) := \int_{\mathbb{R}^d} f(t) \overline{g(t-x)} e^{-2 \pi i \omega \cdot t} dt
= \langle f, M_{\omega} T_x g \rangle_{L^2 (\mathbb{R}^d)}.
\label{eq2.1}
\end{equation}
If $g \in \mathcal S^{(1)} (\mathbb{R}^d) \backslash \left\{0 \right\}$, then
$V_g $ restricts to a continuous mapping from $ \mathcal S^{(1)} (\mathbb{R}^d) $ to $ \mathcal S^{(1)} (\mathbb{R}^{2d})$,
cf. \cite{To11}.

The Wigner wave-packet transform and the STFT  are related by:
\begin{equation}
W_g f (x, \omega) = e^{ i \pi  x \cdot \omega} V_{\mathcal{I} g} f (x,  \omega).
\label{eqRelationSTFTCrossWignerFunction}
\end{equation}

We note in passing that the STFT of $f$ and the Wigner wave-packet transform with respect to the window $g$ can also be expressed in terms of the Fourier transforms $\widehat{f}, \widehat{g}$ as
\begin{equation}
V_g f(x, \omega) = e^{-2 \pi i x \cdot \omega} V_{\widehat{g}} \widehat{f} (\omega, -x)
\label{eq2.2}
\end{equation}
and
\begin{equation*}
W_g f (x, \omega) = W_{\widehat{g}} \widehat{f} (\omega, -x).
\label{eq2.2.1}
\end{equation*}
Formula \eqref{eq2.2} is also known as {\em the Fundamental Identity of Time-Frequency Analysis}, \cite{Cordero1, Grochenig}.

One of the most remarkable facts about the STFT and the Wigner wave-packet transform are the following orthogonality relations (Parseval's identity) and Moyal's formula, respectively. Proofs can be found in e.g. \cite{Cordero1, Grochenig}.

\begin{theorem}\label{theorem2.1}
{\bf (Orthogonality relations for STFT).} Let $f_1,f_2,g_1,g_2 \in L^2 (\mathbb{R}^d)$; then $V_{g_j } f_j \in L^2 (\mathbb{R}^{2d})$ for $j=1,2$, and
\begin{equation}
  \langle V_{g_1}f_1 , V_{g_2}f_2  \rangle_{L^2 (\mathbb{R}^{2d})} =
   \langle f_1,f_2  \rangle_{L^2 (\mathbb{R}^{d})}   \langle g_2,g_1  \rangle_{L^2 (\mathbb{R}^{d})}.
\label{eq2.3}
\end{equation}
\end{theorem}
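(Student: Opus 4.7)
The plan is to exploit the fact that, for fixed $x$, the STFT $V_g f(x,\omega)$ is (up to a translation of the window) the Fourier transform in the $t$-variable of the function $t \mapsto f(t)\overline{g(t-x)}$. Applied twice, together with the Plancherel formula for the Fourier transform, this reduces the claim to an elementary substitution.

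Concretely, first I would assume $f_1,f_2,g_1,g_2 \in \mathcal{S}(\mathbb{R}^d)$ (or Gelfand--Shilov) so that every manipulation below is obviously justified; the general case is then obtained by density, using the continuity bound $|V_g f(x,\omega)| \le \|f\|_{L^2}\|g\|_{L^2}$ (a direct consequence of Cauchy--Schwarz applied to \eqref{eq2.1}) and the fact that $V_{g_j} f_j \in L^2(\mathbb{R}^{2d})$ will itself fall out of the computation by specialising to $f_1=f_2$, $g_1=g_2$.

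Next, for each fixed $x$ write
\begin{equation*}
V_{g_j} f_j(x,\omega) = \widehat{F_{j,x}}(\omega), \qquad F_{j,x}(t) := f_j(t)\,\overline{g_j(t-x)},
\end{equation*}
and apply Plancherel's theorem in the $\omega$-variable:
\begin{equation*}
\int_{\mathbb{R}^d} V_{g_1}f_1(x,\omega)\,\overline{V_{g_2}f_2(x,\omega)}\,d\omega
= \int_{\mathbb{R}^d} f_1(t)\,\overline{f_2(t)}\;\overline{g_1(t-x)}\,g_2(t-x)\,dt.
\end{equation*}
Then I would integrate in $x$, invoke Fubini (easy in the Schwartz class), and perform the change of variables $s = t-x$ inside the inner integral; the $t$- and $s$-integrals decouple, yielding
\begin{equation*}
\langle V_{g_1}f_1, V_{g_2}f_2 \rangle_{L^2(\mathbb{R}^{2d})}
= \Bigl(\int f_1(t)\overline{f_2(t)}\,dt\Bigr)\Bigl(\int g_2(s)\overline{g_1(s)}\,ds\Bigr)
= \langle f_1,f_2\rangle_{L^2}\,\langle g_2,g_1\rangle_{L^2}.
\end{equation*}

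The only real subtlety is the density/extension step: once the identity is known for Schwartz data, choosing approximating sequences $f_j^{(n)} \to f_j$ and $g_j^{(n)} \to g_j$ in $L^2$, one uses the elementary bound above to control the left-hand side and pass to the limit; the diagonal case gives $\|V_g f\|_{L^2(\mathbb{R}^{2d})} = \|f\|_{L^2}\|g\|_{L^2}$, which in particular ensures $V_{g_j}f_j \in L^2(\mathbb{R}^{2d})$ so that the inner product on the left is well defined. I do not anticipate any real obstacle beyond this standard approximation argument; the heart of the proof is just one application of Plancherel followed by a linear change of variables.
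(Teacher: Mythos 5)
The paper does not prove this theorem at all: it quotes it as a known result and refers the reader to Gr\"ochenig and Cordero--Rodino, so there is no internal proof to compare against. Your argument is exactly the standard textbook proof found in those references (Fourier transform in $t$ for fixed $x$, Plancherel in $\omega$, Fubini, and the substitution $s=t-x$), and it is correct. The one point worth tightening in the extension step: the pointwise bound $|V_gf(x,\omega)|\le\|f\|_{L^2}\|g\|_{L^2}$ by itself does not control the $L^2(\mathbb{R}^{2d})$ inner product; what makes the limit passage work is the norm identity $\|V_gf\|_{L^2(\mathbb{R}^{2d})}=\|f\|_{L^2}\|g\|_{L^2}$ established first for Schwartz data, which together with bilinearity shows $V_{g^{(n)}}f^{(n)}$ is Cauchy in $L^2(\mathbb{R}^{2d})$, while the pointwise bound serves only to identify the limit as $V_gf$. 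You mention both ingredients, so the sketch is complete, but their roles should be stated in that order.
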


\begin{theorem}\label{theoremMoyalsFormula}
{\bf (Moyal's identity).} For  $f_1,f_2,g_1,g_2 \in L^2 (\mathbb{R}^d)$,
\begin{equation}
\begin{array}{c}
 \langle W_{g_1} f_1, W_{g_2} f_2  \rangle_{L^2 (\mathbb{R}^{2d})} =
  \langle W(f_1,g_1), W(f_2,g_2)  \rangle_{L^2 (\mathbb{R}^{2d})} =\\
 \\
 =  \langle f_1,f_2  \rangle_{L^2 (\mathbb{R}^{d})}  \langle g_2,g_1  \rangle_{L^2 (\mathbb{R}^{d})}.
\end{array}
\label{eqMoyalsIdentity}
\end{equation}
\end{theorem}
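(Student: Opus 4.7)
The plan is to reduce Moyal's identity to the STFT orthogonality relations \eqref{eq2.3} of Theorem \ref{theorem2.1}, which is already at hand. The bridge between the two statements is the identity \eqref{eqRelationSTFTCrossWignerFunction} linking the Wigner wave-packet transform to the STFT via a unimodular phase factor. The whole argument amounts to two elementary manipulations of integrals plus one appeal to Theorem \ref{theorem2.1}, so I do not anticipate any serious obstacle; the only things to watch are the Jacobian of a dilation and the fact that the relevant phases cancel.

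First I would establish the middle equality
$\langle W_{g_1} f_1, W_{g_2} f_2 \rangle_{L^2(\mathbb{R}^{2d})} = \langle W(f_1,g_1), W(f_2,g_2) \rangle_{L^2(\mathbb{R}^{2d})}$
by a direct change of variables. By definition \eqref{eqWavepacketTransform},
$W_{g_j} f_j(x,\omega) = 2^{-d} W(f_j,g_j)(x/2,\omega/2)$, so the integrand on the left carries a factor $2^{-2d}$; substituting $u = x/2$, $v = \omega/2$ produces a Jacobian $2^{2d}$ which cancels this factor exactly and yields the right-hand side.

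Next I would prove the outer equality. Using \eqref{eqRelationSTFTCrossWignerFunction},
$W_{g_j} f_j(x,\omega) = e^{i\pi x\cdot\omega}\, V_{\mathcal{I} g_j} f_j(x,\omega)$,
so the two phases $e^{i\pi x\cdot\omega}$ and $\overline{e^{i\pi x\cdot\omega}}$ cancel pointwise in the product
$W_{g_1} f_1 \cdot \overline{W_{g_2} f_2}$. Hence
$$
\langle W_{g_1} f_1, W_{g_2} f_2 \rangle_{L^2(\mathbb{R}^{2d})} = \langle V_{\mathcal{I} g_1} f_1, V_{\mathcal{I} g_2} f_2 \rangle_{L^2(\mathbb{R}^{2d})},
$$
and Theorem \ref{theorem2.1} applied to the windows $\mathcal{I} g_j$ gives
$\langle f_1, f_2\rangle_{L^2} \langle \mathcal{I} g_2, \mathcal{I} g_1\rangle_{L^2}$. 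Finally, unitarity of the reflection operator $\mathcal{I}$ on $L^2(\mathbb{R}^d)$ (a one-line change of variables $y \mapsto -y$) shows $\langle \mathcal{I} g_2, \mathcal{I} g_1\rangle_{L^2} = \langle g_2, g_1\rangle_{L^2}$, which closes the proof.

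In short, the only \emph{content} used is Theorem \ref{theorem2.1}; the rest is bookkeeping with a dilation and a phase cancellation. If any step deserves care it is ensuring that the phase in \eqref{eqRelationSTFTCrossWignerFunction} is indeed the same function of $(x,\omega)$ for both factors, so that it really does drop out of the modulus squared integrand; this is the reason no additional hypothesis beyond $f_j,g_j\in L^2(\mathbb{R}^d)$ is needed.
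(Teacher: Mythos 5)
Your proof is correct: the dilation bookkeeping ($2^{-2d}$ against the Jacobian $2^{2d}$), the pointwise cancellation of the unimodular phase $e^{i\pi x\cdot\omega}$ from \eqref{eqRelationSTFTCrossWignerFunction}, and the final appeal to Theorem \ref{theorem2.1} with windows $\mathcal{I}g_j$ (plus unitarity of $\mathcal{I}$) all check out. The paper itself omits the proof and merely cites \cite{Cordero1,Grochenig}; your reduction of Moyal's identity to the STFT orthogonality relations is exactly the standard argument found there, so there is nothing to flag.
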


The STFT has the following behavior under time-frequency shifts, reflections and complex-conjugation.

\begin{proposition}\label{proposition2.2}
Let $f, g \in L^2 (\mathbb{R}^d)$. Then
\begin{equation}
V_g \left(T_u M_{\eta} f \right) (x, \omega) = e^{-2 i \pi u \cdot \omega} V_g f (x-u, \omega- \eta),
\label{eq2.4}
\end{equation}
\begin{equation*}
V_g \left(\mathcal{I} f \right) (x, \omega) = V_{\mathcal{I}g} f (-x, - \omega),
\label{eq2.5}
\end{equation*}
and
\begin{equation}
\overline{V_g f (x, \omega)} = V_{\overline{g}} \overline{f} (x, - \omega).
\label{eq2.5.1}
\end{equation}
\end{proposition}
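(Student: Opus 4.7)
The plan is to prove each of the three identities by a direct computation starting from the integral definition
$V_g f(x,\omega) = \int_{\mathbb{R}^d} f(t)\,\overline{g(t-x)}\,e^{-2\pi i \omega\cdot t}\,dt$
and performing an appropriate change of variable in the integral. No machinery beyond Fubini-compatible substitutions is needed, since the integrals converge absolutely for $f,g\in L^2(\mathbb{R}^d)$ once we interpret them via the $L^2$ pairing $\langle f, M_\omega T_x g\rangle_{L^2}$; the computations below can alternatively be read off from this inner-product form together with the commutation relation \eqref{eqCommutationRelations} and the unitarity of $T_u, M_\eta, \mathcal{I}$ on $L^2(\mathbb{R}^d)$.

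For the first identity I substitute $s=t-u$ in
$V_g(T_u M_\eta f)(x,\omega)=\int f(t-u)\,e^{2\pi i \eta\cdot(t-u)}\,\overline{g(t-x)}\,e^{-2\pi i \omega\cdot t}\,dt,$
which produces the window evaluated at $s-(x-u)$ and yields the prefactor $e^{-2\pi i u\cdot\omega}$ together with a frequency-shifted exponential $e^{-2\pi i(\omega-\eta)\cdot s}$. Collecting terms gives precisely $e^{-2i\pi u\cdot\omega}V_gf(x-u,\omega-\eta)$. (The same result follows instantly from the inner product representation $\langle T_u M_\eta f, M_\omega T_x g\rangle = \langle f, M_{-\eta} T_{-u} M_\omega T_x g\rangle$ upon pushing the $T_{-u}$ past the $M_\omega$ via \eqref{eqCommutationRelations}; I would present whichever derivation reads more cleanly.)

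For the second identity I substitute $s=-t$ in the defining integral of $V_g(\mathcal{I}f)(x,\omega)$; the exponential becomes $e^{2\pi i \omega\cdot s}=e^{-2\pi i(-\omega)\cdot s}$, and $\overline{g(-s-x)}=\overline{(\mathcal{I}g)(s-(-x))}$, whence the right-hand side is $V_{\mathcal{I}g}f(-x,-\omega)$. For the third identity, conjugating the defining integral turns $f$ into $\overline{f}$, turns $\overline{g(t-x)}$ into $g(t-x)=\overline{\overline{g}(t-x)}$, and flips the sign of the exponent, which is exactly the integral representation of $V_{\overline{g}}\overline{f}(x,-\omega)$.

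I do not expect any genuine obstacle: all three statements reduce to one-line substitutions. The only place to be slightly careful is bookkeeping the sign conventions in \eqref{eq2.1} and the placement of the complex conjugation on the window, which is why I would double-check the prefactor $e^{-2i\pi u\cdot\omega}$ in (i) by deriving it both from the substitution and from the commutation relation \eqref{eqCommutationRelations}.
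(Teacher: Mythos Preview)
Your proposal is correct; each of the three identities is verified by exactly the kind of one-line substitution you describe, and your bookkeeping of the prefactor $e^{-2i\pi u\cdot\omega}$ and the sign conventions is accurate. The paper itself does not supply a proof of this proposition---it is stated as a standard fact (these identities appear, e.g., in Gr\"ochenig's textbook)---so your direct computation is both appropriate and in line with how such results are usually established.
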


Likewise, using \eqref{eqRelationSTFTCrossWignerFunction}, we have for the Wigner wave-packet transform:
\begin{proposition}\label{proposition2.2.1}
Let $f, g \in L^2 (\mathbb{R}^d)$. Then
\begin{equation*}
W_g \left(T_u M_{\eta} f \right) (x, \omega) = e^{ i \pi  (x \cdot \eta - u \cdot \eta - u \cdot \omega)}  W_g  f (x-u, \omega- \eta),
\label{eq2.4.1}
\end{equation*}
\begin{equation*}
W_g \left(\mathcal{I} f \right) (x, \omega) = W_{\mathcal{I} g} f (-x, - \omega),
\label{eq2.4.2}
\end{equation*}
and
\begin{equation}
\overline{W_g f (x, \omega)} = W_{\overline{g}}\overline{f}(x, - \omega).
\label{eq2.4.3}
\end{equation}
\end{proposition}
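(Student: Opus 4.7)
The plan is to use the bridge identity \eqref{eqRelationSTFTCrossWignerFunction}, namely $W_g f(x,\omega) = e^{i\pi x\cdot\omega} V_{\mathcal{I}g}f(x,\omega)$, to pull each claim back to the corresponding identity for the STFT from Proposition \ref{proposition2.2}, and then to translate the resulting phase factors. Since Proposition \ref{proposition2.2} is already established, the work amounts to bookkeeping of the unimodular prefactors that appear when switching between $V_{\mathcal{I}g}$ and $W_g$ on the two sides.

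For the first identity, I would start from
\[
W_g(T_u M_\eta f)(x,\omega) = e^{i\pi x\cdot\omega}\, V_{\mathcal{I}g}(T_u M_\eta f)(x,\omega),
\]
apply \eqref{eq2.4} to get a factor $e^{-2i\pi u\cdot\omega}$ and shift the arguments to $(x-u,\omega-\eta)$, and then invert the bridge identity in the form
\[
V_{\mathcal{I}g}f(x-u,\omega-\eta) = e^{-i\pi(x-u)\cdot(\omega-\eta)}\, W_g f(x-u,\omega-\eta).
\]
Collecting the three phase contributions $i\pi x\cdot\omega$, $-2i\pi u\cdot\omega$, $-i\pi(x-u)\cdot(\omega-\eta)$ and simplifying yields exactly $e^{i\pi(x\cdot\eta - u\cdot\eta - u\cdot\omega)}$, as claimed. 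This is the only step where one has to be slightly careful, since the cross terms in $-i\pi(x-u)\cdot(\omega-\eta)$ must cancel the spurious $\pi x\cdot\omega$ and $\pi u\cdot\omega$ pieces.

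For the reflection identity I would write $W_g(\mathcal{I}f)(x,\omega) = e^{i\pi x\cdot\omega}V_{\mathcal{I}g}(\mathcal{I}f)(x,\omega)$ and apply the reflection rule for the STFT (second equation in Proposition \ref{proposition2.2}) with window $\mathcal{I}g$; since $\mathcal{I}(\mathcal{I}g)=g$, this produces $V_g f(-x,-\omega)$, and inverting the bridge identity at the point $(-x,-\omega)$ gives $V_g f(-x,-\omega) = e^{-i\pi x\cdot\omega}W_{\mathcal{I}g}f(-x,-\omega)$. The two phases $e^{\pm i\pi x\cdot\omega}$ cancel. For the complex-conjugation identity, the same strategy works: $\overline{W_g f(x,\omega)} = e^{-i\pi x\cdot\omega}\overline{V_{\mathcal{I}g}f(x,\omega)}$; then \eqref{eq2.5.1} with window $\mathcal{I}g$ gives $\overline{V_{\mathcal{I}g}f(x,\omega)} = V_{\mathcal{I}\overline{g}}\overline{f}(x,-\omega)$ (using $\overline{\mathcal{I}g}=\mathcal{I}\overline{g}$), and then \eqref{eqRelationSTFTCrossWignerFunction} at the point $(x,-\omega)$ converts this into $e^{-i\pi x\cdot\omega}W_{\overline{g}}\overline{f}(x,-\omega)\cdot e^{i\pi x\cdot\omega}$, i.e.\ exactly $W_{\overline{g}}\overline{f}(x,-\omega)$ after cancelling the conjugate phase.

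The only mildly error-prone step is tracking the exponent in part (i); everything else is a direct substitution. No further analytical input beyond Proposition \ref{proposition2.2} and \eqref{eqRelationSTFTCrossWignerFunction} is needed, and the argument is valid on the full $L^2$-domain since both sides of the bridge identity make sense there.
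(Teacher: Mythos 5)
Your proof is correct and follows exactly the route the paper intends: the paper gives no explicit argument beyond the remark that the proposition follows ``using \eqref{eqRelationSTFTCrossWignerFunction}'' from Proposition \ref{proposition2.2}, and your careful phase bookkeeping (in particular the cancellation producing $e^{i\pi(x\cdot\eta-u\cdot\eta-u\cdot\omega)}$ in the first identity) fills in precisely those details. Nothing further is needed.
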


\begin{remark}
The cross-Wigner transform of $ f \in L^2 (\mathbb{R}^d)$ is defined in \cite{Gosson2, Gosson3} by means of the Grossmann-Royer operator
$$
\left(R  (x,\omega)f\right) (t) = e^{4\pi i\omega \cdot (t-x)} f(2x - t)
= e^{-4\pi i\omega \cdot  x}  \left(M_{2\omega} T_{2x}\mathcal{I} f\right) (t), \;\;\; x,\omega \in \mathbb{R}^d.
$$
Such operators originate from the problem of physical interpretation of the Wigner transform $W(f,f)$, \cite{Grossmann, Royer}.
The related Grossmann-Royer transform  $ R_g f (x,\omega) =  \langle R (x,\omega) f, g \rangle_{L^2 (\mathbb{R}^{d})} $ considered in \cite{Teofanov1}
is essentially the cross-Wigner transform: $W(f,g) (x,\omega)   =  2^d R_g f (x,\omega)$. We also mention that
$ W(f,g) (x,\omega)  =  \widehat{ A(f,g) } (x,\omega), $ where
$$
A (f,g) (x, \omega) = \int e^{-2\pi i \omega t}
f(t+ \frac{x}{2}) \overline{ g(t- \frac{x}{2})} dt, \;\;\; x,\omega \in \mathbb{R}^d.
$$
is the cross-ambiguity function of $f$ and $g$, see \cite{Gosson3}.
\end{remark}

\subsection{The Wigner transform}

Let $f \in L^2 (\mathbb{R}^d)$. The Wigner transform of $f$ is given by $Wf=W(f,f)$:
\begin{equation*}
f \mapsto Wf (x, \omega) := \int_{\mathbb{R}^d} f \left(x+ \frac{y}{2} \right) \overline{f \left(x- \frac{y}{2} \right)} e^{-2 i \pi \omega \cdot y} dy.
\label{eqWignerfunction1}
\end{equation*}
One calls $Wf$ the Wigner or Wigner-Ville function of $f$.

The Wigner function is akin to a joint probability density for position and momentum. Indeed, it is a real valued and normalized function (if $||f||_{L^2}=1$):
\begin{equation*}
\int_{\mathbb{R}^{2d}} Wf (x, \omega) dx d\omega = ||f||_{L^2}^2 =1,
\label{eqWignerfunction2}
\end{equation*}
and its marginal distributions are {\it bona fide} probability densities for position and momentum:
\begin{equation}
\begin{array}{l}
\int_{\mathbb{R}^d} Wf (x, \omega) d \omega = |f(x)|^2 \ge 0 , ~ \forall x \in \mathbb{R}^d\\
\\
\int_{\mathbb{R}^d} Wf (x, \omega) dx = |\widehat{f}(\omega)|^2 \ge 0, ~ \forall \omega \in \mathbb{R}^d,
\end{array}
\label{eqWignerfunction3}
\end{equation}
provided $f, \widehat{f} \in L^1 (\mathbb{R}^d) \cap L^2 (\mathbb{R}^d)$.

Moreover, one can compute all the probabilities according to the rules of quantum mechanics from the knowledge of the Wigner function. Finally, for any observable $A$ (a self-adjoint operator acting on the Hilbert space $L^2 (\mathbb{R}^d)$) with Weyl symbol $a (x, \omega)$, the expectation value is evaluated according to the following suggestive formula:
\begin{equation}
E(A) = \langle A f,f \rangle_{L^2 (\mathbb{R}^d)} = \int_{\mathbb{R}^{2d}} a(x, \omega) W f (x, \omega) d x d \omega.
\label{eqWignerfunction4}
\end{equation}
However, the Wigner function fails to be a full-fledged probability density, as it may take on negative values. These "negative probabilities" are a manifestation of quantum interference. Indeed, uncertainty principles preclude a simultaneous sharp localization of position and momentum, which is presupposed of a joint probability density. Non-negative Wigner functions are not excluded, but, as stated by Hudson's Theorem (see \cite{Hudson, Janssen, Toft2006}), they constitute a very restrictive class:

\begin{theorem}\label{HudsonTheorem}
{\bf (Hudson)} Let $f \in L^2 (\mathbb{R}^d)$. Then the corresponding Wigner function $Wf$ is everywhere non-negative if and only if $f$ is a generalized Gaussian.
\end{theorem}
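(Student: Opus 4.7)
The plan is to treat the two directions separately. For sufficiency, if $f(x) = e^{-\pi x \cdot A x + 2\pi b \cdot x + c}$ is a generalized Gaussian (with $A$ a complex symmetric $d \times d$ matrix whose real part is positive definite), I would plug directly into the definition of $Wf$ and evaluate the Gaussian integral in $y$; completing the square yields a strictly positive Gaussian on $\mathbb{R}^{2d}$, so $Wf \geq 0$.

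For necessity, assume $f\in L^2(\mathbb R^d)\setminus\{0\}$ with $Wf\ge 0$ and let $\phi_0(t)=2^{d/4}e^{-\pi|t|^2}$. The first key step is the \emph{convolution identity}
\begin{equation*}
|V_{\phi_0}f(x,\omega)|^2 \;=\; \int_{\mathbb{R}^{2d}} Wf(u,\eta)\, W\phi_0(u-x,\eta-\omega)\,du\,d\eta,
\end{equation*}
which follows from Moyal's identity (Theorem \ref{theoremMoyalsFormula}, applied to cross-Wigner and using the translation covariance, cf.\ Proposition \ref{proposition2.2.1}) and the fact that $W\phi_0(u,\eta)=2^d e^{-2\pi(|u|^2+|\eta|^2)}$ is a strictly positive, even Gaussian. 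Since both factors under the integral are non-negative, vanishing of the left-hand side at some point would force $Wf\equiv 0$ by strict positivity of $W\phi_0$, hence $f\equiv 0$, a contradiction. Therefore $V_{\phi_0}f$ is nowhere zero on $\mathbb{R}^{2d}$.

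The second key step is to recast this as a statement about entire functions. There is a standard identification $V_{\phi_0}f(x,-\omega)=e^{i\pi x\cdot\omega}e^{-\pi(|x|^2+|\omega|^2)/2}\,F(x+i\omega)$, where $F=Bf$ is the Bargmann transform of $f$, an entire function on $\mathbb{C}^d$. The non-vanishing of $V_{\phi_0}f$ translates to $F(z)\ne 0$ for every $z\in\mathbb C^d$. Since $\mathbb C^d$ is simply connected, we may write $F=e^G$ for a globally defined entire $G$. Moreover, the isometry $\int_{\mathbb C^d}|F(z)|^2e^{-\pi|z|^2}\,dm(z)=\|f\|_{L^2}^2<\infty$ forces $F$ to have order at most $2$ with finite type.

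The third step extracts the form of $G$. For $d=1$ this is immediate from Hadamard's factorization theorem: a zero-free entire function of order $\le 2$ is $e^{P(z)}$ with $\deg P\le 2$. For $d>1$, I would restrict $G$ to each complex line $\lambda\mapsto G(z_0+\lambda v)$, apply the $1$-dimensional result to conclude quadratic growth along every line, and then use polarization to globalize and obtain $G(z)=z\cdot Cz+\beta\cdot z+\gamma$ for a symmetric $C$, a vector $\beta$ and a scalar $\gamma$; the Fock-space integrability constraint pins down the admissible real parts of $C$. Inverting the Bargmann transform then yields $f$ as a generalized Gaussian of the asserted form. The most technical step is the multivariable extension of Hadamard's theorem in Step 3 (the one-variable case and the initial convolution identity are short), since one must bootstrap from line-by-line information using the entire-function growth bound inherited from the Fock norm; this is where I expect to spend the bulk of the work.
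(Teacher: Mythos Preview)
The paper does not actually prove Hudson's theorem: it is stated as a background result with references to \cite{Hudson, Janssen, Toft2006}, and no argument is supplied. So there is nothing in the paper to compare your proposal against.

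That said, your outline is the standard proof (essentially the Hudson--Janssen argument as presented, for instance, in Gr\"ochenig's book). The sufficiency direction is a direct Gaussian computation and is fine. For necessity, your three steps are the right ones: (i) the convolution identity $|V_{\phi_0}f|^2 = Wf * W\phi_0$ (valid because $\phi_0$ is real and even, so the reflection in the general spectrogram--Wigner relation disappears) together with strict positivity of $W\phi_0$ forces $V_{\phi_0}f$ to be zero-free; (ii) the Bargmann transform turns this into a zero-free entire function $F$ on $\mathbb{C}^d$, hence $F=e^G$ with $G$ entire; (iii) the Fock-space bound $|F(z)|\le \|f\|_{L^2}\,e^{\pi|z|^2/2}$ gives order $\le 2$, so Hadamard (in one variable, then line-by-line for $d>1$) makes $G$ a polynomial of degree $\le 2$.

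Two small points worth tightening. First, in step (iii) for $d>1$, once you know that $G$ restricted to every complex line is a polynomial of degree $\le 2$ in the line parameter, you should state explicitly the lemma that an entire function on $\mathbb{C}^d$ which is a polynomial of degree $\le 2$ on every line is globally a polynomial of degree $\le 2$; this is elementary (differentiate three times in any direction and use the identity theorem), but it is the place where a reader might want a sentence. Second, the final ``Fock-space integrability pins down the admissible real parts of $C$'' needs one explicit inequality: you must check that $\int_{\mathbb{C}^d} |e^{z\cdot Cz+\beta\cdot z}|^2 e^{-\pi|z|^2}\,dm(z)<\infty$ forces $\pi I - (C+\overline{C}) > 0$ (in the sense of quadratic forms on $\mathbb{R}^{2d}$), which is exactly the condition guaranteeing that the inverse Bargmann transform lands in $L^2(\mathbb{R}^d)$ and is a generalized Gaussian. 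With those two details spelled out, the argument is complete.
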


It is also noteworthy that Wigner functions are square-integrable (cf.(\ref{eqMoyalsIdentity})):
\begin{equation*}
||Wf||_{L^2(\mathbb{R}^{2d})}^2 = \int_{\mathbb{R}^{2d}} |Wf (x, \omega)|^2 dx d \omega = ||f||_{L^2 (\mathbb{R}^d)}^4.
\label{eqWignerfunction5}
\end{equation*}
One calls $||Wf||_{L^2(\mathbb{R}^{2d})}^2$ the purity of the state $f$.

Before we proceed let us state the following useful identity which is easily obtained from \eqref{eqCrossWignerFunction} by the Fourier inversion formula:
\begin{equation}
f \left( x +  \frac{y}{2}\right) \overline{f \left( x -  \frac{y}{2}\right)} = \int_{\mathbb{R}^d} Wf (x, \omega) e^{2 i \pi \omega \cdot y} d \omega .
\label{eqWignerfunction6}
\end{equation}

\subsection{The products} \label{subs:products}

In this subsection we  introduce two products related to time-frequency representations.
We first introduce  the Feichtinger algebra $ S_0 (\mathbb{R}^{d})$ as the natural
framework for the definition of the Gabor product, and consider different extensions in Subsection \ref{subsec:extension}.

\begin{definition}\label{def:Feichtingeralgebra}
Let there be given $g \in \mathcal S^{(1)} (\mathbb{R}^{d}) \backslash \left\{0 \right\}$.
The Feichtinger algebra $ S_0 (\mathbb{R}^{d})$ consists of all $f\in L^2 (\mathbb{R}^{d}) $
such that
\begin{equation}
\| V_g f \|_{L^1 (\mathbb{R}^{2d})} < \infty.
\label{eq:Feichtingeralgebra}
\end{equation}
\end{definition}

It can be proved that  $ S_0 (\mathbb{R}^{d})$ is a Banach space with the norm
\begin{equation}
\| f \|_{S_0, g } = \| V_g f \|_{L^1 (\mathbb{R}^{2d})}.
\label{eq:Feichtingeralgebranorm}
\end{equation}
The Feichtinger algebra enjoys numerous properties useful for applications in time-frequency analysis, cf.
the recent survey \cite{Jakobsen} and references given there. Here we focus on the most important properties of $ S_0 (\mathbb{R}^{d})$
which will be  used in the sequel.

\begin{lemma} \label{lm:Szero}
{\bf (Basic properties of $ S_0 $ )}  Let  $ g \in  S_0 (\mathbb{R}^{d})\backslash \left\{0 \right\}$ be given.
Then the following is true:
\begin{itemize}
\item[i)] $ S_0 (\mathbb{R}^{d})$ is the smallest Banach space invariant under translations and modulations, and it is continuously embedded as a dense subspace in
$ L^2 (\mathbb{R}^{d}) $.
\item[ii)] If $f \in  S_0 (\mathbb{R}^{d})$, then $f$ is continuous, and $f, \hat f \in L^1 (\mathbb{R}^{d})$.
\item[iii)] If $f \in  S_0 (\mathbb{R}^{d})$, then its complex conjugation $\overline{f}$ , reflection $\mathcal{I} f$, and involution  $f^\dagger$ are also in  $S_0 (\mathbb{R}^{d})$ and
$$
\| f \|_{S_0, g } = \| \overline{f} \|_{S_0, g } = \| \mathcal{I} f \|_{S_0, g } = \| f^\dagger \|_{S_0, g } .
$$
\item[iv)] If $f \in  S_0 (\mathbb{R}^{d})$, then $ \hat f \in S_0 (\mathbb{R}^{d})$ and
$
\| f \|_{S_0, g } = \| \hat{f}\|_{S_0 , g }.
$
Thus, the Fourier transform is an isometry of $ S_0 (\mathbb{R}^{d})$.
\item[v)] For $f \in  L^2 (\mathbb{R}^{d})$, we have that $f \in  S_0 (\mathbb{R}^{d})$ if and only if
$ V_h f \in  S_0  (\mathbb{R}^{2d}) $ for some (and then all) $ h \in  S_0 (\mathbb{R}^{d})\backslash \left\{0 \right\}$,
and each such $h$ defines an equivalent norm on $ S_0 (\mathbb{R}^{d})$ via
$ \displaystyle  \| f \|_{S_0, h } = \| V_h f \|_{L^1 (\mathbb{R}^{2d})}. $
\item[vi)] $f \in  S_0 (\mathbb{R}^{d})$ if and only if
$  V_f f \in  S_0  (\mathbb{R}^{2d}).$
\item[vii)] $ S_0 (\mathbb{R}^{d})$ is closed under pointwise multiplication and convolution: if $f_1, f_2 \in  S_0 (\mathbb{R}^{d})$, then $f_1 \cdot f_2, f_1 \star f_2 \in  S_0 (\mathbb{R}^{d})$. Specifically, for $ g \in  S_0 (\mathbb{R}^{d})\backslash \left\{0 \right\}$,
\begin{equation}
\begin{array}{c}
\| f_1 \cdot f_2 \|_{S_0, g } \leq \| \hat g \|^{-1} _{\infty} \| f_1  \|_{S_0, g } \|  f_2 \|_{S_0, g },
\\
\| f_1 \star f_2 \|_{S_0, g } \leq \| g \|^{-1} _{\infty} \| f_1  \|_{S_0, g } \|  f_2 \|_{S_0, g }.
\end{array}
\label{eq:productconvolution}
\end{equation}
\item[viii)] The tensor product  $ \otimes :  S_0 (\mathbb{R}^{d}) \times  S_0 (\mathbb{R}^{d}) \rightarrow  S_0 (\mathbb{R}^{2d}) $    is a bounded bilinear operator.
\item[ix)]  $ S_0 (\mathbb{R}^{d}) $ enjoys the projective tensor factorization property:
$ S_0 (\mathbb{R}^{n+m}) = S_0 (\mathbb{R}^{n}) \hat{\otimes} S_0 (\mathbb{R}^{m})$ i.e.
the tensor products of function from $ S_0 (\mathbb{R}^{n}) $ and $ S_0 (\mathbb{R}^{m})$ respectively, given by
$$ f^1 \otimes f^2(x, y) = f^1 (x) \cdot f^2(y), \;\;\; x \in \mathbb{R}^{n} ,y \in \mathbb{R}^{m},
$$
can be used to build any $ f \in S_0 (\mathbb{R}^{n+m})$ by forming absolutely convergent sums, i.e. as
$$
f =
\sum_{k =1}  ^{\infty} f^1 _k \otimes f^2 _k (x, y) \;\;\; \text{with} \;\;\;
\sum_{k =1}  ^{\infty} \| f^1 _k \|_{S_0 (\mathbb{R}^{n}), g^1} \| f^2 _k \|_{S_0 (\mathbb{R}^{m}), g^2} < \infty,
$$
where $ g^1 \in S_0 (\mathbb{R}^{n}) \backslash \left\{0 \right\}$,  $ g^2 \in S_0 (\mathbb{R}^{m}) \backslash \left\{0 \right\}$,
and the corresponding infimum norm provides an equivalent norm.
\end{itemize}
\end{lemma}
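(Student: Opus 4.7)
My plan is to prove the nine properties in sequence, relying on three tools derived from the STFT: the inversion formula $f = \|\gamma\|_{L^2}^{-2} \int V_\gamma f(y,\eta)\, M_\eta T_y \gamma \, dy\,d\eta$, valid for any nonzero $\gamma \in S_0$; the fundamental identity \eqref{eq2.2} relating $V_g f$ to $V_{\hat g}\hat f$; and the change-of-window formula expressing $V_g f$ as a twisted phase-space convolution of $V_h f$ with $V_\gamma g$ (up to a factor $\langle\gamma,h\rangle^{-1}$), obtained by inserting the inversion formula into the definition of $V_g f$. Each of these is a direct consequence of \eqref{eq2.1} and Theorem \ref{theorem2.1}.

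For (i), the interpolation $\|V_g f\|_{L^2}^2 \le \|V_g f\|_{L^1}\|V_g f\|_{L^\infty}$ combined with \eqref{eq1AParseval} and the trivial bound $\|V_g f\|_{L^\infty} \le \|f\|_{L^2}\|g\|_{L^2}$ yields the continuous embedding $S_0 \hookrightarrow L^2$; density is immediate since Gaussians belong to $\mathcal S^{(1)} \subset S_0$ and are dense in $L^2$; minimality is the classical Feichtinger argument, in which any translation- and modulation-invariant Banach space $B$ containing a nonzero $g$ satisfies $\|f\|_B \precsim \|g\|_B \|V_g f\|_{L^1}$ by realizing $f$ as the Bochner integral above. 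Items (ii) and (iv) follow from the same inversion formula with a Gaussian window (so that $g, \hat g \in L^1$): continuity of $f$ is inherited from each $M_\eta T_y g$, Fubini yields $f \in L^1$, and the pointwise equality $|V_g f(x,\omega)| = |V_{\hat g}\hat f(\omega,-x)|$ from \eqref{eq2.2} gives both $\hat f \in L^1$ and the Fourier-isometry statement of (iv). Item (iii) reduces to computing $V_g$ of $\bar f$, $\mathcal I f$, $f^\dagger$ via \eqref{eq2.5.1} and Proposition \ref{proposition2.2}, each yielding $V_{g'} f$ for a transformed window $g'$; equality of $S_0$-norms then follows by combining with the window-independence proved in (v).

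Property (v) is the technical heart: Young's inequality applied to the change-of-window formula yields $\|V_g f\|_{L^1} \le |\langle \gamma, h\rangle|^{-1}\|V_h f\|_{L^1}\|V_\gamma g\|_{L^1}$, and (vi) is the specialization $h=f$. For the algebra property (vii), the crucial identity
\begin{equation*}
V_{g_1 g_2}(f_1 f_2)(x,\omega) = \bigl(V_{g_1}f_1(x,\cdot) \star V_{g_2}f_2(x,\cdot)\bigr)(\omega)
\end{equation*}
follows from the observation $V_g f(x,\omega) = \widehat{f \cdot T_x \bar g}(\omega)$ and Theorem \ref{theoremFourierTransformConvolution}; integrating the Young bound over $x$ gives $\|f_1 f_2\|_{S_0, g_1 g_2} \le \|f_1\|_{S_0,g_1}\|f_2\|_{S_0,g_2}$. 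The constant $\|\hat g\|_\infty^{-1}$ in the stated bound emerges when one normalizes back to the common window $g$ using (v) and estimates $|\langle g_1 g_2, g\rangle|$ from below through $\hat g$. The convolution bound is the Fourier dual via (iv). Item (viii) is immediate from the factorization $V_{g_1\otimes g_2}(f_1\otimes f_2)(x_1,x_2,\omega_1,\omega_2) = V_{g_1}f_1(x_1,\omega_1)\, V_{g_2}f_2(x_2,\omega_2)$ followed by Fubini.

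The deepest item, (ix), cannot be settled by any formal identity and is where I anticipate the main obstacle. One must invoke an atomic (Gabor) decomposition of $S_0(\mathbb{R}^{n+m})$ using a tight Gabor frame generated by the tensor-product window $g^1 \otimes g^2$; every $f \in S_0(\mathbb{R}^{n+m})$ then admits an absolutely convergent expansion into atoms $M_{(\omega^1_k,\omega^2_k)}T_{(x^1_j,x^2_j)}(g^1\otimes g^2) = (M_{\omega^1_k}T_{x^1_j}g^1) \otimes (M_{\omega^2_k}T_{x^2_j}g^2)$, which is precisely the required projective tensor factorization with $\|f\|_{S_0(\mathbb{R}^{n+m})} \asymp \sum_k \|f^1_k\|_{S_0, g^1}\|f^2_k\|_{S_0, g^2}$. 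A secondary obstacle is the careful bookkeeping required to extract the precise constants $\|\hat g\|_\infty^{-1}$ and $\|g\|_\infty^{-1}$ in (vii) from the change-of-window normalization.
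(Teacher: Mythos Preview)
The paper does not prove this lemma at all; it simply cites the literature for each item (Feichtinger's original paper, Jakobsen's survey, Gr\"ochenig's book, etc.). Your proposal is therefore far more substantive than what the paper provides, and the overall architecture---inversion formula, fundamental identity \eqref{eq2.2}, change-of-window via twisted convolution, and atomic decomposition for (ix)---is exactly the standard route taken in those references.

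That said, there is a genuine gap in your treatment of (v). The claim is not merely that different windows yield equivalent $L^1$-norms (which your change-of-window formula does establish); it is the stronger assertion that $V_h f$ itself belongs to $S_0(\mathbb{R}^{2d})$, i.e.\ that $V_G(V_h f) \in L^1(\mathbb{R}^{4d})$ for some window $G$ on $\mathbb{R}^{2d}$. Your sketch never computes an STFT of $V_h f$. The standard argument takes $G = V_h h$ and uses the ``magic formula'' (see e.g.\ Gr\"ochenig, Lemma~11.3.3, or Jakobsen, Lemma~5.4) to the effect that $|V_{V_h h}(V_h f)(z,\zeta)|$ factors as $|V_h f(z')|\cdot|V_h h(\zeta)|$ up to a measure-preserving change of variables, whence $\|V_h f\|_{S_0(\mathbb{R}^{2d})} = \|V_h f\|_{L^1(\mathbb{R}^{2d})}\,\|V_h h\|_{L^1(\mathbb{R}^{2d})}$. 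Without this step you have proved window-independence of the $S_0$-norm but not the characterization stated in (v), and (vi) then does not follow as a specialization.

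A smaller issue: in (iii) you invoke (v) to pass between windows, but (v) gives only \emph{equivalence} of norms, whereas (iii) asserts exact \emph{equality} $\|f\|_{S_0,g} = \|\bar f\|_{S_0,g}$ for the \emph{same} $g$. The correct observation is that each of the maps $f\mapsto\bar f$, $f\mapsto\mathcal I f$, $f\mapsto f^\dagger$ transforms $|V_g f(x,\omega)|$ into $|V_{g'} f|$ evaluated at a point obtained from $(x,\omega)$ by a sign change, \emph{and simultaneously} $g$ undergoes the same symmetry as $f$; one then checks directly (without (v)) that the $L^1$-integral is unchanged.
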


\begin{proof} We omit the proof, and refer to the following sources.
{\em i) -- iv)} can be found in \cite{Feichtinger1}, see also \cite[Theorem 3.2.3]{FeiZimm} and \cite[Lemma 14]{FJ2020}.
For {\em v) -- vi)}  we refer to  \cite[Corollary 5.5]{Jakobsen} and \cite[Lemma 3.2.5]{FeiZimm}.
{\em vii)} is  \cite[Proposition 12.1.7]{Grochenig},  \cite[Corollary 4.14]{Jakobsen}, or
\cite[Lemma 15]{FJ2020}, see also  \cite[Sect. 7.1.3]{Gosson3}.
Finally, {\em viii)} and {\em ix)} are proved in \cite[Theorem 7]{Feichtinger1}, see also \cite[Section 9]{Jakobsen}.
\end{proof}

Since different  windows $g \in S_0 (\mathbb{R}^{d}) \backslash \left\{0 \right\}$ give rise to an equivalent norm in
$ S_0 (\mathbb{R}^{d})$ by Lemma \ref{lm:Szero} {\em v)}, from now on instead of $ \| f \|_{S_0, g } $ we shall write
$ \| f \|_{S_0} $.

\begin{definition}\label{definitionGaborProduct}
Let $F,H \in S_0 (\mathbb{R}^{2d}) $ and $g \in S_0 (\mathbb{R}^{d}) \backslash \left\{0 \right\}$.
The {\bf Gabor product} $\natural_{g}$ is defined as
\begin{equation}
\begin{array}{c}
\left(F \natural_{g} H \right) (x, \omega) =
\\
 = \int_{\mathbb{R}^{3d} } \overline{\widehat{g} \left(\omega^{\prime} + \omega^{\prime \prime} -\omega \right)} F(x^{\prime}, \omega^{\prime}) H (x^{\prime}, \omega^{\prime \prime}) e^{2i \pi  x \cdot (\omega^{\prime} + \omega^{\prime \prime} - \omega)} dx^{\prime} d \omega^{\prime} d \omega^{\prime \prime}.
\end{array}
\label{eqWindowedProduct1}
\end{equation}
\end{definition}

\begin{lemma}\label{Lemma1} Let  $F,H \in S_0 (\mathbb{R}^{2d})$ and $g \in S_0 (\mathbb{R}^{d}) \backslash \left\{0 \right\}$.
The Gabor product $ F \natural_{g} H$ given by \eqref{eqWindowedProduct1} is a well-defined product and $\natural_g : S_0 (\mathbb{R}^{2d})\times S_0 (\mathbb{R}^{2d}) \to S_0 (\mathbb{R}^{2d})$ is a continuous mapping.
\end{lemma}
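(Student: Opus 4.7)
The plan is to rewrite the defining integral \eqref{eqWindowedProduct1} as the short-time Fourier transform, with window $g$, of a single-variable auxiliary function, and then conclude via the characterization of $S_0$ in terms of the STFT (Lemma \ref{lm:Szero}(v)).

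First I would perform the $x'$-integration and then change variables $\zeta = \omega' + \omega''$ with $\omega'' = \zeta - \omega'$. Since the exponential $e^{2i\pi x \cdot (\omega'+\omega''-\omega)}$ and the window factor $\overline{\hat g(\omega'+\omega''-\omega)}$ depend on $(\omega',\omega'')$ only through $\zeta$, \eqref{eqWindowedProduct1} should take the form
\begin{equation*}
(F \natural_g H)(x,\omega) = e^{-2 i \pi x \cdot \omega}\int_{\mathbb{R}^d} \tilde K(\zeta)\, \overline{\hat g(\zeta-\omega)}\, e^{2 i \pi x \cdot \zeta}\, d\zeta,
\end{equation*}
where
\begin{equation*}
\tilde K(\zeta) := \int_{\mathbb{R}^{2d}} F(x',\omega')\, H(x',\zeta-\omega')\, dx'\, d\omega'.
\end{equation*}
The right-hand side of the first display is precisely the expression produced by the Fundamental Identity of Time-Frequency Analysis \eqref{eq2.2} applied to $V_g (\mathcal F^{-1}\tilde K)(x,\omega)$, so this step should identify $F \natural_g H = V_g (\mathcal F^{-1}\tilde K)$.

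Next I would establish the bilinear bound $\|\tilde K\|_{S_0} \precsim \|F\|_{S_0}\|H\|_{S_0}$ by appealing to the projective tensor factorization in Lemma \ref{lm:Szero}(ix). Writing $F = \sum_k f_k \otimes F_k$ and $H = \sum_l h_l \otimes H_l$ with the sums $\sum_k \|f_k\|_{S_0}\|F_k\|_{S_0}$ and $\sum_l \|h_l\|_{S_0}\|H_l\|_{S_0}$ comparable to $\|F\|_{S_0}$ and $\|H\|_{S_0}$, the integrations defining $\tilde K$ decouple term-by-term and give
\begin{equation*}
\tilde K(\zeta) = \sum_{k,l} \Big(\int_{\mathbb{R}^d} f_k(x')\, h_l(x')\, dx'\Big)\, (F_k \star H_l)(\zeta).
\end{equation*}
The scalar coefficient is estimated by Cauchy-Schwarz and the continuous embedding $S_0 \hookrightarrow L^2$ of Lemma \ref{lm:Szero}(i), giving $|\int f_k h_l\, dx'| \precsim \|f_k\|_{S_0}\|h_l\|_{S_0}$, while the convolution factor satisfies $\|F_k \star H_l\|_{S_0} \precsim \|F_k\|_{S_0}\|H_l\|_{S_0}$ by Lemma \ref{lm:Szero}(vii). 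Summing over $k,l$ and invoking the tensor-factorization norm equivalence yields the desired bound.

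From $\tilde K \in S_0(\mathbb{R}^d)$ one obtains $\mathcal F^{-1}\tilde K \in S_0(\mathbb{R}^d)$ by Lemma \ref{lm:Szero}(iv), and then $V_g (\mathcal F^{-1}\tilde K) \in S_0(\mathbb{R}^{2d})$ with $\|V_g(\mathcal F^{-1}\tilde K)\|_{S_0} \precsim \|\tilde K\|_{S_0}$ by Lemma \ref{lm:Szero}(v), which combined with the estimate above delivers the full continuity claim. The main technical subtlety I anticipate is justifying the Fubini-type interchange underlying the reformulation for generic $F, H \in S_0(\mathbb{R}^{2d})$; the cleanest workaround is to perform the change of integration order on rank-one tensor factors, where Fubini is unproblematic, and then rely on the absolute convergence of the tensor decomposition — already controlled by the same norm sums used in Step~2 — to pass to arbitrary $F, H \in S_0$.
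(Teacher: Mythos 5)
Your proposal is correct and follows essentially the same route as the paper's proof: the same change of variables reduces $F\natural_g H$ to $V_g$ applied to the inverse Fourier transform of the auxiliary function $\tilde K=\widehat{A}_{F,H}$, membership of $\tilde K$ in $S_0(\mathbb{R}^d)$ is obtained via the projective tensor factorization of Lemma \ref{lm:Szero}(ix) together with the product/convolution stability of (vii), and the conclusion follows from (iv) and (v). The only cosmetic differences are that you bound the scalar coefficient by Cauchy--Schwarz and $S_0\hookrightarrow L^2$ where the paper uses $\|f_1\cdot h_1\|_{L^1}\le\|f_1\cdot h_1\|_{S_0}$, and that you make the Fubini justification on rank-one tensors explicit, which the paper leaves implicit.
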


\begin{proof}
We first rewrite \eqref{eqWindowedProduct1} in a more convenient form. The change of variables $\omega^{\prime \prime} \to \xi = \omega^{\prime}+ \omega^{\prime \prime}$ gives
\begin{equation*}
\begin{array}{c}
\left(F \natural_{g} H \right) (x, \omega) = \\
\\
= e^{- 2i \pi \omega \cdot x} \int_{\mathbb{R}^{3d} } \overline{\widehat{g} \left(\xi-\omega \right)} e^{2 i\pi  x \cdot \xi} F(x^{\prime}, \omega^{\prime}) H (x^{\prime}, \xi - \omega^{\prime})  dx^{\prime} d \omega^{\prime} d \xi.
\end{array}
\label{eqWindowedProduct3}
\end{equation*}
Let
\begin{equation}
\begin{array}{c}
\widehat{A}_{F,H} (\xi) =
\int_{\mathbb{R}^{2d}} F(x^{\prime}, \omega^{\prime}) H(x^{\prime}, \xi- \omega^{\prime}) dx^{\prime} d \omega^{\prime}
= \langle F, \overline{ \mathcal{Z}_{\xi} H}  \rangle_{L^2 (\mathbb{R}^{2d})},
\end{array}
\label{eqWindowedProduct5}
\end{equation}
where the operator
\begin{equation}
\left(\mathcal{Z}_{\xi} H \right) (x, \omega) = H (x, \xi-\omega)
\label{eqWindowedProduct6}
\end{equation}
amounts to a reflection and a translation with respect to the second variable.
By Fubini's Theorem, we get
\begin{equation}
\begin{array}{c}
\left(F \natural_{g} H \right) (x, \omega)
= e^{- 2i \pi \omega \cdot x} \int_{\mathbb{R}^d } \overline{\widehat{g} \left(\xi-\omega \right)} e^{2i \pi  x \cdot \xi} \widehat{A}_{F,H} (\xi) d \xi = \\
\\
=e^{- 2i \pi \omega \cdot x}  V_{\widehat{g}} \widehat{A}_{F,H} (\omega, -x) = V_{g} A_{F,H} (x, \omega),
\end{array}
\label{eqWindowedProduct4}
\end{equation}
where we used (\ref{eq2.1}) and (\ref{eq2.2}).

Let us prove that $\widehat{A}_{F,H} \in S_0 (\mathbb{R}^{d}) $.

First assume that $F$ and $G$ are "simple tensors", $ F(x,\omega) = f_1 (x) f_2 (\omega),$ and
$ H(x,\omega) = h_1 (x) h_2 (\omega),$  with $f_1, f_2, h_1, h_2 \in  S_0 (\mathbb{R}^{d}) $. Then by Lemma \ref{lm:Szero} {\em vii)}
it follows that $ g_1 := f_1\cdot h_1 \in S_0 (\mathbb{R}^{d}) $, and
$ g_2  := f_2 \star h_2 \in S_0 (\mathbb{R}^{d}) $. Then \eqref{eqWindowedProduct5} becomes
$$
\widehat{A}_{F,H} (\xi) = \int_{\mathbb{R}^d } g_1 (x') dx' \cdot g_2 (\xi),
$$
so that
\begin{equation}
\begin{array}{c}
\| \widehat{A}_{F,H} (\xi) \|_{S_0}
= \|  (\int_{\mathbb{R}^d } g_1 (x') dx' ) g_2\|_{S_0}  \leq \\
\\
\leq \| g_1 \|_{L^1} \|  g_2\|_{S_0}  \leq \| g_1 \|_{S_0} \|  g_2\|_{S_0}  < \infty,
 \end{array}
\label{eq:hatAinSzero}
\end{equation}
where we used  Lemma \ref{lm:Szero} {\em ii)}.

Similar arguments, together with Lemma \ref{lm:Szero} {\em ix)} can be used to prove that
$\widehat{A}_{F,H} \in S_0 (\mathbb{R}^{d}) $ when $F$ and $H$ are represented as infinite sums of simple tensors instead.

Now $ A_{F,H} \in  S_0 (\mathbb{R}^{d})$ since $ S_0 (\mathbb{R}^{d}) $ is Fourier transform invariant by Lemma \ref{lm:Szero} {\em iv)}, and finally $ F \natural_{g} H \in S_0 (\mathbb{R}^{2d})$ follows from Lemma \ref{lm:Szero} {\em v)}.
The proof is finished.
\end{proof}

Next we prove the main result concerning the Gabor product. 

\begin{theorem}\label{theoremWindowedProduct1}
Let $f,h \in S_0 (\mathbb{R}^d)$ and let $g_1,g_2,g \in S_0 (\mathbb{R}^{d}) \backslash \left\{0 \right\}$.
Then the following identity holds
\begin{equation}
\langle g_2,g_1 \rangle_{L^2 (\mathbb{R}^d)} V_{g} (f \cdot h) =   V_{g_1} (f) \natural_{g} V_{\overline{g_2}} (h),
\label{eqWindowedProduct2}
\end{equation}
and $ V_{g} (f \cdot h)  \in S_0 (\mathbb{R}^{2d})$.
\end{theorem}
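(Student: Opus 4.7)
The plan is to leverage the reformulation of the Gabor product already established in the proof of Lemma \ref{Lemma1}: for $F, H \in S_0(\mathbb{R}^{2d})$ one has $F \natural_{g} H = V_{g} A_{F,H}$, where $A_{F,H} \in S_0(\mathbb{R}^d)$ is determined by
$$
\widehat{A}_{F,H}(\xi) = \int_{\mathbb{R}^{2d}} F(x',\omega') H(x',\xi-\omega')\, dx'\, d\omega'.
$$
Specialising to $F = V_{g_1} f$ and $H = V_{\overline{g_2}} h$ and invoking the linearity of $V_g$, the identity \eqref{eqWindowedProduct2} reduces to proving the pointwise relation
$$
A_{V_{g_1} f,\, V_{\overline{g_2}} h} = \langle g_2, g_1\rangle_{L^2(\mathbb{R}^d)}\, f\cdot h.
$$

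The core of the argument is the computation of the Fourier transform of this $A$. For fixed $x'$, the map $\omega'\mapsto V_{g_1} f(x',\omega')$ is the Fourier transform of $\phi_{x'}(t):= f(t)\,\overline{g_1(t-x')}$, and $\omega''\mapsto V_{\overline{g_2}} h(x',\omega'')$ is the Fourier transform of $\psi_{x'}(t):= h(t)\,g_2(t-x')$. The inner $\omega'$-integral in the definition of $\widehat{A}$ is therefore the convolution $(\widehat{\phi_{x'}} \star \widehat{\psi_{x'}})(\xi)$, which by Theorem \ref{theoremFourierTransformConvolution} equals the Fourier transform of $\phi_{x'}\cdot\psi_{x'}$. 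Applying Fubini to swap the residual $x'$- and $t$-integrals and substituting $u = t-x'$ in the $x'$-integration gives
$$
\widehat{A}_{V_{g_1} f,\, V_{\overline{g_2}} h}(\xi) = \left(\int_{\mathbb{R}^d} \overline{g_1(u)}\, g_2(u)\, du\right)\int_{\mathbb{R}^d} f(t)h(t)\, e^{-2\pi i \xi\cdot t}\, dt = \langle g_2, g_1\rangle_{L^2}\,\widehat{f\cdot h}(\xi),
$$
whence Fourier inversion yields $A_{V_{g_1} f,\, V_{\overline{g_2}} h} = \langle g_2, g_1\rangle\, f\cdot h$, and applying $V_g$ together with the reformulation above produces \eqref{eqWindowedProduct2}. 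The membership $V_g(f\cdot h)\in S_0(\mathbb{R}^{2d})$ follows from Lemma \ref{lm:Szero}\emph{vii)}, which gives $f\cdot h \in S_0(\mathbb{R}^d)$, combined with Lemma \ref{lm:Szero}\emph{v)}.

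The principal technical point requiring care is the legitimacy of the successive exchanges of integration and of the convolution theorem used in the $\omega'$-variable. This is not a serious obstacle: by Lemma \ref{lm:Szero}\emph{ii)} each of $f,h,g_1,g_2$ lies in $L^1 \cap L^2(\mathbb{R}^d)$, so Cauchy--Schwarz bounds $\|\phi_{x'}\|_{L^1}$ and $\|\psi_{x'}\|_{L^1}$ uniformly in $x'$, and a direct estimate
$$
\int_{\mathbb{R}^{2d}} |f(t)h(t)\,g_1(t-x')\,g_2(t-x')|\, dt\, dx' \leq \|f\|_{L^2}\|h\|_{L^2}\|g_1\|_{L^2}\|g_2\|_{L^2}
$$
justifies Fubini on the full triple integral, so every formal manipulation above is rigorous.
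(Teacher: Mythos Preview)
Your proof is correct and follows the same overall strategy as the paper: both use the reformulation $F\natural_g H = V_g A_{F,H}$ from Lemma~\ref{Lemma1} and reduce the problem to verifying $\widehat{A}_{V_{g_1}f,\,V_{\overline{g_2}}h}(\xi)=\langle g_2,g_1\rangle_{L^2}\,\widehat{f\cdot h}(\xi)$. The execution of that verification differs. The paper rewrites $\langle g_2,g_1\rangle(\widehat f\star\widehat h)(\xi)$ as $\langle g_2,g_1\rangle\langle f,M_\xi\overline h\rangle$ and then lifts it to $L^2(\mathbb{R}^{2d})$ via the STFT orthogonality relations (Theorem~\ref{theorem2.1}) together with \eqref{eq2.4} and \eqref{eq2.5.1}, recognising the result as $\langle F,\overline{\mathcal Z_\xi H}\rangle=\widehat A_{F,H}(\xi)$. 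You instead expand $V_{g_1}f(x',\cdot)$ and $V_{\overline{g_2}}h(x',\cdot)$ directly as Fourier transforms of $\phi_{x'}$ and $\psi_{x'}$, apply the classical convolution identity \eqref{eqFourierTransformConvolution2}, and integrate out $x'$ by Fubini. Your route is somewhat more elementary, bypassing the orthogonality relations entirely; the paper's route makes the appearance of the factor $\langle g_2,g_1\rangle$ conceptually transparent as a consequence of Moyal-type orthogonality. One minor point: your appeal to \eqref{eqFourierTransformConvolution2} requires $\widehat{\phi_{x'}},\widehat{\psi_{x'}}\in L^1$, not just $\phi_{x'},\psi_{x'}\in L^1$; this holds since $\phi_{x'},\psi_{x'}\in S_0(\mathbb{R}^d)$ by Lemma~\ref{lm:Szero}\emph{vii)} and hence their Fourier transforms are in $L^1$ by Lemma~\ref{lm:Szero}\emph{ii)}, but you should say so explicitly.
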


\begin{proof}
That $ V_{g} (f \cdot h)  \in  S_0 (\mathbb{R}^{2d})$ follows from  Lemma \ref{lm:Szero} {\em vii)} and {\em v)}.
It remains to prove  \eqref{eqWindowedProduct2}.

By \eqref{eqWindowedProduct4} we have
$$
\left(F \natural_{g} H \right) (x, \omega) = e^{- 2i \pi \omega \cdot x}  V_{\widehat{g}} \widehat{A}_{F,H} (\omega, -x),
$$
where $ \widehat{A}_{F,H}$ is given by \eqref{eqWindowedProduct5}.
From (\ref{eq2.2}) and (\ref{eqFourierTransformConvolution2}) we have
\begin{equation*}
V_{g} (f \cdot h) (x, \omega) = e^{-2 i \pi x \cdot \omega} V_{\widehat{g}} \left( \widehat{ (f \cdot h) } \right) (\omega, - x)= e^{-2 i \pi x \cdot \omega} V_{\widehat{g}} \left( \widehat{f} \star \widehat{h}   \right) (\omega, - x).
\label{eqWindowedProduct7}
\end{equation*}
By comparison with (\ref{eqWindowedProduct4}) it remains to prove that
\begin{equation}
\langle g_2,g_1 \rangle_{L^2 (\mathbb{R}^d)} \left( \widehat{f} \star \widehat{h}   \right) (\xi) =   \widehat{A}_{F,H} (\xi),
\label{eqWindowedProduct8}
\end{equation}
for $F=V_{g_1} f$ and $H=V_{\overline{g_2}} h$.
From (\ref{eqConvolutionInnerProduct}), (\ref{eqFourierTransformProperties}),
Parseval's identity and the orthogonality relations (\ref{eq2.3}), we obtain
\begin{equation}
\begin{array}{c}
\langle g_2,g_1 \rangle_{L^2 (\mathbb{R}^d)} \left( \widehat{f} \star \widehat{h}   \right) (\xi) = \langle g_2,g_1 \rangle_{L^2 (\mathbb{R}^d)} \langle \widehat{f}, \overline{T_{\xi} \mathcal{I} \widehat{h}} \rangle_{L^2 (\mathbb{R}^d)}  = \\
\\
= \langle g_2,g_1 \rangle_{L^2 (\mathbb{R}^d)} \langle  \widehat{f}, T_{\xi} \left(\overline{h} \right)^{ \widehat{}} \;\; \rangle_{L^2 (\mathbb{R}^d)} =  \langle g_2,g_1 \rangle_{L^2 (\mathbb{R}^d)} \langle  \widehat{f},  \left(M_{\xi} \overline{h} \right)^{ \widehat{}} \;\; \rangle_{L^2 (\mathbb{R}^d)} = \\
\\
 =\langle g_2,g_1 \rangle_{L^2 (\mathbb{R}^d)} \langle  f,  M_{\xi} \overline{h}  \rangle_{L^2 (\mathbb{R}^d)} =
 \langle V_{g_1} f, V_{g_2} \left(M_{\xi} \overline{h} \right) \rangle_{L^2 (\mathbb{R}^{2d})}.
\end{array}
\label{eqWindowedProduct10}
\end{equation}
From (\ref{eq2.4}) and (\ref{eq2.5.1}) we have
\begin{equation}
\begin{array}{c}
V_{g_2} \left(M_{\xi} \overline{h} \right) (x, \omega) = V_{g_2} \overline{h} (x, \omega - \xi) = \\
\\
= \overline{\left(V_{\overline{g_2}} h \right) (x, \xi - \omega)} = \mathcal{Z}_{\xi} \left( \overline{V_{\overline{g_2}} h} \right) (x, \omega).
\end{array}
\label{eqWindowedProduct11}
\end{equation}
Finally, if we write $F=V_{g_1} f$ and $H=V_{\overline{g_2}}h$ and substitute (\ref{eqWindowedProduct11}) in (\ref{eqWindowedProduct10}) we recover (\ref{eqWindowedProduct8}).
\end{proof}

We have a straightforward corollary as follows.

\begin{corollary}\label{corollaryHomomorphism}
Let $f,h \in S_0 (\mathbb{R}^d)$ and let $g_1,g_2,g \in S_0 (\mathbb{R}^{d}) \backslash \left\{0 \right\}$ be
such that $g_1=g_2=g$, with $g$ real and $||g||_{L^2 (\mathbb{R}^d)}=1$. Then
\begin{equation}
V_g (f \cdot h) = V_g (f) \natural_g V_g (h).
\label{eqHomomorphism1}
\end{equation}
\end{corollary}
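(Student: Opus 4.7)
The plan is that this corollary is a direct specialization of Theorem \ref{theoremWindowedProduct1}, so no new machinery is required; I would simply substitute the stated hypotheses into the identity \eqref{eqWindowedProduct2} and simplify.

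Concretely, I would proceed in two small steps. First, since $g$ is assumed real, we have $\overline{g_2} = \overline{g} = g$, so $V_{\overline{g_2}}(h) = V_g(h)$, and of course $V_{g_1}(f) = V_g(f)$. Second, the scalar prefactor on the left-hand side of \eqref{eqWindowedProduct2} becomes
\begin{equation*}
\langle g_2, g_1 \rangle_{L^2(\mathbb{R}^d)} = \langle g, g \rangle_{L^2(\mathbb{R}^d)} = \|g\|_{L^2(\mathbb{R}^d)}^2 = 1,
\end{equation*}
by the normalization assumption. Inserting both observations into \eqref{eqWindowedProduct2} collapses it exactly to \eqref{eqHomomorphism1}.

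There is no real obstacle here; the only thing to verify is that the hypotheses of Theorem \ref{theoremWindowedProduct1} are met, which is immediate since $g \in S_0(\mathbb{R}^d) \setminus \{0\}$ and $f, h \in S_0(\mathbb{R}^d)$ are assumed. The membership $V_g(f \cdot h) \in S_0(\mathbb{R}^{2d})$ is inherited from the theorem as well, so the corollary records the particularly clean multiplicative form of the identity that holds for a normalized real window, which will be convenient in Section \ref{sec:3}.
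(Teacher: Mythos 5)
Your proof is correct and is exactly the specialization the paper intends: the paper states this as an immediate consequence of Theorem \ref{theoremWindowedProduct1} without further argument, and your two observations (realness of $g$ gives $\overline{g_2}=g$, and normalization gives $\langle g_2,g_1\rangle_{L^2(\mathbb{R}^d)}=1$) are precisely what makes \eqref{eqWindowedProduct2} collapse to \eqref{eqHomomorphism1}.
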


\begin{remark}
We note that in Theorem \ref{theoremWindowedProduct1} and its Corollary \ref{corollaryHomomorphism}
one could take $h$ from the space of multipliers of $S_0$, i.e. the Wiener amalgam space $ W(\mathcal{F} L^1, l^\infty)$. However,
we do not need this generality in the sequel.
\end{remark}

Next we define the product related to the Wigner wave-packet transform.

\begin{definition}\label{definitionWignerProduct1}
Let $F,H \in S_0 (\mathbb{R}^{2d})$ and $g \in S_0 (\mathbb{R}^{d}) \backslash \left\{0 \right\}$.  We define  {\bf the Wigner product} as follows
\begin{equation}
\begin{array}{c}
(F \sharp_{g} H) (x, \omega) =  \int_{\mathbb{R}^{3d}} \overline{\widehat{g} (\omega- \omega^{\prime}-\omega^{\prime \prime})} \times \\
 \\
 \times
   F(x^{\prime},\omega^{\prime}) H(x^{\prime},\omega^{\prime \prime}) e^{i \pi \left[x \cdot \omega - x^{\prime} \cdot(\omega^{\prime} + \omega^{\prime \prime})\right]} e^{2 i \pi x \cdot (\omega^{\prime} + \omega^{\prime \prime} - \omega)} dx^{\prime} d \omega^{\prime} d \omega^{\prime \prime}.
\end{array}
\label{eqWignerProduct1}
\end{equation}
\end{definition}

By \eqref{eqRelationSTFTCrossWignerFunction} and a comparison of the products $ \natural_g $ and $\sharp_{g}$ we conclude that
$ F \sharp_{g} H $ is a well defined map $\sharp_g:S_0 (\mathbb{R}^{2d}) \times S_0(\mathbb{R}^{2d}) \to S_0 (\mathbb{R}^{2d})$.
Moreover, the following is true.

\begin{theorem}\label{TheoremWignerProduct2}
Let $f,h \in S_0 (\mathbb{R}^{d})$ and let $g_1,g_2,g \in S_0 (\mathbb{R}^{d}) \backslash \left\{0 \right\}$. Then
\begin{equation}
\langle g_2,g_1 \rangle_{L^2 (\mathbb{R}^d)} W_{g} (f \cdot h) =     W_{ g_1} (f) \sharp_{g}W_{ \overline{g_2}} (h).
\label{eqWignerProduct2}
\end{equation}
\end{theorem}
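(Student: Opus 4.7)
The plan is to reduce the statement to the already established identity \eqref{eqWindowedProduct2} for the Gabor product via the relation
\begin{equation*}
W_g f(x,\omega) = e^{i\pi x\cdot\omega}\, V_{\mathcal{I}g}f(x,\omega)
\end{equation*}
from \eqref{eqRelationSTFTCrossWignerFunction}. First I would apply this identity to the three windowed transforms appearing in \eqref{eqWignerProduct2}: writing $W_g(f\cdot h) = e^{i\pi x\cdot\omega} V_{\mathcal{I}g}(f\cdot h)$, $W_{g_1}f = e^{i\pi x\cdot\omega} V_{\mathcal{I}g_1}f$, and $W_{\overline{g_2}}h = e^{i\pi x\cdot\omega} V_{\mathcal{I}\,\overline{g_2}}h$, so that the claim is equivalent to
\begin{equation*}
\langle g_2,g_1\rangle_{L^2}\, e^{i\pi x\cdot\omega} V_{\mathcal{I}g}(f\cdot h)(x,\omega) \;=\; \bigl(e^{i\pi\cdot\,\omega_1}V_{\mathcal{I}g_1}f\bigr)\sharp_g\bigl(e^{i\pi\cdot\,\omega_2}V_{\mathcal{I}\,\overline{g_2}}h\bigr)(x,\omega).
\end{equation*}

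Next I would invoke Theorem \ref{theoremWindowedProduct1} with the triple of windows $(\mathcal{I}g_1,\mathcal{I}g_2,\mathcal{I}g)$ in place of $(g_1,g_2,g)$; since $\overline{\mathcal{I}g_2}=\mathcal{I}\,\overline{g_2}$ and $\langle \mathcal{I}g_2,\mathcal{I}g_1\rangle_{L^2}=\langle g_2,g_1\rangle_{L^2}$ by a change of variables, this yields
\begin{equation*}
\langle g_2,g_1\rangle_{L^2}\, V_{\mathcal{I}g}(f\cdot h) \;=\; V_{\mathcal{I}g_1}(f)\,\natural_{\mathcal{I}g}\,V_{\mathcal{I}\,\overline{g_2}}(h).
\end{equation*}
Multiplying both sides by $e^{i\pi x\cdot\omega}$ reduces everything to the purely algebraic identity
\begin{equation*}
e^{i\pi x\cdot\omega}\bigl(F\,\natural_{\mathcal{I}g}\,H\bigr)(x,\omega) \;=\; \bigl(e^{i\pi x\cdot\omega}F\bigr)\,\sharp_g\,\bigl(e^{i\pi x\cdot\omega}H\bigr)(x,\omega),
\end{equation*}
for arbitrary $F,H\in S_0(\mathbb{R}^{2d})$.

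To verify this last identity I would simply compare the integral formulas \eqref{eqWindowedProduct1} and \eqref{eqWignerProduct1}. The two ingredients are: the elementary Fourier-transform property $\widehat{\mathcal{I}g}=\mathcal{I}\widehat{g}$ from Proposition \ref{propositionFourierTransform}, which turns $\overline{\widehat{\mathcal{I}g}(\omega'+\omega''-\omega)}$ in $\natural_{\mathcal{I}g}$ into $\overline{\widehat{g}(\omega-\omega'-\omega'')}$ appearing in $\sharp_g$; and the phase bookkeeping, where the factor $e^{i\pi x'\cdot(\omega'+\omega'')}$ produced by $e^{i\pi x'\cdot\omega'}F(x',\omega')\cdot e^{i\pi x'\cdot\omega''}H(x',\omega'')$ exactly cancels the $e^{-i\pi x'\cdot(\omega'+\omega'')}$ in the definition of $\sharp_g$, leaving the global $e^{i\pi x\cdot\omega}$ factor on the outside.

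The only delicate point is the sign and placement of phases in the third step; the formulas \eqref{eqWindowedProduct1} and \eqref{eqWignerProduct1} look quite different at first sight, and I expect the main obstacle to be checking carefully, once and for all, that the exponentials $e^{i\pi[x\cdot\omega-x'\cdot(\omega'+\omega'')]}$ distinguishing $\sharp_g$ from $\natural_g$ match precisely the chirp factors $e^{i\pi x\cdot\omega}$ converting $V$ into $W$ on each of the three arguments. Once that bookkeeping is done, \eqref{eqWignerProduct2} follows immediately from \eqref{eqWindowedProduct2}.
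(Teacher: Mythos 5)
Your proposal is correct and follows exactly the paper's route: the authors also deduce Theorem \ref{TheoremWignerProduct2} directly from Theorem \ref{theoremWindowedProduct1} combined with the relation \eqref{eqRelationSTFTCrossWignerFunction}, and your careful verification of the chirp-factor bookkeeping (including the window substitution $g \mapsto \mathcal{I}g$ and the cancellation of $e^{\pm i\pi x'\cdot(\omega'+\omega'')}$) simply makes explicit what the paper leaves as "a simple consequence."
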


\begin{proof}
The result is a simple consequence of Theorem \ref{theoremWindowedProduct1} and (\ref{eqRelationSTFTCrossWignerFunction}).
\end{proof}

Again, an interesting particular case is

\begin{corollary}\label{CorollaryWignerProduct3}
Let the conditions of Theorem \ref{TheoremWignerProduct2} hold with  $g_1=g_2 =   g$, where $g$ is a real valued function such that $||g||_{L^2 (\mathbb{R}^d)}=1$. Then
we have
\begin{equation*}
W_g (f \cdot h) = W_g (f) \sharp_g W_g (h).
\label{eqWignerProduct3}
\end{equation*}
\end{corollary}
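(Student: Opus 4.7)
The plan is essentially to read off the corollary as an immediate specialization of Theorem \ref{TheoremWignerProduct2}; there is no real obstacle here, only a small bookkeeping check of two scalar factors.

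First I would start from the general identity
\[
\langle g_2, g_1 \rangle_{L^2 (\mathbb{R}^d)}\, W_{g}(f \cdot h) = W_{g_1}(f) \sharp_{g} W_{\overline{g_2}}(h),
\]
established in Theorem \ref{TheoremWignerProduct2} for arbitrary windows $g_1, g_2, g \in S_0(\mathbb{R}^d)\setminus\{0\}$. Since $f, h \in S_0(\mathbb{R}^d)$ and $g \in S_0(\mathbb{R}^d)\setminus\{0\}$, all hypotheses of that theorem are satisfied with the choice $g_1 = g_2 = g$.

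Next I would check the two scalars produced by this specialization. On the right-hand side the window of the second factor becomes $\overline{g_2} = \overline{g} = g$, using the hypothesis that $g$ is real-valued, so that $W_{\overline{g_2}}(h) = W_g(h)$. On the left-hand side the prefactor is
\[
\langle g_2, g_1 \rangle_{L^2(\mathbb{R}^d)} = \langle g, g \rangle_{L^2(\mathbb{R}^d)} = \|g\|_{L^2(\mathbb{R}^d)}^2 = 1,
\]
using the normalization hypothesis. Substituting these into the displayed identity yields exactly $W_g(f \cdot h) = W_g(f) \sharp_g W_g(h)$, which is the claim. The entire argument is a one-line specialization, and no additional estimates or computations are needed.
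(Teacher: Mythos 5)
Your specialization is correct and is exactly how the paper obtains this corollary: it is stated as an immediate consequence of Theorem \ref{TheoremWignerProduct2}, with $\overline{g_2}=g$ from realness and $\langle g_2,g_1\rangle_{L^2(\mathbb{R}^d)}=\|g\|_{L^2(\mathbb{R}^d)}^2=1$ from the normalization. Nothing further is needed.
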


\begin{remark} Let $f,h \in S_0(\mathbb{R}^{d})$, and let $g_1,g_2,g \in S_0 (\mathbb{R}^{d}) \backslash \left\{0 \right\}$. By using
\eqref{eqDilationOperator}, \eqref{eqWavepacketTransform}, and \eqref{eqWignerProduct2}
we obtain the following formula for the cross-Wigner transform of the product $f\cdot h$:
\begin{equation*}
W (f \cdot h, g) (x, \omega) =  \langle g_2,g_1  \rangle_{L^2 (\mathbb{R}^d)}^{-1} D_{\frac{1}{2}} [ D_2 W (f, g_1) \sharp_{g}  D_2 W (h, \overline{g_2})] (x, \omega),
\end{equation*}
and likewise, for the Grossmann-Royer transform we obtain
$$ \displaystyle R_g (fh)(x, \omega) =  2^d \langle  g_2,g_1  \rangle_{L^2 (\mathbb{R}^d)}^{-1} D_{\frac{1}{2}} [ D_2 R_{g_1}  (f) \sharp_{g}  D_2 R_{\overline{g_2}} (h )] (x, \omega).$$
\end{remark}

Let us briefly state the main properties of the windowed products. We define the operators that implement modulations and translations of functions on $\mathbb{R}^{2d}$ as
\begin{equation}
(T_{(u, \eta)} F) (x, \omega) = F(x-u, \omega- \eta), \;\;\; (M_{(\eta,u)} F) (x, \omega) = e^{2 i \pi (x \cdot \eta + \omega \cdot u)} F(x, \omega).
\label{eqWignerProduct5}
\end{equation}

\begin{proposition}\label{PropositionProperties1}
The Gabor and the Wigner products are bilinear, commutative and associative on $ S_0 (\mathbb{R}^{2d})$. Moreover, they have the following behavior under time-frequency shifts (the covariance property):
\begin{equation}
\left(M_{(\eta,-u)} T_{(v,\rho)} F\right)  \natural_{g} \left(M_{(- \eta,-u)} T_{(v,\sigma)} H\right) = M_{(0,-u)} T_{(u,\rho + \sigma)} \left(F \natural_{g} H \right),
\label{eqWignerProduct6}
\end{equation}
and
\begin{equation*}
\begin{array}{c}
\left(M_{(\eta,-u)} T_{(v,\rho)} F\right)  \sharp_{g} \left(M_{(\beta,-u)} T_{(v,\sigma)} H\right) = \\
 = M_{(- \rho,-u-v)} T_{\left(u+\frac{v}{2},\rho + \sigma\right)} M_{\left(0, \frac{v}{2} \right)} \left(F \natural_{g} H \right),
 \end{array}
\label{eqWignerProduct6.1}
\end{equation*}
where $\beta$ is such that $2 \eta + 2 \beta = \rho + \sigma$, $F,H \in S_0 (\mathbb{R}^{2d})$ and $g \in  S_0 (\mathbb{R}^{d}) \backslash \left\{0 \right\}$.
\end{proposition}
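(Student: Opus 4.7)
Bilinearity of both products is immediate from the linearity of the defining integrals \eqref{eqWindowedProduct1} and \eqref{eqWignerProduct1} in each factor. For commutativity of $\natural_g$, the kernel $\overline{\widehat{g}(\omega'+\omega''-\omega)}\,e^{2i\pi x\cdot(\omega'+\omega''-\omega)}$ in \eqref{eqWindowedProduct1} is symmetric under the swap $\omega'\leftrightarrow\omega''$, so this substitution exchanges $F(x',\omega')$ with $H(x',\omega'')$ and leaves the rest invariant. Commutativity of $\sharp_g$ is obtained either by the same inspection of \eqref{eqWignerProduct1} or by invoking \eqref{eqRelationSTFTCrossWignerFunction}.

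For associativity --- the main technical point --- I would exploit Corollary \ref{corollaryHomomorphism}. Fix $g\in S_0(\mathbb{R}^d)$ real with $\|g\|_{L^2}=1$. Then for $f,h,k\in S_0(\mathbb{R}^d)$, iterating \eqref{eqHomomorphism1} yields
\[
(V_g f \natural_g V_g h)\natural_g V_g k = V_g(fh)\natural_g V_g k = V_g(fhk) = V_g f \natural_g V_g(hk) = V_g f \natural_g (V_g h \natural_g V_g k),
\]
so $\natural_g$ is associative on the image of $V_g$. To extend to arbitrary $F,H,K\in S_0(\mathbb{R}^{2d})$, I would use the tensor factorization Lemma \ref{lm:Szero}(ix) to expand each factor as an absolutely convergent sum $\sum_k f_k^1\otimes f_k^2$; on simple tensors the identity can be checked directly from the representation $F\natural_g H=V_g A_{F,H}$ obtained in Lemma \ref{Lemma1}, together with the Fourier-inverted form $A_{F,H}(y)=\int \tilde F(x',y)\tilde H(x',y)\,dx'$ of \eqref{eqWindowedProduct5} and the subsidiary identity $\widetilde{V_g A}(x',y)=A(y)\overline{g(y-x')}$; the conclusion then follows by continuity of $\natural_g$ (Lemma \ref{Lemma1}). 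Associativity of $\sharp_g$ follows from that of $\natural_g$ via \eqref{eqRelationSTFTCrossWignerFunction}.

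For the covariance \eqref{eqWignerProduct6}, insert $F'=M_{(\eta,-u)}T_{(v,\rho)}F$ and $H'=M_{(-\eta,-u)}T_{(v,\sigma)}H$ into \eqref{eqWindowedProduct1} and change variables $x'\to x'+v$, $\omega'\to\omega'+\rho$, $\omega''\to\omega''+\sigma$. The two modulation phases combine multiplicatively: the $\eta$-contributions from the two factors cancel, leaving $e^{-2i\pi u(\omega'+\omega''+\rho+\sigma)}$; simultaneously the argument of $\widehat g$ and of the kernel exponential picks up the constant shift $\rho+\sigma$. Regrouping the surviving phase $e^{-2i\pi u\omega}$ with the kernel translation $x\mapsto x-u$ and the frequency shift $\omega\mapsto\omega-\rho-\sigma$ reproduces exactly $M_{(0,-u)}T_{(u,\rho+\sigma)}(F\natural_g H)(x,\omega)$. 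The analogous formula for $\sharp_g$ is then read off from \eqref{eqRelationSTFTCrossWignerFunction} and Proposition \ref{proposition2.2.1}: the chirp factor $e^{i\pi x\cdot\omega}$ supplies the extra $\tfrac{v}{2}$-shifts, and the balance condition $2\eta+2\beta=\rho+\sigma$ emerges from matching the two chirp phases.

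The hardest step is the passage from associativity on the image of $V_g$ to associativity on all of $S_0(\mathbb{R}^{2d})$, since $V_g(S_0(\mathbb{R}^d))$ is a proper subspace and a naive density argument is unavailable. Carrying out the direct check on simple tensors requires matching phase factors between $((F\natural_g H)\natural_g K)(x,\omega)$ and $(F\natural_g(H\natural_g K))(x,\omega)$; after integrating out the intermediate frequency variable in either ordering, both expressions reduce to a six-variable integral featuring the $\widehat g$-autocorrelation kernel $\int\overline{\widehat g(u)}\,\overline{\widehat g(\nu-u)}\,e^{2i\pi(x-x_\star)u}\,du$, and it is the reconciliation of these two explicit forms against the triple product $F(x',\omega')H(\cdot,\omega'')K(\cdot,\beta)$ that constitutes the computational heart of the proof.
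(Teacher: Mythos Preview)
Your treatment of bilinearity, commutativity, and the covariance identity \eqref{eqWignerProduct6} matches the paper's approach: the paper dismisses the algebraic properties as ``trivial'' and proves \eqref{eqWignerProduct6} by exactly the change of variables $x'\mapsto x'+v$, $\omega'\mapsto\omega'+\rho$, $\omega''\mapsto\omega''+\sigma$ that you outline, then declares the $\sharp_g$ case analogous.

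You are right, however, to flag associativity as the delicate point, and your argument on the range $V_g\bigl(S_0(\mathbb{R}^d)\bigr)$ via Corollary~\ref{corollaryHomomorphism} is correct. The extension step is a genuine gap: your claim that associativity can be ``checked directly'' on simple tensors is false. Using precisely the identities you quote --- $F\natural_g H=V_gA_{F,H}$, $A_{F,H}(t)=\int\tilde F(x',t)\tilde H(x',t)\,dx'$, and $\widetilde{V_gA}(x',t)=A(t)\overline{g(t-x')}$ --- one finds, for $F=a\otimes b$, $H=c\otimes d$, $K=e\otimes f$,
\[
A_{F\natural_g H,\,K}(t)=\Bigl(\textstyle\int ac\Bigr)\,\check b\,\check d\,\check f\;(\overline g\star e)(t),
\qquad
A_{F,\,H\natural_g K}(t)=\Bigl(\textstyle\int ce\Bigr)\,\check b\,\check d\,\check f\;(\overline g\star a)(t),
\]
so that associativity would force $\bigl(\int ac\bigr)(\overline g\star e)\equiv\bigl(\int ce\bigr)(\overline g\star a)$. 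This fails for generic $a,e\in S_0(\mathbb{R}^d)$ (take for instance $a=g$ and $e=T_1g$, so that the two convolutions are nonproportional translates of one another). Hence $\natural_g$ is \emph{not} associative on all of $S_0(\mathbb{R}^{2d})$; the property survives only on $V_g\bigl(S_0(\mathbb{R}^d)\bigr)$, which is all that the later applications (the triple product $F^{\ast}\natural_g F\natural_g F$ with $F=V_g\psi$ in Section~\ref{sec:3}) actually require. The paper's ``trivial'' is therefore an overstatement, and no density or tensor-factorization argument can close the gap you correctly sensed.
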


\begin{proof}
The first statements are trivial. Let us prove (\ref{eqWignerProduct6}). From (\ref{eqWindowedProduct1}) and (\ref{eqWignerProduct5}) we have:
\begin{equation*}
\begin{array}{c}
\left(M_{(\eta,-u)} T_{(v,\rho)} F\right)  \natural_{g} \left(M_{(- \eta,-u)} T_{(v,\sigma)} H\right) (x, \omega)=\\
\\
=  \int_{\mathbb{R}^{3d}} \overline{\widehat{g} (\omega^{\prime} + \omega^{\prime \prime} - \omega)} e^{2 i \pi (\eta \cdot x^{\prime} - u \cdot \omega^{\prime})} F(x^{\prime}-v, \omega^{\prime}- \rho) \times \\
 \\
 \times e^{2 i \pi (-\eta \cdot x^{\prime} - u \cdot \omega^{\prime \prime})} H(x^{\prime}-v, \omega^{\prime \prime}- \sigma) e^{2 i \pi x \cdot (\omega^{\prime} + \omega^{\prime \prime} - \omega)} d x^{\prime} d \omega^{\prime} d \omega^{\prime \prime}.
\end{array}
\label{eqWignerProduct7}
\end{equation*}
After performing the change of variables $x_1 = x^{\prime} -v$, $\omega_1 = \omega^{\prime} - \rho$, $\omega_2 = \omega^{\prime \prime} - \sigma$, and simplifying expressions we obtain
\begin{equation*}
\begin{array}{c}
\left(M_{(\eta,-u)} T_{(v,\rho)} F\right)  \natural_{g} \left(M_{(- \eta,-u)} T_{(v,\sigma)} H\right) (x, \omega)=\\
\\
= e^{- 2 i \pi u \cdot \omega} \int_{\mathbb{R}^{3d}} \overline{\widehat{g} \left(\omega_1 + \omega_2 - (\omega- \rho- \sigma) \right)} F(x_1, \omega_1) \times \\
\\
\times H(x_1, \omega_2) e^{2 i \pi (x-u) \cdot \left(\omega_1 + \omega_2 -(\omega -\rho - \sigma) \right)} dx_1 d \omega_1 d \omega_2 =\\
\\
= e^{- 2 i \pi u \cdot \omega} \left( F \natural_{g} H \right) (x-u, \omega - \rho - \sigma),
 \end{array}
\label{eqWignerProduct8}
\end{equation*}
which terminates the proof.

The proof for the product $\sharp_{g}$ follows in a similar fashion.
\end{proof}

\subsection{Modulation spaces} \label{subs:modsp}

The appropriate functional analysis framework for the study of the STFT is given by modulation spaces introduced in \cite{Feichtinger2}. Their role in time-frequency analysis  is explained in \cite{Grochenig}. We refer to the recent monographs \cite{Benyi, Cordero1} where their use in microlocal analysis, theory of pseudo-differential and Fourier integral operators, and Schr\"odinger equations is highlighted. Since we are interested in weighted modulation spaces, we start with a brief review of weight functions.

\par

A weight in $\mathbb{R}^d$ is a positive function $w_0\in L^{\infty}_{loc}(\mathbb{R}^d)$ such that  $1/w_0\in L^{\infty}_{loc}(\mathbb{R}^d)$. An even weight $w_0$ is {\em submultiplicative} if
\begin{equation}
w_0 (x+y)\leq w_0(x) w_0(y).
\label{eq:submulti}
\end{equation}
A weight $w_0$ on $\mathbb{R}^d$ is {\em  moderate} if there is a submultiplicative weight  $v$ on $\mathbb{R}^d$ such that
\begin{equation}
w_0 (x+y) \precsim v(x) w_0(y),\qquad\forall\,x,y\in\mathbb{R}^d.
\label{eq:moderate}
\end{equation}
In such case we say that $w_0$ is $v-$moderate.

The set of all moderate weights on $\mathbb{R}^d$ is denoted by $ \mathscr{P}_E(\mathbb{R}^d).$
Notice that if  $ w \in  \mathscr{P}_E (\mathbb{R}^d)$ then there is a constant $ r>0$ such that
\begin{equation}
w(x+y) \precsim w (x) e^{r|y|}, \qquad x,y \in \mathbb{R}^d
\label{eq:atmostexp}
\end{equation}
(see e.g. \cite[Lemma 4.2]{Grochenig2}), so that
 $\mathscr{P}_E$ contains weights of at most exponential growth. In particular,
the weights of \textit{polynomial type}, i.e. weights moderate with respect to some polynomial, belong to  $\mathscr{P}_E$ .
This type of weights is sufficient when considering tempered distributions.
However, the space of tempered ultra-distributions $ ({\mathcal S}^{(1)})^{\prime} (\mathbb{R}^d ) $ is
convenient choice when dealing with objects of an exponential type growth.
We refer to \cite{Grochenig2} for a survey on the role of different properties of weight functions in time-frequency analysis.

\par

If  $ w \in  \mathscr{P}_E (\mathbb{R}^{2d})$ then the weighted mixed-norm Lebesgue spaces $L^{p,q} _{w} (\mathbb{R}^{2d})$, $ 1 \leq p,q \leq \infty $,
consist of all $F \in (\mathcal S^{(1)})^{\prime} (\mathbb{R}^{2d})$ such that $ F w \in L^{p,q} (\mathbb{R}^{2d})$, and
$ ||F||_{L^{p,q} _{w}} :=  ||F w ||_{L^{p,q}} $. If $w_{t,s}\in  \mathscr{P}_E (\mathbb{R}^{2d}) $ is a polynomial type weight of the form
\begin{equation}
w_{t,s} (x,\omega) = \langle x \rangle ^t \langle \omega \rangle^s = (1+ |x|^2)^{t/2}  (1+ |\omega|^2)^{s/2},
\label{eqpolynomweight}
\end{equation}
for some $s,t \in \mathbb{R} $, then we use the abbreviated notation $ L^{p,q} _{t,s} (\mathbb{R}^{2d})$.

Note that $ w_{t,s} $ is not submultiplicative when $t,s>0,$ but it is equivalent to the submultiplicative weight
$ (1+|x|)^t (1+|\omega|)^s $, i.e.
$$ w_{t,s} \asymp (1+|x|)^t (1+|\omega|)^s,  $$
and $  L^{p,q} _{t,s} (\mathbb{R}^{2d}) =  L^{p,q} _{(1+|x|)^t (1+|\omega|)^s} (\mathbb{R}^{2d}).$
In many cases it is also convenient to use the rotation invariant weights of the form
$$
w (x,\omega) = \langle (x,\omega)  \rangle ^s = (1+ |x|^2 + |\omega|^2)^{s/2},  \qquad  s\geq 0.
$$

Next we introduce modulation spaces.

\begin{definition}\label{DefinitionModulationSpace1}
Let $g \in \mathcal S^{(1)} (\mathbb{R}^d) \backslash \left\{0 \right\}$, $w \in  \mathscr{P}_E (\mathbb{R}^{2d})$ and $p,q \in \left[1, \infty \right]$ be fixed. The {\bf modulation space} $M_{w}^{p,q} (\mathbb{R}^d)$ consists of all
$f \in (\mathcal S^{(1)})^{\prime} (\mathbb{R}^d)$ such that
\begin{equation}
||f||_{M_{w}^{p,q}} := ||V_g f ||_{L_{w}^{p,q}} < \infty.
\label{eqModulationSpace1}
\end{equation}
\end{definition}

For convenience we set
$ M^{p,p} _w (\mathbb{R}^d) = M^{p} _w (\mathbb{R}^d) $, $ M^{p,q} _1 (\mathbb{R}^d) = M^{p,q} (\mathbb{R}^d) $,
and $M_{w}^{p,q} (\mathbb{R}^d) = M_{t,s}^{p,q} (\mathbb{R}^d)$ if the weight $ w = w_{t,s} $ is given by \eqref{eqpolynomweight}.

\par

It can be proved that modulation spaces are Banach spaces with the norm given by \eqref{eqModulationSpace1}, and
if $w$ is $v-$moderate then Definition \ref{DefinitionModulationSpace1}
is independent of the choice of windows $ g \in M^1 _v (\mathbb{R}^d) $, as different windows lead to equivalent norms
\cite{Feichtinger2, Grochenig, Toft5}. Moreover, if $ p_1\leq p_2$, $q_1\leq q_2$ and $w_2\precsim  w_1 $, $ w_1, w_2 \in \mathscr{P}_E (\mathbb{R}^{2d})$,
then the following continuous embeddings hold
\begin{equation} \label{embeddings}
 \mathcal S^{(1)} (\mathbb{R}^d) \hookrightarrow M^{p_1,q_1}_{w_1}(\mathbb{R}^d)\hookrightarrow M^{p_2,q_2}_{w_2}(\mathbb{R}^d) \hookrightarrow  (\mathcal S^{(1)})^{\prime} (\mathbb{R}^d).
\end{equation}
If, in addition $ p_1, q_1 < \infty,$ then $ \mathcal S^{(1)} (\mathbb{R}^d) $ is dense in  $ M^{p_1,q_1}_{w_1}(\mathbb{R}^d)$.

We also recall that $ \mathcal S (\mathbb{R}^d)$ is dense in $M^{p,q}_{s,s} (\mathbb{R}^d)$ if $ s\geq 0$ and  $1\leq p,q < \infty$.
Moreover, for  $1\leq p,q \leq \infty$ we have
$$
 \mathcal S (\mathbb{R}^d) = \bigcap_{s \geq 0} M^{p,q}_{s,s}(\mathbb{R}^d) \qquad \text{and} \qquad
 \bigcup_{s \geq 0} M^{p,q}_{-s,-s}(\mathbb{R}^d)  = \mathcal S^{\prime}  (\mathbb{R}^d).
$$
Therefore, for the  tempered distributions framework, it is sufficient to consider weights of polynomial growth.
More general weights $ w \in  \mathscr{P}_E (\mathbb{R}^{2d})$ are used  in the study of objects of
(sub)exponential growth/decay at infinity, cf. \cite{To11}. For example,
the Navier-Stokes equation is considered in the context of modulation spaces with exponentially decaying weights in a recent contribution \cite{Feichtinger3}. To emphasize situations when $M_{w}^{p,q} (\mathbb{R}^d)$ contain ultradistributions,
they are sometimes called ultra-modulation spaces, see \cite{Teofanov2, Teofanov3}.

\begin{example}\label{ExampleSzero}
When $ p=q=w=1$,  $M^{1} (\mathbb{R}^d) = S_0 (\mathbb{R}^d)$. In fact, the Feichtinger algebra is the most prominent example of
a modulation space. We will also use the weighted Feichtinger algebra $M^{1} _w (\mathbb{R}^d)$, $w \in \mathscr{P}_E (\mathbb{R}^{2d})$, in Subsection \ref{subsec:extension}.
In particular,  $M^{1} _{0,2} (\mathbb{R}^d)$ will be used in Section \ref{sec:3}.
We will also use the following fact. If   $1\leq p,q < \infty$ and if $w $ is a $v-$ moderate weight, then
by \eqref{embeddings} and the density of  $ \mathcal S^{(1)} (\mathbb{R}^d) $
in  $ M^{p,q}_{w}(\mathbb{R}^d)$ and in  $ L^{p,q}_{w}(\mathbb{R}^d)$ it follows that
$M^{1} _v (\mathbb{R}^d)$ is dense in   $ M^{p,q}_{w}(\mathbb{R}^d)$ and in $ L^{p,q}_{w}(\mathbb{R}^d)$.
\end{example}

\begin{example}\label{ExampleModulationSpace1}
Familiar examples arise when $ p=q=2$. Then $M_{0,0}^{2,2} (\mathbb{R}^d) =M^{2} (\mathbb{R}^d)  = L^2 (\mathbb{R}^d)$, and it can be shown that
\begin{equation*}
M_{0,s}^{2,2} (\mathbb{R}^d) = H^s  (\mathbb{R}^d),  \qquad s \in \mathbb{R},
\label{eqModulationSpace1.1}
\end{equation*}
where $H^s (\mathbb{R}^d)$ is the Sobolev space (also known as the Bessel potential space) of distributions
$f \in \mathcal S^{\prime}(\mathbb{R}^d) $ such that
\begin{equation*}
||f||_{H^s}^2 := \int_{\mathbb{R}^d} (1+ | \omega|^2 )^s | \widehat{f} (\omega)|^2 d \omega < \infty,
\label{eqModulationSpace1.2}
\end{equation*}
cf. \cite[Proposition 11.3.1]{Grochenig}. Furthermore,
if $v_{s} (x,\omega) =  (1+ |x|^2 + |\omega|^2)^{s/2},$ then
$ M_{v_{s}} ^{2}  (\mathbb{R}^d) = Q_s  (\mathbb{R}^d)$, $s \in \mathbb{R},$
where $ Q_s $ denotes the Shubin-Sobolev space, \cite[Lemma 2.3]{Boggiatto}. The spaces $ Q_s $ were introduced in \cite{Shubin1}
to study non-local effects of pseudodifferential operators.
\end{example}

We proceed with several properties and remarks which will be useful when extending  Gabor and Wigner products
in Subsection \ref{subsec:extension}.

By \cite[Proposition 11.3.2]{Grochenig}, we have
\begin{equation}
I_{M_{w}^{p,q}}= \langle \phi,g\rangle_{L^2 (\mathbb{R}^d)}^{-1} V_{\phi}^{\ast} V_{g}~,
\label{eqPre4}
\end{equation}
for all windows $\phi,g \in S_0 (\mathbb{R}^d) \backslash \left\{0 \right\}$ such that
$ \langle \phi,g\rangle_{L^2 (\mathbb{R}^d)} \neq 0$. Here $V_{\phi}^{\ast}$ denotes the adjoint of $V_{\phi}$, defined as
\begin{equation}
V_{\phi}^{\ast}F= \int \int_{\mathbb{R}^{2d}} F(x,\omega) M_{\omega} T_x \phi dx d\omega~,
\label{eqPre5}
\end{equation}
for some $F \in L_{w}^{p,q} (\mathbb{R}^{2d})$. This integral is to be interpreted weakly as:
\[
\begin{array}{c}
<V_{\phi}^{\ast}F,f>
= \int \int_{\mathbb{R}^{2d}} F(x,\omega) <M_{\omega} T_x \phi,f > dx d \omega \\
\\
=\int \int_{\mathbb{R}^{2d}} F(x,\omega)\overline{V_{\phi}f (x, \omega)} dx d \omega ~,
\end{array}
\]
cf. \cite[Definition 11.3.2]{Grochenig}. Moreover, if  $w$ is a $v-$moderate weight, then
\begin{equation}
\|V_{\phi}^{\ast}F\|_{M_{w}^{p,q}} \precsim \|V_{g} \phi \|_{L_{v} ^1} \|F\|_{L_{w}^{p,q}}~,
\label{eqPre7}
\end{equation}
where $\phi \in M^1 _v (\mathbb{R}^d )\backslash \left\{0 \right\}$, and $g_0 \in  M^1 _v (\mathbb{R}^d) \backslash \left\{0 \right\}$ is some reference fixed window.

\begin{remark}\label{Remark1}
Let $V_g \left( S_0 (\mathbb{R}^d)\right)$ denote the range of $V_g(f)$ for a fixed window $g \in S_0 (\mathbb{R}^d)$,
and all $f \in S_0 (\mathbb{R}^d)$. Notice that $ V_g ( S_0 (\mathbb{R}^{d})) \subsetneq S_0 (\mathbb{R}^{2d})$.
In general, $V_{g_1} \left(S_0 (\mathbb{R}^d)\right) \neq V_{g_2} \left( S_0(\mathbb{R}^d)\right)$ when $g_1 \neq g_2$. This can be easily seen from the orthogonality relation, Theorem \ref{theorem2.1}. Namely, if  $\langle g_1, g_2 \rangle_{L^2(\mathbb{R}^d)}=0$, then $\langle V_{g_1}(f_1), V_{g_2}(f_2)\rangle_{L^2(\mathbb{R}^{2d})}=0$, for all $f_1,f_2 \in L^2(\mathbb{R}^d)$.
In addition, $V_{g_1} \left(S_0 (\mathbb{R}^d)\right) = V_{g_2} \left( S_0(\mathbb{R}^d)\right)$ if and only if $g_1 = \alpha g_2,
$ for some $ \alpha \in \mathbb{C} \setminus \{ 0 \}.$
\end{remark}

Next we consider the range of the STFT.

\begin{theorem}\label{TheoremDensityWindows}
Let $1\leq p,q < \infty,$ $ w \in \mathscr{P}_E (\mathbb{R}^{2d})$ be a $v-$moderate weight, and let
$\mathscr{G}= \left\{g_n,~ n \in \mathbb{N}\right\}  $ $  \subset  M^1 _v (\mathbb{R}^d) $
be an orthonormal basis of $ L^2 (\mathbb{R}^d).$
Moreover, let  $\mathcal{V}_{\mathscr{G},w}^{(N),p,q}(\mathbb{R}^{2d})$ denote the closure of the set
\begin{equation}
V_{\mathscr{G}}^{(N)} (\mathbb{R}^{2d})=\left\{\sum_{n=1}^N V_{g_n}(f_n)~:~f_n \in  M^1 _v (\mathbb{R}^d) \right\}~
\label{eqTheoremDensityWindows1}
\end{equation}
with respect to the $L_{w}^{p,q} (\mathbb{R}^{2d})$ norm.
In particular, if $N=1$, we write simply $\mathcal{V}_{g, \omega}^{p,q} (\mathbb{R}^{2d})$ for the closure of $\left\{V_g(f)~: ~f \in M^1 _v  (\mathbb{R}^d) \right\}$.

Then every element of $\mathcal{V}_{\mathscr{G},w}^{(N),p,q}(\mathbb{R}^{2d})$ can be written in the form
\begin{equation}
\sum_{n=1}^N V_{g_n}(f_n)~,
\label{eqTheoremDensityWindows2}
\end{equation}
for some  $f_n \in M_{w}^{p,q}(\mathbb{R}^{d})$, $ n =1,2,\dots, N$.

\end{theorem}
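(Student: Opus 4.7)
The plan is to use the adjoint operator $V_{g_m}^{\ast}$ as a projector that extracts the $m$-th component from any finite sum $\sum_n V_{g_n}(f_n)$, and then pass to the limit in $M^{p,q}_w$. Fix $F \in \mathcal{V}_{\mathscr{G},w}^{(N),p,q}(\mathbb{R}^{2d})$ and choose $f_n^{(k)} \in M^1_v(\mathbb{R}^d)$, $n=1,\dots,N$, $k \in \mathbb{N}$, with
\begin{equation*}
F_k := \sum_{n=1}^N V_{g_n}(f_n^{(k)}) \longrightarrow F \quad \text{in } L^{p,q}_w(\mathbb{R}^{2d}).
\end{equation*}
By \eqref{eqPre7}, the adjoint $V_{g_m}^{\ast}$ is a bounded operator from $L^{p,q}_w(\mathbb{R}^{2d})$ to $M^{p,q}_w(\mathbb{R}^d)$ (here we use $g_m \in M^1_v$, which comes from the hypothesis $\mathscr{G} \subset M^1_v$).

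The key algebraic ingredient is the identity $V_{g_m}^{\ast} V_{g_n} = \delta_{mn} I$ on $M^1_v(\mathbb{R}^d)$. For $m = n$, this is \eqref{eqPre4} with $\phi = g = g_n$, using $\|g_n\|_{L^2} = 1$. For $m \ne n$, the standard inversion formula does not apply, but one can compute $V_{g_m}^{\ast} V_{g_n}(f)$ weakly: for $f, h \in M^1_v(\mathbb{R}^d)$, the definition of $V_{g_m}^{\ast}$ combined with the orthogonality relations (Theorem \ref{theorem2.1}) yields
\begin{equation*}
\langle V_{g_m}^{\ast} V_{g_n}(f), h \rangle = \langle V_{g_n}(f), V_{g_m}(h) \rangle_{L^2(\mathbb{R}^{2d})} = \langle f, h \rangle_{L^2(\mathbb{R}^d)} \, \langle g_m, g_n \rangle_{L^2(\mathbb{R}^d)} = 0,
\end{equation*}
since $\mathscr{G}$ is orthonormal. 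Applied termwise to $F_k$, this gives $V_{g_m}^{\ast}(F_k) = f_m^{(k)}$.

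Now boundedness of $V_{g_m}^{\ast}: L^{p,q}_w \to M^{p,q}_w$ yields
\begin{equation*}
f_m^{(k)} = V_{g_m}^{\ast}(F_k) \longrightarrow V_{g_m}^{\ast}(F) =: f_m \quad \text{in } M^{p,q}_w(\mathbb{R}^d),
\end{equation*}
so each limit candidate $f_m$ automatically belongs to $M^{p,q}_w(\mathbb{R}^d)$. Finally, the very definition of the modulation space norm \eqref{eqModulationSpace1} says that $V_{g_n}: M^{p,q}_w(\mathbb{R}^d) \to L^{p,q}_w(\mathbb{R}^{2d})$ is continuous, hence $V_{g_n}(f_n^{(k)}) \to V_{g_n}(f_n)$ in $L^{p,q}_w$. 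Summing over the finite index set $n = 1, \dots, N$ and comparing with $F_k \to F$ yields
\begin{equation*}
F = \sum_{n=1}^N V_{g_n}(f_n),
\end{equation*}
as required.

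The only genuinely delicate step is the off-diagonal identity $V_{g_m}^{\ast} V_{g_n} = 0$ for $m \ne n$, since \eqref{eqPre4} as stated presupposes a nonzero inner product of the two windows. I would handle this by the weak/duality computation above, using that $M^1_v$ is dense in $M^{p,q}_w$ (as recalled in Example \ref{ExampleSzero}, since $p,q < \infty$), so the identity extends by continuity from $M^1_v$ to the whole modulation space. Everything else is a direct chase through the continuity of $V_g$ and $V_g^{\ast}$ on the relevant weighted spaces.
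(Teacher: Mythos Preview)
Your proof is correct and follows essentially the same route as the paper's: both arguments hinge on the identity $V_{g_m}^{\ast}\big(\sum_{n=1}^N V_{g_n}(f_n^{(k)})\big)=f_m^{(k)}$, which comes from \eqref{eqPre4} together with the orthonormality of $\mathscr{G}$, and then on the boundedness of $V_{g_m}^{\ast}:L^{p,q}_w\to M^{p,q}_w$ from \eqref{eqPre7}. The only cosmetic difference is that you define $f_m:=V_{g_m}^{\ast}(F)$ directly and invoke continuity of $V_{g_m}^{\ast}$, whereas the paper first shows $(f_m^{(k)})_k$ is Cauchy in $M^{p,q}_w$ and then uses completeness; these are two phrasings of the same step.
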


\begin{proof}
Let $F \in \mathcal{V}_{\mathscr{G},w}^{(N),p,q}(\mathbb{R}^{2d})$. Then there exists a sequence
\begin{equation}
H_k= \sum_{n=1}^N V_{g_n}(f_n^{(k)})~, ~ f_n^{(k)} \in M^1 _v (\mathbb{R}^d)~,
\label{eqTheoremDensityWindows3}
\end{equation}
such that
\begin{equation}
\|H_k -F \|_{L_{w}^{p,q} (\mathbb{R}^{2d})} \to 0~, ~\text{ as } k \to \infty~.
\label{eqTheoremDensityWindows4}
\end{equation}
Since $(H_k)$ is a Cauchy sequence, we also have:
\begin{equation}
\|H_k -H_l \|_{L_{w}^{p,q} (\mathbb{R}^{2d})} \to 0~, ~\text{ as } k,l \to \infty~.
\label{eqTheoremDensityWindows5}
\end{equation}
On the other hand, if $g_0 \in M^1 _v  (\mathbb{R}^d)$ is a fixed window, then for each fixed $m=1,2,3,\cdots,N$, using \eqref{eqPre4} and \eqref{eqPre7} we obtain:
\begin{equation}
\begin{array}{c}
\|H_k -H_l \|_{L_{w}^{p,q} (\mathbb{R}^{2d})}=\|\sum_{n=1}^N V_{g_n}(f_n^{(k)})-\sum_{n=1}^N V_{g_n}(f_n^{(l)}) \|_{L_{w}^{p,q} (\mathbb{R}^{2d})}=\\
\\
=\|\sum_{n=1}^N V_{g_n}\left(f_n^{(k)}-f_n^{(l)} \right) \|_{L_{w}^{p,q} (\mathbb{R}^{2d})}\\
\\
 \gtrsim \frac{1}{\|V_{g_0}g_m\|_{L_v^1}} \|V_{g_m}^{\ast}\left(\sum_{n=1}^N V_{g_n}\left(f_n^{(k)}-f_n^{(l)} \right)\right)\|_{M_{w}^{p,q} (\mathbb{R}^{d})}=\\
\\
= \frac{1}{\|V_{g_0}g_m\|_{L_v^1}} \|\sum_{n=1}^N V_{g_m}^{\ast} V_{g_n}\left(f_n^{(k)}-f_n^{(l)} \right)\|_{M_{w}^{p,q} (\mathbb{R}^{d})}=\\
\\
=\frac{1}{\|V_{g_0}g_m\|_{L_v^1}} \|f_m^{(k)}-f_m^{(l)}  \|_{M_{w}^{p,q} (\mathbb{R}^{d})}~.
\end{array}
\label{eqTheoremDensityWindows6}
\end{equation}
Thus, $\left(f_m^{(k)}\right)$ is a Cauchy sequence in $M_{w}^{p,q} (\mathbb{R}^{d})$. Since $M_{w}^{p,q} (\mathbb{R}^{d})$ is complete, there exists $f_m \in M_{w}^{p,q} (\mathbb{R}^{d})$, such that
\begin{equation}
\|f_m^{(k)}- f_m \|_{M_{w}^{p,q} (\mathbb{R}^{d})} \to 0~, ~\text{ as } k \to \infty~.
\label{eqTheoremDensityWindows7}
\end{equation}
Finally, for every $\epsilon >0$ and every $m=1,2, \cdots, N$, there exists $K \in \mathbb{N}$, such that
\begin{equation}
\|f_m^{(k)}-f_m\|_{M_{w}^{p,q} (\mathbb{R}^{d})}  < \frac{\epsilon}{N}~,
\label{eqTheoremDensityWindows8}
\end{equation}
whenever $k \geq K$.

It follows that:
\begin{equation}
\begin{array}{c}
\|\sum_{n=1}^N V_{g_n}(f_n^{(k)})-\sum_{n=1}^N V_{g_n}(f_n)\|_{L_{w}^{p,q} (\mathbb{R}^{2d})} =\\
\\
=\|\sum_{n=1}^N V_{g_n}\left(f_n^{(k)}-f_n \right)\|_{L_{w}^{p,q} (\mathbb{R}^{2d})} \\
\\
\leq \sum_{n=1}^N \|V_{g_n}\left(f_n^{(k)}-f_n\right) \|_{L_{w}^{p,q} (\mathbb{R}^{2d})} \asymp N \|f_n^{(k)}-f_n \|_{M_{w}^{p,q} (\mathbb{R}^{2d})} < \epsilon ~.
\end{array}
\label{eqTheoremDensityWindows9}
\end{equation}
This means that $\sum_{n=1}^N V_{g_n}(f_n^{(k)}) \to \sum_{n=1}^N V_{g_n}(f_n)$, as $k \to \infty$, and hence $F=\sum_{n=1}^N V_{g_n}(f_n)$.
\end{proof}

We end this subsection by proving a density theorem, which might be
considered folklore, but the authors couldn't find its published
version elsewhere.

\begin{theorem}\label{Theorem2}
Let $1\leq p,q < \infty,$ let $ w \in \mathscr{P}_E (\mathbb{R}^{2d})$ be a $v-$moderate weight,
and let $\left(f_n\right)_{n \in \mathbb{N}} \in M^{1} _v (\mathbb{R}^d)$ be an orthonormal basis of $L^2 (\mathbb{R}^d)$.
Then the closure of all linear combinations of $\left(V_{f_n}(f_m) \right)_{n,m \in \mathbb{N}}$ is dense in $L_{w}^{p,q}(\mathbb{R}^{2d})$.
\end{theorem}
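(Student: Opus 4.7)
The plan is to use Hahn--Banach duality: since $1 \leq p, q < \infty$, the dual $(L^{p,q}_w(\mathbb{R}^{2d}))^* = L^{p',q'}_{1/w}(\mathbb{R}^{2d})$, and density is equivalent to showing that every $H \in L^{p',q'}_{1/w}$ satisfying
\begin{equation*}
\int_{\mathbb{R}^{2d}} V_{f_n}(f_m) \,\overline{H}\, dx\, d\omega = 0, \qquad \forall n, m \in \mathbb{N},
\end{equation*}
must vanish. As a preliminary I would first verify that $\{V_{f_n}(f_m)\}_{n,m}$ is an orthonormal basis of $L^2(\mathbb{R}^{2d})$: orthonormality is immediate from the orthogonality relations of Theorem \ref{theorem2.1}, while completeness reduces to the Parseval-type identity $\sum_n \|V_{f_n}^* F\|^2_{L^2(\mathbb{R}^d)} = \|F\|^2_{L^2(\mathbb{R}^{2d})}$ for $F\in L^2(\mathbb{R}^{2d})$. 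This identity I would derive by a direct calculation: the substitution $s = t - y$ in $V_{f_n}^* F(t)$ turns the inner integral into the scalar product of $f_n$ with an $L^2$-auxiliary function of $s$, after which Parseval for the basis $(f_n)$ together with Plancherel in the frequency variable collapses the sum to $\|F\|_{L^2}^2$.

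Next, I would unwind the annihilation condition by Fubini and the adjoint formula \eqref{eqPre5}, rewriting it as $\langle f_m, G_n\rangle = 0$ for all $n, m$, where $G_n := V_{f_n}^* H$. By \eqref{eqPre7} and the fact that $w$ is $v$-moderate, $G_n \in M^{p',q'}_{1/w}(\mathbb{R}^d) \hookrightarrow (M^1_v(\mathbb{R}^d))^*$, so the pairings $\langle f_m, G_n\rangle$ are well defined by duality. In the special case $H \in L^2(\mathbb{R}^{2d})$, one has $G_n \in L^2(\mathbb{R}^d)$, and the $L^2$-ONB property of $(f_m)$ forces $G_n = 0$ for every $n$; the reproducing identity $\sum_n V_{f_n} V_{f_n}^* = I$ on $L^2$ (an immediate byproduct of the preliminary step) then yields $H = 0$, settling the Hilbertian subcase.

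The main obstacle is promoting this conclusion from $L^2$ to $L^{p',q'}_{1/w}$: $G_n$ is not a priori in $L^2$, so the direct ONB argument is unavailable. I would circumvent this by working on the predual side. Fix $\phi \in M^1_v(\mathbb{R}^{2d})$, which sits inside both $L^{p,q}_w(\mathbb{R}^{2d})$ (where it is dense by Example \ref{ExampleSzero}) and $L^2(\mathbb{R}^{2d})$. Expand $\phi = \sum_{n,m} c_{nm} V_{f_n}(f_m)$ in $L^2$ (possible by the preliminary step), and truncate to partial sums $\phi_N$. The annihilation hypothesis gives $\int \phi_N \overline{H} = 0$, and it remains to show that $\int \phi_N \overline{H} \to \int \phi \overline{H}$ as $N\to\infty$, despite the partial sums converging only in $L^2$, not in $L^{p,q}_w$. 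I would handle this by invoking the bounds \eqref{eqPre4}--\eqref{eqPre7} to control $\|\phi - \phi_N\|_{L^{p,q}_w}$ via quantities involving $V_{f_n}^*(\phi-\phi_N) \in M^{p,q}_w$, exploiting the $M^1_v$-regularity of $\phi$ and the uniform boundedness of the analysis/synthesis operators $V_{f_n}$, $V_{f_n}^*$ with windows in an orthonormal system. Once this is established, $\int \phi \overline{H} = 0$ for all $\phi$ in a dense subset of $L^{p,q}_w$, giving $H = 0$ as required.
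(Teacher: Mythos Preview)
Your overall strategy---Hahn--Banach duality instead of a direct approximation---is a legitimate alternative to the paper's route, which establishes weak convergence of the partial sums to a fixed $F\in M^1_v(\mathbb{R}^{2d})$ and then invokes Mazur's lemma to upgrade this to strong $L^{p,q}_w$-convergence of convex combinations. Both approaches share the preliminary observation that $\{V_{f_n}(f_m)\}_{n,m}$ is an orthonormal basis of $L^2(\mathbb{R}^{2d})$.

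However, your final step has a genuine gap. You need $\int \phi_N\,\overline{H}\to\int \phi\,\overline{H}$ for $\phi\in M^1_v(\mathbb{R}^{2d})$ and $H\in L^{p',q'}_{1/w}(\mathbb{R}^{2d})$, knowing only that $\phi_N\to\phi$ in $L^2$. Since $H$ need not lie in $L^2$, $L^2$-convergence of $\phi_N$ is not enough; you would need $\phi_N\to\phi$ in $L^{p,q}_w$, or at least uniform boundedness of $\|\phi_N\|_{L^{p,q}_w}$ together with an auxiliary argument. Your proposed justification via \eqref{eqPre4}--\eqref{eqPre7} does not deliver this: inequality \eqref{eqPre7} bounds $\|V_\gamma^\ast F\|_{M^{p,q}_w}$ above by $\|F\|_{L^{p,q}_w}$, which goes in the wrong direction, and the inversion formula \eqref{eqPre4} uses a \emph{single} pair of non-orthogonal windows, not an infinite orthonormal family. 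In effect you are asserting that the $L^2$-partial sums of $\phi$ converge in $L^{p,q}_w$, which is essentially the theorem restricted to $M^1_v$---so the argument is circular as stated.

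The paper avoids this trap: rather than trying to prove norm convergence of the partial sums, it pairs them against test functions $G\in M^1_v\subset L^2$ (where the $L^2$-limit \emph{is} available), argues that this yields weak $L^{p,q}_w$-convergence by density of $M^1_v$ in the dual, and then applies Mazur's lemma to produce a sequence of finite linear combinations converging strongly. If you want to salvage the duality approach, you would similarly need either a uniform $L^{p,q}_w$-bound on the partial sums $\phi_N$ or a separate argument that $\operatorname{span}\{f_m\}$ is dense in $M^{p,q}_w(\mathbb{R}^d)$ (so that $G_n=V_{f_n}^\ast H=0$ can be concluded directly); neither is supplied.
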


%
%

\begin{proof}
Let us assume first that $F \in  M^{1} _v (\mathbb{R}^{2d})$, and consider an arbitrary $G\in M^{1} _v (\mathbb{R}^{2d}) $. Define
\begin{equation}
a_{nm}= \langle F, V_{f_n}(f_m)\rangle_{L^2 (\mathbb{R}^{2d})}= \langle V_{f_n}^{\ast} F, f_m \rangle_{L^2 (\mathbb{R}^d)}~,
\label{eqTheorem21}
\end{equation}
where $V_{f_n}^{\ast}$ denotes the adjoint of the STFT, (\ref{eqPre5}).

Recall that $<\cdot,\cdot>$ denotes the duality bracket $(L_{w}^{p,q})^{\prime }(\mathbb{R}^{2d}) \times L_{w}^{p,q}(\mathbb{R}^{2d}) \to \mathbb{C}$,
whereas $\langle \cdot, \cdot \rangle_{L^2(\mathbb{R}^{2d})}$ is the inner product in $ L^2(\mathbb{R}^{2d})$.
We have:
\begin{equation}
\begin{array}{c}
\left| < G, \sum_{n,m=1}^N a_{nm} V_{f_n} (f_m)-F> \right|
=\left| \langle G, \overline{\sum_{n,m=1}^N a_{nm} V_{f_n} (f_m)-F} \rangle_{L^2(\mathbb{R}^{2d})} \right|\\
\\
\leq \|G\|_{L^2 (\mathbb{R}^{2d})} \|\sum_{n,m=1}^N a_{nm} V_{f_n} (f_m)-F\|_{L^2 (\mathbb{R}^{2d})}~.
\end{array}
\label{eqTheorem22}
\end{equation}
Now, since $\left(V_{f_n}(f_m) \right)_{n,m}$ is obviously an orthonormal basis of $L^2(\mathbb{R}^{2d})$, it follows that $F$ can be expanded in that basis with the coefficients $a_{nm} $ given by \eqref{eqTheorem21}.
Then (\ref{eqTheorem22}) implies that
\begin{equation*}
\big | <G, \sum_{n,m=1}^N a_{nm} V_{f_n} (f_m)-F> \big | \to 0~,
\label{eqTheorem24}
\end{equation*}
as $N \to \infty$, for all $G \in  M^{1} _v (\mathbb{R}^{2d})$. Since $ M^{1} _v (\mathbb{R}^{2d})$ is dense in
$(L_{w}^{p,q})^{\prime } (\mathbb{R}^{2d})$, we conclude that
\begin{equation*}
\sum_{n,m=1}^N a_{nm} V_{f_n} (f_m) \rightharpoonup F~,
\label{eqTheorem25}
\end{equation*}
where $\rightharpoonup $ denotes the weak convergence in $L_{w}^{p,q} (\mathbb{R}^{2d})$.

By Mazur's Lemma (cf. \cite{Ekeland, Mazur}), there exists a function $\mathcal{N}: \mathbb{N} \to \mathbb{N}$ and a sequence of sets of real numbers $\left\{\alpha (N)_k,~k=N, \cdots, \mathcal{N} (N) \right\}$, with $\alpha(N)_k \geq 0$ and $\sum_{k=N}^{\mathcal{N}(N)} \alpha (N)_k =1$, such that the sequence
\begin{equation*}
H_N=\sum_{k=N}^{\mathcal{N}(N)} \alpha (N)_k \sum_{n,m=1}^k a_{nm}V_{f_n}(f_m)=
\sum_{n,m=1}^{\mathcal{N}(N)}  b_{nm} V_{f_n}(f_m)~,
\label{eqTheorem26}
\end{equation*}
converges strongly to $F$ in $L_{w}^{p,q} (\mathbb{R}^{2d})$. Here $(b_{n	m}) $ denotes a new set of complex coefficients which, for $1\leq n,m \leq \mathcal{N}(N)$, can be expressed as a linear combinations of the coefficients $\left\{a_{nm},~1\leq n,m \leq \mathcal{N}(N) \right\}$.

Finally, choose an arbitrary $F \in L_{w}^{p,q} (\mathbb{R}^{2d})$. Since $M^{1} _v (\mathbb{R}^{2d})$ is dense in $L_{w}^{p,q} (\mathbb{R}^{2d})$, for any $\epsilon >0$, there exists $F_0 \in M^{1} _v (\mathbb{R}^{2d})$ such that $\|F-F_0\|_{L_{w}^{p,q}} < \frac{\epsilon}{2}$. From the previous result, there exists $N \in \mathbb{N}$ and a set of complex numbers $\left(b_{nm}\right)_{nm}$ such that $\|F_0 - \sum_{n,m=1}^N b_{nm} V_{f_n}(f_m)\|_{L_{w}^{p,q}} < \frac{\epsilon}{2}$ and it finally follows that
$$\|F - \sum_{n,m=1}^N b_{nm} V_{f_n}(f_m)\|_{L_{w}^{p,q}} < \epsilon,$$
which proves the result.
\end{proof}

\subsection{Extension of the products} \label{subsec:extension}

To extend the products from Subsection \ref{subs:products} we use multiplication properties for modulation spaces. Initial general results from \cite{Feichtinger2} based on the Fourier transforms of convolutions in Wiener-amalgam spaces, were thereafter reconsidered by different authors, \cite{Benyi,Cordero1,Toft4,Toft1, Wang}. Here we use a recent result from \cite{Toft3}
since it contains quite general and simple conditions on the weight functions.

In the sequel we shall use the Young functional given by
\begin{equation}
R(p)=1 + \frac{1}{p_0} - \frac{1}{p_1}-\frac{1}{p_2},
\label{eqYoungfunctional}
\end{equation}
for $p=(p_0, p_1, p_2) \in \left[1, \infty \right]^3$.
The common conditions for the H\"older and the Young inequalities are then given by
$ R(p) = 1,$ and $ R(p) = 0,$ respectively.

\par

To extend the Gabor product we first recall a  result on pointwise products in modulation spaces
related to general weights $ w_0, w_1, w_2 \in \mathscr{P}_E(\mathbb{R}^{2d})$ which satisfy the condition
\begin{equation}
w_0 (x, \omega_1 + \omega_2) \precsim w_1 (x, \omega_1 ) w_2 (x, \omega_2).
\label{eqweightcondition}
\end{equation}

\begin{theorem}\label{ThmModSpaceProduct1} \cite[Theorem 3.2]{Toft3}
Let there be given $p_j,q_j \in \left[1, \infty \right]$, $j=0,1,2$, such that $ R(p) \leq 1 $ and $R(q) \leq 0$, and let
$w_0,  w_1, w_2 \in  \mathscr{P}_E (\mathbb{R}^{2d})$ satisfy
\eqref{eqweightcondition}.
Then the map $(f_1,f_2) \mapsto f_1 \cdot f_2 $ on $C_0 ^\infty (\mathbb{R}^d)$ extends to a continuous map from $M_{w_1}^{p_1,q_1} (\mathbb{R}^d) \times M_{w_2}^{p_2,q_2} (\mathbb{R}^d)$ to $M_{w_0}^{p_0,q_0} (\mathbb{R}^d)$. The extension is unique when $p_j,q_j < \infty$, $j=1,2$.
\end{theorem}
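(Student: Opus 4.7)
The plan is to reduce to the boundary case $R(p)=1$, $R(q)=0$ and establish the estimate there; the general case then follows by the modulation-space embeddings in \eqref{embeddings}. The boundary identities $1/p_0=1/p_1+1/p_2$ and $1/q_0+1=1/q_1+1/q_2$ are precisely what make H\"older (in $x$) and Young (in $\omega$) available simultaneously, and the weight hypothesis \eqref{eqweightcondition} is tailored to fit exactly this combination.

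The technical engine I would use is the STFT factorization for products: if one picks a window $\phi\in \mathcal{S}^{(1)}(\mathbb{R}^d)$ admitting a pointwise decomposition $\phi=\phi_1\cdot\phi_2$ with $\phi_1,\phi_2\in \mathcal{S}^{(1)}(\mathbb{R}^d)$---a concrete choice is $\phi(t)=e^{-\pi|t|^2}$ and $\phi_j(t)=e^{-\pi|t|^2/2}$, both of which lie in $M^1_v$ for every $v\in\mathscr{P}_E$---then for $f_1,f_2\in C_0^\infty(\mathbb{R}^d)$ one has
$$V_\phi(f_1\cdot f_2)(x,\omega)=\bigl(V_{\phi_1}f_1(x,\cdot)\ast V_{\phi_2}f_2(x,\cdot)\bigr)(\omega).$$
This follows by recognizing the left-hand side as the Fourier transform in the time variable of $(f_1\cdot T_x\overline{\phi_1})\cdot(f_2\cdot T_x\overline{\phi_2})$ and applying \eqref{eqFourierTransformConvolution2}. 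Multiplying by $w_0(x,\omega)$, using \eqref{eqweightcondition} in the form $w_0(x,\omega)\precsim w_1(x,\omega_1)w_2(x,\omega-\omega_1)$ inside the convolution integral, and invoking Young's inequality \eqref{eqYoung} in $\omega$ controls the $L^{q_0}_\omega$ norm by a product of $L^{q_1}_\omega$ and $L^{q_2}_\omega$ norms, each carrying the appropriate weight $w_j(x,\cdot)$. Integrating in $x$ by H\"older's inequality at exponent $1/p_0=1/p_1+1/p_2$ yields
$$\|V_\phi(f_1 f_2)\|_{L^{p_0,q_0}_{w_0}}\precsim \|V_{\phi_1}f_1\|_{L^{p_1,q_1}_{w_1}}\|V_{\phi_2}f_2\|_{L^{p_2,q_2}_{w_2}},$$
and the window-independence of modulation space norms converts this into the claimed bound on $C_0^\infty\times C_0^\infty$.

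Extension by continuity is then standard: $C_0^\infty$ is dense in $M^{p_j,q_j}_{w_j}$ whenever $p_j,q_j<\infty$, so the bilinear map extends uniquely and continuously to $M^{p_1,q_1}_{w_1}\times M^{p_2,q_2}_{w_2}$. For the strict inequalities $R(p)<1$ or $R(q)<0$, I would first prove the bound into an auxiliary space $M^{p'_0,q'_0}_{w_0}$ with $1/p'_0=1/p_1+1/p_2$ and $1/q'_0+1=1/q_1+1/q_2$ (so that $p'_0\leq p_0$, $q'_0\leq q_0$), and then apply the embedding $M^{p'_0,q'_0}_{w_0}\hookrightarrow M^{p_0,q_0}_{w_0}$ from \eqref{embeddings}. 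The main obstacle I anticipate is choosing the windows $\phi,\phi_1,\phi_2$ coherently so that all three STFT-based norms are simultaneously equivalent to the defining modulation norms for the weights $w_0,w_1,w_2$; the Gaussian factorization sidesteps this because Gaussians sit in $M^1_v$ for every moderate $v$. A secondary subtlety is tracking the position variable $x$ carefully when distributing $w_0$ through the convolution, since the hypothesis \eqref{eqweightcondition} splits only in $\omega$ while holding $x$ fixed---exactly what one needs, but it must be applied in the correct order (Young in $\omega$ first, then H\"older in $x$).
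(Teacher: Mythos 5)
First, note that the paper does not prove this statement at all: it is quoted verbatim from \cite[Theorem 3.2]{Toft3}, so there is no internal proof to compare against. Your proposal is the classical argument (essentially the one behind \cite[Proposition 2.4.23]{Cordero1}, which the paper mentions immediately after the theorem), and its skeleton is sound: the factorization $V_{\phi_1\phi_2}(f_1 f_2)(x,\cdot)=V_{\phi_1}f_1(x,\cdot)\ast V_{\phi_2}f_2(x,\cdot)$ is correct (it is exactly the Fourier transform of $(f_1\overline{T_x\phi_1})\cdot(f_2\overline{T_x\phi_2})$ together with \eqref{eqFourierTransformConvolution2}), the Gaussian window choice legitimately lands all three windows in $M^1_v$ for every $v\in\mathscr{P}_E$, and density of $C_0^\infty$ in $M^{p,q}_w$ for finite exponents gives the unique continuous extension.

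There is, however, a concrete error in the step you single out as the delicate one: the order of H\"older and Young is backwards for the paper's convention. By \eqref{eqMixedNormSpace}, $\|F\|_{L^{p,q}}$ takes the $L^p$ norm in $x$ \emph{inside} and the $L^q$ norm in $\omega$ \emph{outside}. Applying Young in $\omega$ at fixed $x$ and then integrating in $x$ bounds $\bigl\|\,\|F(x,\cdot)w_0(x,\cdot)\|_{L^{q_0}_\omega}\bigr\|_{L^{p_0}_x}$, which is the mixed norm with the integrations in the reverse order, not $\|Fw_0\|_{L^{p_0,q_0}}$; these two norms are not comparable in general. The correct order is: dominate $|V_\phi(f_1f_2)|w_0$ by the convolution integral using \eqref{eqweightcondition}, pull the $L^{p_0}_x$ norm inside the $\omega_1$-integral by Minkowski's integral inequality, apply H\"older in $x$ with $\tfrac{1}{p_0}=\tfrac{1}{p_1}+\tfrac{1}{p_2}$ to get a genuine convolution of the functions $\omega\mapsto\|V_{\phi_j}f_j(\cdot,\omega)w_j(\cdot,\omega)\|_{L^{p_j}_x}$, and only then apply Young in $\omega$. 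A secondary, smaller gap: your reduction of $R(p)\le 1$ to the boundary case via $\tfrac{1}{p_0'}=\tfrac{1}{p_1}+\tfrac{1}{p_2}$ can force $p_0'<1$ (e.g.\ $p_1=p_2=1$), which exits the Banach range $[1,\infty]$ of the statement; one either works in quasi-Banach modulation spaces at that intermediate step (as Toft does) or argues the embedding differently. Neither issue is fatal, but both need to be repaired for the proof to close.
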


If the conditions in Theorem \ref{ThmModSpaceProduct1} hold, then its result can be restated as follows:
\begin{equation*}
||f_1 \cdot f_2||_{M_{w_0}^{p_0,q_0}} \precsim ||f_1||_{M_{w_1}^{p_1,q_1}} ~ ||f_2||_{M_{w_2}^{p_2,q_2}}.
\label{eqModulationSpace8}
\end{equation*}

\par

Note that once the cases $ R(p) = 1$ and $ R(q) = 0$ are proved, then the more general conditions
$ R(p) \leq 1 $ and $R(q) \leq 0$ are provided by the embeddings in \ref{embeddings}.
We also refer to  \cite[Proposition 2.4.23]{Cordero1} for the related result which follows from
Theorem \ref{ThmModSpaceProduct1} since the involved weights satisfy \ref{eqweightcondition}.

\par
Next we consider an extension of the Gabor product related to
$\mathcal{V}_{\mathscr{G},w}^{(N),p,q}(\mathbb{R}^{2d})$.

\par

Recall that, by Theorem \ref{TheoremDensityWindows} all elements of $ \mathcal{V}_{\mathscr{G},w}^{(N),p,q}(\mathbb{R}^{2d})$ are of the form $\sum_{n=1}^N V_{g_n}(f_n)$, with $f_n \in M_{w}^{p,q}(\mathbb{R}^{2d})$.

\begin{theorem}\label{TheoremModulationSpace4}
Let there be given $N,M \in \mathbb{N}$, $1 \leq p_j,q_j < \infty$, $j=0,1,2$, such that $ R(p) \leq 1 $ and $R(q) \leq 0$.
Furthermore, let $ w_j \in \mathscr{P}_E (\mathbb{R}^{2d})$ be $v_j -$moderate and such that \eqref{eqweightcondition} holds,
let  $g \in M^{1} _{v_0} (\mathbb{R}^d) \backslash \left\{0 \right\}$,
and let $\mathscr{G} =\left\{g_j\right\} \subset   M^{1} _{v}(\mathbb{R}^d)$,  $ v= \max \{ v_1, v_2 \},$ be an orthonormal basis of $L^2 (\mathbb{R}^d)$, and denote by $\overline{\mathscr{G}} =\left\{\overline{g}_j\right\}$ the orthonormal basis of the complex-conjugates.

Then the Gabor product $\natural_{g}$ extends from
$V_{\mathscr{G}}^{(N)} \left(\mathbb{R}^{2d}\right) \times
V_{\overline{\mathscr{G}}}^{(M)} \left(\mathbb{R}^{2d}\right) $
to a continuous map from $\mathcal{V}_{\mathscr{G},w_1}^{(N),p_1,q_1}(\mathbb{R}^{2d}) \times \mathcal{V}_{\overline{\mathscr{G}},w_2}^{(M),p_2,q_2}(\mathbb{R}^{2d}) $ to $ \mathcal{V}_{g,w_0}^{p_0,q_0}(\mathbb{R}^{2d})$, and
\begin{equation}
||F_1 \natural_{g} F_2||_{L_{w_0}^{p_0,q_0}} \precsim ||F_1||_{L_{w_1}^{p_1,q_1}} ~ ||F_2||_{L_{w_2}^{p_2,q_2}}~,
\label{eqModulationSpace8.1A}
\end{equation}
for all $F_1 \in \mathcal{V}_{\mathscr{G},w_1}^{(N),p_1,q_1}(\mathbb{R}^{2d})$,  $F_2 \in   \mathcal{V}_{\overline{\mathscr{G}},w_2}^{(M),p_2,q_2}(\mathbb{R}^{2d}) $.
\end{theorem}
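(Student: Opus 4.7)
The plan is to exploit the orthonormality of $\mathscr{G}$ to collapse the Gabor product on the closure spaces into a finite sum of pointwise products, and then invoke Theorem \ref{ThmModSpaceProduct1} together with the quantitative bounds already contained in the proof of Theorem \ref{TheoremDensityWindows}.

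First I would verify a structural identity on the dense subspaces $V_{\mathscr{G}}^{(N)}\times V_{\overline{\mathscr{G}}}^{(M)}$: for $F_1=\sum_{n=1}^N V_{g_n}(f_n)$ and $F_2=\sum_{m=1}^M V_{\overline{g_m}}(h_m)$ with $f_n,h_m\in M^1_v\subset S_0(\mathbb{R}^d)$, bilinearity of $\natural_g$ together with Theorem \ref{theoremWindowedProduct1} and $\langle g_m,g_n\rangle_{L^2}=\delta_{nm}$ gives
$$F_1\natural_g F_2 \;=\; \sum_{n=1}^N\sum_{m=1}^M \langle g_m,g_n\rangle_{L^2(\mathbb{R}^d)} V_g(f_n\cdot h_m) \;=\; \sum_{k=1}^{\min(N,M)} V_g(f_k\cdot h_k).$$
This closed form is the key observation and dictates the shape of the extension.

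Next I would define the extension by the same formula. For arbitrary $F_1\in\mathcal{V}^{(N),p_1,q_1}_{\mathscr{G},w_1}$ and $F_2\in\mathcal{V}^{(M),p_2,q_2}_{\overline{\mathscr{G}},w_2}$, Theorem \ref{TheoremDensityWindows} supplies (unique) representations with coefficients $f_n\in M^{p_1,q_1}_{w_1}$ and $h_m\in M^{p_2,q_2}_{w_2}$, and I set
$$F_1\natural_g F_2 \;:=\; \sum_{k=1}^{\min(N,M)} V_g(f_k\cdot h_k).$$
This coincides with the original Gabor product on the $M^1_v$-valued subspace. The weight condition \eqref{eqweightcondition} and the Young hypotheses $R(p)\leq 1$, $R(q)\leq 0$ make Theorem \ref{ThmModSpaceProduct1} applicable, so $f_k\cdot h_k\in M^{p_0,q_0}_{w_0}$ with $\|f_k h_k\|_{M^{p_0,q_0}_{w_0}}\precsim \|f_k\|_{M^{p_1,q_1}_{w_1}}\|h_k\|_{M^{p_2,q_2}_{w_2}}$. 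Since $g\in M^1_{v_0}$, the STFT $V_g$ maps $M^{p_0,q_0}_{w_0}$ continuously into $L^{p_0,q_0}_{w_0}$, and density of $M^1_{v_0}$ in $M^{p_0,q_0}_{w_0}$ (Example \ref{ExampleSzero}) places each term $V_g(f_k h_k)$ in $\mathcal{V}^{p_0,q_0}_{g,w_0}$, hence also the finite sum.

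For the quantitative bound \eqref{eqModulationSpace8.1A}, I would recycle the calculation displayed in \eqref{eqTheoremDensityWindows6}: for a fixed reference window $g_0\in M^1_v$ it delivers estimates of the form $\|f_n\|_{M^{p_1,q_1}_{w_1}}\precsim \|V_{g_0}g_n\|_{L^1_v}\|F_1\|_{L^{p_1,q_1}_{w_1}}$, and analogously for $h_m$. Chaining these with the pointwise product inequality above and summing the at most $\min(N,M)$ diagonal terms yields the required estimate, and bilinear continuity follows. The hardest piece will be bookkeeping rather than analysis: threading the weights $w_0,w_1,w_2$ coherently through the compatibility relation \eqref{eqweightcondition} and through the $L^{p,q}_w\leftrightarrow M^{p,q}_w$ correspondence, and confirming that the extension is unambiguous across the closure space. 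The latter is automatic because the coefficient-extraction map $F\mapsto (f_n)$ is continuous by that same estimate, so the product does not depend on the approximating sequence.
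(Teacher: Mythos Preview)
Your argument is correct, and it takes a more direct route than the paper's own proof. Both begin with the same diagonalisation
\[
F_1\natural_g F_2=\sum_{k=1}^{\min(N,M)} V_g(f_k\cdot h_k)=V_g\Big(\sum_k f_k\cdot h_k\Big),
\]
but then diverge. The paper rewrites $\sum_k f_k h_k$ through a polarization-type identity
\[
\sum_{k=1}^N f_k h_k=\frac{1}{2^N}\sum_{[\Sigma]}\Big(\sum_j\sigma_j f_j\Big)\Big(\sum_m\sigma_m h_m\Big),
\]
applies Theorem \ref{ThmModSpaceProduct1} to each factor, and then, to pass from $\|\sum_j\sigma_j f_j\|_{M^{p_1,q_1}_{w_1}}$ back to $\|F_1\|_{L^{p_1,q_1}_{w_1}}$, chooses tailor-made windows $\gamma_{[\Sigma]}=\sum_j\sigma_j g_j$ so that the ratios $\sigma_j/\langle\gamma_{[\Sigma]},g_j\rangle$ are constant in $j$. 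You bypass this combinatorics entirely: you bound each $\|f_k\|_{M^{p_1,q_1}_{w_1}}$ individually by $\|F_1\|_{L^{p_1,q_1}_{w_1}}$ using $V_{g_k}^\ast F_1=f_k$ and \eqref{eqPre7}, then apply Theorem \ref{ThmModSpaceProduct1} termwise and sum. Both methods produce constants depending on $g_1,\dots,g_{\max(N,M)}$ and growing with $N$ (which is exactly what the remark following the theorem warns about and what motivates Conjecture \ref{Conjecture1}); neither constant is obviously sharper. Your approach is shorter and conceptually cleaner, while the paper's polarization trick is perhaps aimed at keeping the structure of a single product $(\sum\cdots)(\sum\cdots)$ intact as long as possible---but it does not buy a better bound here.
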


We remark that the hidden constants appearing in (\ref{eqModulationSpace8.1A}) may depend on $d$, the indices $p_0,q_0, \cdots$, the weights $w_0,w_1, \cdots$, $s_0,t_0, \cdots$ and on the elements $g_1, \cdots, g_K$ $(K=\text{max} \left\{N,M\right\})$ of the basis, but not on $F_1$ and $F_2$.

\begin{proof}
We aim to prove that
\begin{equation}
\begin{array}{c}
\|\sum_{n=1}^N V_{g_n} (f_n) \natural_g \sum_{m=1}^N V_{\overline{g_m}} (h_m) \|_{L_{w_0}^{p_0,q_0}} \precsim  \\
\precsim \|\sum_{n=1}^N V_{g_n} (f_n)\|_{L_{w_1}^{p_1,q_1}} \|\sum_{m=1}^N V_{\overline{g_m}} (h_m)\|_{L_{w_2}^{p_2,q_2}}~,
\end{array}
\label{eqTheorem32}
\end{equation}
where the weights and the Lebesgue parameters satisfy the conditions of Theorem \ref{ThmModSpaceProduct1}.

\par

Note that we can assume $N=M$. If, say $N>M$, then we can write: $\sum_{m=1}^M  V_{\overline{g_m}}(h_m)=\sum_{m=1}^N  V_{\overline{g_m}}(h_m)$, with $h_{m}=0$, for $m= M+1, \cdots, N$.

Since $(g_n)_n$ is an orthonormal basis, we have from (\ref{eqWindowedProduct2}):
\begin{equation}
V_{g_n}(f_n) \natural_g V_{\overline{g_m}} (h_m)= \delta_{n,m} V_{g} (f_n \cdot h_n)~,
\label{eqTheorem33}
\end{equation}
where $ \delta_{n,m}$ is the Kronecker delta. Consequently:
\begin{equation}
\begin{array}{c}
\sum_{n=1}^N V_{g_n} (f_n) \natural_g \sum_{m=1}^N V_{\overline{g_m}} (h_m)=\sum_{n,m=1}^N V_{g_n} (f_n) \natural_g V_{\overline{g_m}} (h_m)=\\
\\
=\sum_{n=1}^N V_{g} (f_n \cdot h_n)=V_{g} \left(\sum_{n=1}^Nf_n \cdot h_n\right)~.
\end{array}
\label{eqTheorem34}
\end{equation}
We then have:
\begin{equation}
\begin{array}{c}
\|\sum_{n=1}^N V_{g_n} (f_n) \natural_g \sum_{m=1}^N V_{\overline{g_m}} (h_m) \|_{L_{w_0}^{p_0,q_0}}=\\
\\
=\|V_{g} \left(\sum_{n=1}^Nf_n \cdot h_n\right)\|_{L_{w_0}^{p_0,q_0}}=\|\sum_{n=1}^Nf_n \cdot h_n\|_{M_{w_0}^{p_0,q_0}}
\end{array}
\label{eqTheorem35}
\end{equation}

To make the proof clearer, let us start with the case $N=2$.

We can write $f_1 \cdot h_1 +f_2 \cdot h_2$ as
\begin{equation}
f_1 \cdot h_1 +f_2 \cdot h_2= \frac{1}{2} \left((f_1+f_2) \cdot (h_1+h_2)+(f_1-f_2) \cdot(h_1-h_2) \right)~.
\label{eqTheorem37}
\end{equation}

However, in order to consider a generalization to the case $N>2$ it is better to consider the following more redundant decomposition:
\begin{equation}
\begin{array}{c}
f_1 \cdot h_1 +f_2 \cdot h_2= \frac{1}{2^2} \left((f_1+f_2) \cdot (h_1+h_2) +(f_1-f_2) \cdot (h_1-h_2)
+\right.\\
\\
\left.+(-f_1+f_2) \cdot (-h_1+h_2)+(-f_1-f_2) \cdot(-h_1-h_2) \right)~,
\end{array}
\label{eqTheorem37A}
\end{equation}
For $N \geq 2$, we can write:
\begin{equation}
\begin{array}{c}
f_1 \cdot h_1 + \cdots f_N \cdot h_N= \frac{1}{2^N} \sum_{\sigma_1, \cdots, \sigma_N=\pm} \left(\sigma_1 f_1 + \cdots \sigma_N f_N \right) \cdot \left(\sigma_1 h_1 + \cdots \sigma_N h_N \right)=\\
\\
= \frac{1}{2^N} \sum_{\left[\Sigma\right]} \sum_{j=1}^N \sigma_j f_j \sum_{m=1}^N \sigma_m h_m~,
\end{array}
\label{eqTheorem37B}
\end{equation}
where we used the compact notation $\sum_{\sigma_1, \cdots, \sigma_N=\pm}=\sum_{\left[\Sigma\right]}$.

It follows from the triangle inequality, Theorem \ref{ThmModSpaceProduct1} and \eqref{eqPre4}, that
\begin{equation}
\begin{array}{c}
\|\sum_{n=1}^N f_n \cdot h_n  \|_{M_{w_0}^{p_0,q_0}} = \\
\\
=\frac{1}{2^N}\| \sum_{\left[\Sigma\right]}  \left(\sum_{j=1}^N \sigma_j f_j\right)\cdot \left( \sum_{m=1}^N \sigma_m h_m \right) \|_{M_{w_0}^{p_0,q_0}}\\
\\
\leq \frac{1}{2^N} \sum_{\left[\Sigma\right]} \| \left(\sum_{j=1}^N \sigma_j f_j\right)\cdot \left( \sum_{m=1}^N \sigma_m h_m \right) \|_{M_{w_0}^{p_0,q_0}}
\\
\\
\precsim  \frac{1}{2^N }  \sum_{\left[\Sigma\right]}\| \sum_{j=1}^N \sigma_j f_j \|_{M_{w_1}^{p_1,q_1}}  \| \sum_{m=1}^N \sigma_m h_m \|_{M_{w_2}^{p_2,q_2}} \\
\\
 \asymp \frac{1}{2^N}  \sum_{\left[\Sigma\right]}  \|\sum_{j=1}^N \frac{\sigma_j}{ \langle \gamma_{\left[\Sigma\right]} , g_j \rangle_{L^2(\mathbb{R}^d)}} V_{\gamma_{\left[\Sigma\right]}}^{\ast} V_{g_j}(f_j) \|_{M_{w_1}^{p_1,q_1}}  \times \\
 \\
 \times \|\sum_{m=1}^N \frac{\sigma_m}{ \langle \eta_{\left[\Sigma\right]} , \overline{g_m} \rangle_{L^2(\mathbb{R}^d)}} V_{\eta_{\left[\Sigma\right]}}^{\ast} V_{\overline{g_m}}(h_m) \|_{M_{w_2}^{p_2,q_2}}
\end{array}
\label{eqTheorem38}
\end{equation}

The windows $\gamma_{\left[\Sigma\right]},~\eta_{\left[\Sigma\right]} \in M^1 _v$ can be taken are arbitrary, besides the obvious conditions
$$
\langle \gamma_{\left[\Sigma\right]} , g_j \rangle_{L^2(\mathbb{R}^d)} \neq 0, \quad
\langle \eta_{\left[\Sigma\right]} , \overline{g_j} \rangle_{L^2(\mathbb{R}^d)}\neq 0, \quad j= 1,\dots, N.
$$

By choosing
$$
\gamma_{\left[\Sigma\right]} = \sum_{j=1} ^N \sigma_j g_j \;\;\; \text{and} \;\;\;
\eta_{\left[\Sigma\right]} =  \sum_{m=1} ^N \sigma_m \overline{g_m},
$$
we obtain
\begin{equation}
\begin{array}{l}
\frac{\sigma_j}{ \langle \gamma_{\left[\Sigma\right]} , g_j \rangle_{L^2(\mathbb{R}^d)}}= \frac{\sigma_1}{\langle \gamma_{\left[\Sigma\right]} , g_1 \rangle_{L^2(\mathbb{R}^d)}},~ \forall j=1, \cdots, N, \\
\\
\frac{\sigma_m}{ \langle \eta_{\left[\Sigma\right]} , \overline{g_m} \rangle_{L^2(\mathbb{R}^d)}}= \frac{\sigma_1}{\langle \eta_{\left[\Sigma\right]} , \overline{g_1} \rangle_{L^2(\mathbb{R}^d)}},~ \forall m=1, \cdots, N
\end{array}
\label{eqTheorem39}
\end{equation}

From \eqref{eqTheorem38},\eqref{eqTheorem39} and \eqref{eqPre7}, it follows that:
\begin{equation}
\begin{array}{c}
\|\sum_{n=1}^N f_n \cdot h_n  \|_{M_{w_0}^{p_0,q_0}} \\
\\
\precsim  \frac{1}{2^N}  \sum_{\left[\Sigma\right]}\frac{\|V_{g}\gamma_{\left[\Sigma\right]}\|_{L_v^1} \|V_{g}\eta_{\left[\Sigma\right]}\|_{L_v^1}}{ \left|\langle \gamma_{\left[\Sigma\right]} , g_1 \rangle_{L^2(\mathbb{R}^d)}  \langle \eta_{\left[\Sigma\right]} , \overline{g_1} \rangle_{L^2(\mathbb{R}^d)} \right|} \|\sum_{j=1}^N V_{g_j} f_j  \|_{L_{w_1}^{p_1,q_1}}  \|\sum_{m=1}^N V_{\overline{g_m}} h_m  \|_{L_{w_2}^{p_2,q_2}} ~.
\end{array}
\label{eqTheorem310}
\end{equation}

Finally, from \eqref{eqTheorem35} and \eqref{eqTheorem310}, we obtain \eqref{eqTheorem32}, which proves the theorem.
\end{proof}


\begin{remark}
The condition $R(q) \leq 0 $ in Theorem \ref{ThmModSpaceProduct1}
can be relaxed into $R(q)\leq 1/2$ when the involved weights are of a polynomial growth. Then we may use
\cite[Theorem 2.4 (2)]{Toft1} to obtain extensions of the Gabor product similar to the ones given in
Theorem \ref{ThmModSpaceProduct1}. We omit the details, since the conditions on the involved weights then become quite technical,
cf. \cite{Toft1}.
\end{remark}

Theorem \ref{TheoremModulationSpace4}  holds for arbitrary $N,M \in \mathbb{N}$. Using the density theorem (Theorem \ref{Theorem2}), we may be tempted to take the limit $N,M \to \infty$ and thus extend the results to the entire mixed-norm spaces. However, with the technique used in the proof of Theorem \ref{TheoremModulationSpace4}, the upper bound constants may grow unboundedly as $N,M \to \infty$. Nevertheless, we are convinced that the result can be extended through some other method, and we state it as a conjecture.

\begin{conjecture}\label{Conjecture1} Let the assumptions on the Lebesgue parameters and weight functions from Theorem
\ref{TheoremModulationSpace4} hold true.
Then the inequality (\ref{eqModulationSpace8.1A}) holds for all $F_1 \in L_{w_1}^{p_1,q_1}(\mathbb{R}^{2d})$,  $F_2 \in  L_{w_2}^{p_2,q_2}(\mathbb{R}^{2d}) $.
\end{conjecture}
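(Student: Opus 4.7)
\textbf{Proof plan for Conjecture \ref{Conjecture1}.}

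The starting point is the identity
\[
F_1 \natural_{g} F_2 = V_g A_{F_1,F_2}
\]
established in the proof of Lemma~\ref{Lemma1} (see \eqref{eqWindowedProduct4}), where $\widehat{A}_{F_1,F_2}$ is given by \eqref{eqWindowedProduct5}. Since $\|V_g \cdot\|_{L^{p,q}_w} = \|\cdot\|_{M^{p,q}_w}$ by the very definition of the modulation-space norm, the conjecture reduces to showing
\begin{equation*}
\|A_{F_1,F_2}\|_{M^{p_0,q_0}_{w_0}(\mathbb{R}^{d})} \precsim \|F_1\|_{L^{p_1,q_1}_{w_1}(\mathbb{R}^{2d})} \|F_2\|_{L^{p_2,q_2}_{w_2}(\mathbb{R}^{2d})}.
\end{equation*}
An elementary inverse Fourier calculation applied to \eqref{eqWindowedProduct5} yields the explicit representation
\[
A_{F_1,F_2}(y) = \int_{\mathbb{R}^{d}} (\mathcal{F}_2^{-1} F_1)(x',y)\,(\mathcal{F}_2^{-1} F_2)(x',y)\,dx',
\]
where $\mathcal{F}_2^{-1}$ denotes the partial inverse Fourier transform in the frequency variable. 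The integration in $x'$ is a Hölder-type product operation, while the partial Fourier transforms hide a convolution in the frequency variable once the STFT $V_g$ is reinstated.

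My plan is then a two-step density argument. First, restrict to $F_1,F_2 \in S_0(\mathbb{R}^{2d})$, where all integrals are absolutely convergent and $A_{F_1,F_2} \in S_0(\mathbb{R}^{d})$ by the tensor argument leading to \eqref{eq:hatAinSzero}. Expanding $V_g A_{F_1,F_2}(x,\omega)$ by unfolding the two partial Fourier transforms, the frequency integral becomes a convolution of $F_1(x',\cdot)$ with $F_2(x',\cdot)$ twisted against $\overline{\widehat{g}}(\xi-\omega)$, while the time integral is a plain product in $x'$. Applying mixed-norm Young's inequality in the $\omega$-variable yields precisely the condition $R(q)\leq 0$; applying mixed-norm Hölder in the $x$-variable yields the condition $R(p)\leq 1$. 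The weight condition \eqref{eqweightcondition}, together with the $v_j$-moderacy of the $w_j$ and \eqref{eq:atmostexp}, is exactly what is needed to push the weights through both inequalities with a constant depending only on $g$, the weights, and the exponents --- but not on $F_1,F_2$. Second, since $1\leq p_j,q_j < \infty$ implies $S_0(\mathbb{R}^{2d})$ is dense in $L^{p_j,q_j}_{w_j}(\mathbb{R}^{2d})$ (Example \ref{ExampleSzero}), the uniform bound obtained on $S_0\times S_0$ allows a standard continuous extension of $\natural_g$ to the full mixed-norm spaces.

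The main obstacle is step~2 above: executing the multilinear mixed-norm Hölder--Young estimate cleanly, with the correct transfer of the weights $w_1,w_2\to w_0$ through both the convolution against $\overline{\widehat{g}}$ and the $x'$-integration, is what prevents the estimate from being a one-line application of Theorems~\ref{TheoremYoung} and \ref{ThmModSpaceProduct1}. It is precisely this technical bookkeeping that the Gabor-frame argument of Theorem~\ref{TheoremModulationSpace4} tries (and fails) to avoid, incurring the $2^N$ blow-up; the direct integral approach sidesteps that mechanism entirely, but replaces it with a nontrivial weighted mixed-norm multilinear estimate.

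If the direct route turns out to be too intricate --- in particular when the weights have genuine sub-exponential growth rather than being of polynomial type --- a viable alternative is complex interpolation. One would verify the inequality at a small number of corner configurations in the $(p_j,q_j)$-cube where it is essentially tautological; for example, $p_1=q_1=p_2=q_2=1$, $p_0=q_0=\infty$, where the bound follows from a pointwise estimate on \eqref{eqWindowedProduct1} using $|\widehat{g}|\leq \|\widehat{g}\|_{\infty}$ and $\int|F_1||F_2|\,dx'\,d\omega'\,d\omega''\leq \|F_1\|_{L^1}\|F_2\|_{L^1}$. Interpolating over the admissible region cut out by $R(p)\leq 1$ and $R(q)\leq 0$ would then yield the conjecture, up to the usual caveat of identifying enough endpoints to sweep out the full interior.
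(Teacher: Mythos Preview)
The statement is labelled a \emph{Conjecture} in the paper, and the authors explicitly say they were unable to prove it in full generality; Theorem~\ref{thm:extension-version1} is offered as a partial substitute. So there is no paper proof to compare against, and the question is simply whether your plan closes the gap.

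It does not, and the flaw is structural rather than technical. In the defining integral
\[
(F_1 \natural_g F_2)(x,\omega)=\int_{\mathbb{R}^{3d}}\overline{\widehat g(\omega'+\omega''-\omega)}\,F_1(x',\omega')F_2(x',\omega'')\,e^{2\pi i x\cdot(\omega'+\omega''-\omega)}\,dx'\,d\omega'\,d\omega'',
\]
the variable $x'$ is integrated out entirely; the output variable $x$ appears only in the oscillatory phase, i.e.\ through the STFT $V_g$ applied to $A_{F_1,F_2}$. Your sentence ``mixed-norm H\"older in the $x$-variable yields the condition $R(p)\le 1$'' conflates $x$ and $x'$. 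H\"older in $x'$ forces $1/p_1+1/p_2\ge 1$ and produces a scalar $\|F_1(\cdot,\omega')\|_{L^{p_1}}\|F_2(\cdot,\omega'')\|_{L^{p_2}}$; it says nothing about $p_0$. The $L^{p_0}_x$ behaviour of the output is governed solely by which modulation space $A_{F_1,F_2}$ lands in, and that is exactly the route the paper already takes in Theorem~\ref{thm:extension-version1}, arriving only at $p_0=r\ge\max\{q_0,q_0'\}$ (via the Fourier--Lebesgue embedding \eqref{eq:fourierlebesguesembedding}), not the full range $R(p)\le 1$. In other words, your ``direct integral approach'' is Theorem~\ref{thm:extension-version1} in disguise, and reproduces its restriction rather than removing it.

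The interpolation fallback inherits the same limitation. The endpoint you name, $(p_1,q_1)=(p_2,q_2)=(1,1)$ with $(p_0,q_0)=(\infty,\infty)$, lies inside the region already handled by Theorem~\ref{thm:extension-version1}. To reach points with small $p_0$ and large $p_1,p_2$ (which is where $R(p)\le 1$ becomes interesting) you would need a qualitatively different endpoint estimate---one where the $x$-decay of $V_gA_{F_1,F_2}$ is tied to the $x'$-integrability of $F_1,F_2$---and no mechanism for producing such an estimate is proposed. This is precisely the obstruction the authors flag after Theorem~\ref{TheoremModulationSpace4}, and your plan does not overcome it.
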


Although we were unable to prove the conjecture in its full generality, we can nevertheless prove that one can extend the product continuously to specific mixed-norm Lebesgue spaces (see Theorem \ref{thm:extension-version1}).

For these extensions we use further properties of modulation spaces. Firstly, we use
embedding relations between Fourier--Lebesgue and modulations spaces:
\begin{equation}
\mathcal{F} L ^{q}  (\mathbb{R}^d) \hookrightarrow M^{ p, q}  (\mathbb{R}^d), \;\;\; 1\leq q \leq \infty, \;\;\;
p \geq \text{max} \{ q, q'  \}.
\label{eq:fourierlebesguesembedding}
\end{equation}
Recall,  $ f \in \mathcal{F} L ^{q}  (\mathbb{R}^d)$ if $ \hat f \in  L ^{q}  (\mathbb{R}^d)$ and
$$
\| f \|_{\mathcal{F} L ^{q} } = \| \hat f\|_{ L ^{q}}.
$$
For the embeddings \eqref{eq:fourierlebesguesembedding} we refer to  \cite[Corollary 1.1]{Cunanan1} and \cite[Proposition 1.7]{Toft4}.
We also note that Fourier--Lebesgue and modulation spaces are locally the same, see \cite[Corollary 2]{Toft2}.

\par

Apart from these embeddings we also use the fact  that the map $f \mapsto  \langle D \rangle ^{s_0} f$  is a homeomorphism from $ M^{p,q} _{t, s+s_0} (\mathbb{R}^d)$
to  $ M^{p,q} _{t, s} (\mathbb{R}^d)$, $ 1 \leq p,q \leq \infty$, $ t,s,s_0 \in \mathbb{R}$,
see e.g. \cite[Theorem 2.3.14]{Cordero1} and \cite[Corollary 2.3]{Toft5}. Here $D =\frac{1}{\sqrt{-1}} \partial$ is the operator of differentiation, and $\langle D \rangle ^{s_0} f = \mathcal{F} \langle \cdot \rangle ^{s_0} \hat f$ is the Fourier multiplier, $s_0 \in \mathbb{R}$.

\begin{theorem}
\label{thm:extension-version1}
Let there be given $ 1\leq p,q_1,q_2 \leq \infty,$ and let $p'$ and $ q_0$ be determined by
$\frac{1}{p} + \frac{1}{p^{\prime}}=1$ and  $ R(q)=0$.
Then the following holds:
\begin{enumerate}
\item If $r \geq \text{max} \{ q_0, q' _0 \},$ and  $g \in S_0 (\mathbb{R}^d) \backslash \left\{0 \right\}$, then the
Gabor product $\natural_{g}$ extends from $S_0 (\mathbb{R}^{2d}) \times S_0 (\mathbb{R}^{2d})$
to a continuous map from $L^{p,q_1}  (\mathbb{R}^{2d}) \times L^{p',q_2}  (\mathbb{R}^{2d})$ to $L^{r,q_0} (\mathbb{R}^{2d})$, and
\begin{equation*}
||F \natural_{g} H||_{L^{r,q_0}} \precsim ||F||_{L^{p,q_1}} ~ ||H||_{L^{p',q_2}}.
\label{eqModulationSpace8.1}
\end{equation*}

\item If $r \geq \text{max} \{ q_0, q' _0 \}$,
$t_1 + t_2 \geq 0,$ $ s_1+s_2 \geq 0,$ $ \text{min} \{s_1, s _2 \} \geq s$,
 and  $g \in M^1 _{0,s} (\mathbb{R}^d) \backslash \left\{0 \right\}$,
then the Gabor product $\natural_{g}$ extends from $M^{1} _{t_1,s_1} (\mathbb{R}^{2d}) \times M^{1} _{t_2,s_2} (\mathbb{R}^{2d})$
to a continuous map from $L_{t_1,s_1} ^{p,q_1}  (\mathbb{R}^{2d}) \times L_{t_2,s_2} ^{p',q_2}  (\mathbb{R}^{2d})$ to
$L _{0,s} ^{r,q_0} (\mathbb{R}^{2d})$, and
\begin{equation*}
||F \natural_{g} H ||_{ L_{0,s}^{r,q_0}} \precsim ||F ||_{L_{t_1,s_1} ^{p,q_1}} ~ || H  ||_{L_{t_2,s_2} ^{p',q_2}}.
\label{eqModulationSpace8.2}
\end{equation*}
\end{enumerate}

\end{theorem}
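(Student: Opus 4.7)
The plan is to combine the representation $(F \natural_g H)(x,\omega) = V_g A_{F,H}(x,\omega)$ from Lemma \ref{Lemma1} with the modulation-space toolkit of Subsection \ref{subs:modsp}. First I would observe that by the very definition of the modulation-space norm, $\|V_g A_{F,H}\|_{L^{r,q_0}} \asymp \|A_{F,H}\|_{M^{r,q_0}}$, and that the hypothesis $r \geq \max\{q_0,q_0'\}$ activates the Fourier--Lebesgue embedding \eqref{eq:fourierlebesguesembedding}, giving $\|A_{F,H}\|_{M^{r,q_0}} \precsim \|\widehat{A_{F,H}}\|_{L^{q_0}}$. The whole estimate therefore reduces to controlling $\|\widehat{A_{F,H}}\|_{L^{q_0}}$.

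For part (1), the decisive step is to apply H\"older's inequality in the inner variable $x'$ \emph{pointwise} in $(\omega',\xi)$, and only afterwards bring in Young's inequality in $\omega'$. Starting from
\begin{equation*}
\widehat{A_{F,H}}(\xi) = \int_{\mathbb{R}^{2d}} F(x',\omega')\,H(x',\xi-\omega')\,dx'\,d\omega',
\end{equation*}
H\"older in $x'$ with conjugate exponents $p,p'$ gives $|\widehat{A_{F,H}}(\xi)| \leq (\Phi_F \star \Phi_H)(\xi)$, where $\Phi_F(\omega) := \|F(\cdot,\omega)\|_{L^p(\mathbb{R}^d_x)}$ and $\Phi_H(\omega) := \|H(\cdot,\omega)\|_{L^{p'}(\mathbb{R}^d_x)}$. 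Since $R(q)=0$, Young's inequality (Theorem \ref{TheoremYoung}) yields $\|\widehat{A_{F,H}}\|_{L^{q_0}} \leq \|\Phi_F\|_{L^{q_1}}\|\Phi_H\|_{L^{q_2}} = \|F\|_{L^{p,q_1}}\|H\|_{L^{p',q_2}}$, matching precisely the definition \eqref{eqMixedNormSpace} with its required inner/outer ordering.

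Part (2) follows the same scheme once the weights are distributed correctly. Using the homeomorphism $\langle D \rangle^{s} : M^{r,q_0}_{0,s} \to M^{r,q_0}$ mentioned after \eqref{embeddings}, I would reduce the target norm to $\|\langle \cdot \rangle^{s} \widehat{A_{F,H}}\|_{L^{q_0}}$. The weight $\langle \xi \rangle^s$ is then split under the integral by Peetre-type inequalities
\begin{equation*}
\langle \xi \rangle^s \precsim \langle \omega' \rangle^{s_1}\langle \xi - \omega' \rangle^{s_2}, \qquad 1 \leq \langle x' \rangle^{t_1+t_2},
\end{equation*}
whose validity is exactly what the hypotheses $\min\{s_1,s_2\} \geq s$, $s_1+s_2\geq 0$, and $t_1+t_2\geq 0$ are designed to ensure (with a short case split on the sign of $s$ for the first inequality). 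Absorbing the weights into the integrand and running the same H\"older-then-Young argument as in part (1) yields the weighted bound.

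The extension from $S_0 \times S_0$ (resp.\ $M^1_{t_1,s_1}\times M^1_{t_2,s_2}$) to the ambient Lebesgue product space is then a standard density argument: since $p,q_1,q_2 < \infty$, the relevant Feichtinger-type algebras are dense in the corresponding (weighted) mixed-norm Lebesgue spaces by Example \ref{ExampleSzero}, and a bounded bilinear map on a dense product extends uniquely by continuity. The main subtlety in the whole argument is the \emph{order} in which one applies H\"older and Young: performing Young in $\omega'$ first and H\"older in $x'$ last produces the transposed mixed norm $L^p_x(L^{q_1}_\omega)$, which is not the convention of \eqref{eqMixedNormSpace} and can be compared to it only under restrictive Minkowski conditions on $p$ and $q_1$. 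Doing H\"older in $x'$ pointwise first sidesteps this entirely and produces the stated norms.
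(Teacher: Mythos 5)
Your proposal is correct and follows essentially the same route as the paper: the identity $F\natural_g H = V_g A_{F,H}$, the Fourier--Lebesgue embedding under $r\geq\max\{q_0,q_0'\}$ (plus the $\langle D\rangle^s$ homeomorphism in the weighted case) to reduce everything to $\|\widehat{A_{F,H}}\|_{L^{q_0}}$ (resp.\ $\|\widehat{A_{F,H}}\|_{L^{q_0}_s}$), and then H\"older in $x'$ pointwise followed by Young in $\omega'$, exactly in the order you emphasize. The only cosmetic difference is that in part (2) you derive the weighted Young inequality inline via Peetre-type splitting of $\langle\xi\rangle^s$, whereas the paper cites it directly from the literature; note also that your closing density argument assumes $p,q_1,q_2<\infty$, while the statement (and the paper's estimate, which holds directly for the absolutely convergent defining integral) also covers infinite exponents.
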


Note that {\em 1.} follows  from {\em 2.} when $ t_1 = t_2 = s_1 = s_2 = s = 0.$
Therefore, it is enough to show {\em 2.}. However, to emphasize  an extra argument used in the weighted case,
we give  a detailed proof as follows.

\par

\begin{proof}
{\em 1.} We first note that $ \widehat{A}_{F,H} \in L^{q_0} (\mathbb{R}^{d}), $ where $ \widehat{A}_{F,H} $ is given by \eqref{eqWindowedProduct5}. This follows from the applications of H\"older's inequality with respect to the first variable, and Young's inequality with respect to the second variable:
\begin{equation}
\begin{array}{c}
\| \widehat{A}_{F,H} (\xi) \|_{L^{q_0}}
= \int_{\mathbb{R}^{d}}
| \int_{\mathbb{R}^{2d}} F(x^{\prime}, \omega^{\prime}) H(x^{\prime}, \xi- \omega^{\prime}) dx^{\prime} d \omega^{\prime}|^{q_0} d\xi )^{1/q_0}
 \leq \\
\\
\leq
(\int_{\mathbb{R}^{d}} (\int_{\mathbb{R}^{2d}} | F(x^{\prime}, \omega^{\prime}) H(x^{\prime}, \xi- \omega^{\prime})|
dx^{\prime} d \omega^{\prime} )^{q_0} d\xi )^{1/q_0} \leq \\
\\
\leq
(\int_{\mathbb{R}^{d}}
(\int_{\mathbb{R}^{d}}
\| F (\cdot, \omega^{\prime}) \|_{L^p}  \| H (\cdot, \xi - \omega^{\prime}) \|_{L^{p'}} d \omega^{\prime} )^{q_0}
d\xi )^{1/q_0} = \\
\\
=
(\int_{\mathbb{R}^{d}}
(\int_{\mathbb{R}^{d}} \tilde F( \omega^{\prime}) \tilde H ( \xi - \omega^{\prime})  d \omega^{\prime} )^{q_0}
d\xi )^{1/q_0} =\\
\\
=
(\int_{\mathbb{R}^{d}} | (\tilde F * \tilde H) ( \xi)|^{q_0}
d\xi )^{1/q_0}
\precsim \\
\\
\precsim \| \tilde F \|_{L^{q_1}}  ~ \| \tilde H \|_{L^{q_2}}
=
||F||_{L^{p,q_1}} ~ ||H||_{L^{p',q_2}},
\end{array}
\label{eqWindowedProduct5A}
\end{equation}
where
$\tilde F (\omega^{\prime}) = \| F (\cdot, \omega^{\prime}) \|_{L^p} $ and
$ \tilde H (\omega^{\prime}) = \| H (\cdot, \omega^{\prime}) \|_{L^{p'}} $.
By the assumptions it follows that
$\tilde F (\omega^{\prime}) \in L^{q_1} $, $
\tilde H (\omega^{\prime})  \in L^{q_2}, $ so the
last inequality in \eqref{eqWindowedProduct5A} follows from the  Young inequality.

\par

Therefore, $A_{F,H} \in \mathcal{F} L ^{q_0} (\mathbb{R}^d),$ and by \eqref{eq:fourierlebesguesembedding} it follows that
$ A_{F,H} \in  M^{ r, q_0}  (\mathbb{R}^d) $ when $r \geq \text{max} \{ q_0 , q' _0  \}.$ Thus we obtain
\begin{equation*}
\begin{array}{c}
||F \natural_{g} H||_{L^{r,q_0}}
= \| V_g  A_{F,H} \|_{L^{r,q_0}} = \|  A_{F,H} \|_{M^{r,q_0}} \precsim \\
\precsim  \|  A_{F,H} \|_{\mathcal{F} L ^{q_0}} = \| \hat A \|_{L ^{q_0}} \precsim
||F||_{L^{p,q_1}} ~ ||H||_{L^{p',q_2}}.
\end{array}
\end{equation*}

{\em 2.} By the homeomorphism  $  A_{F,H} \mapsto  \langle D \rangle ^{s}  A_{F,H} $ from $M^{r,q} _{0,s}$ to
$M^{r,q} _{0,0}$ we have
$$
\| A_{F,H} \|_{M^{r,q} _{0,s}} = \| \langle D \rangle^s A_{F,H} \|_{M^{r,q} _{0,0}} = \| \langle D \rangle^s A_{F,H} \|_{M^{r,q}},
$$
and the embedding $ \mathcal{F} L ^{q} (\mathbb{R}^d)
\hookrightarrow M^{ r, q}  (\mathbb{R}^d),$
with $ r \geq \text{max} \{ q, q'  \},$ gives
\begin{equation*}
\begin{array}{c}
\| A_{F,H} \|_{M^{r,q_0} _{0,s}} = \| \langle D \rangle^s A_{F,H} \|_{M^{r,q_0} }
\precsim \| \langle D \rangle^s A_{F,H} \|_{ \mathcal{F} L ^{q_0} } = \\
\\
= \| \widehat{ \langle D \rangle^s A_{F,H}} \|_{  L ^{q_0} }
=  \|  \langle \xi \rangle^s \hat A_{F,H} \|_{  L ^{q_0} } =
\|   \hat A_{F,H}\|_{  L ^{q_0} _s},
\end{array}
\end{equation*}
so that
$$
\| A_{F,H} \|_{M^{r,q_0} _{0,s}} \precsim
\|   \hat A_{F,H}\|_{L ^{q_0} _s}.
$$

Now we proceed as in {\em 1.},
\begin{equation}
\begin{array}{c}
||F \natural_{g} H ||_{L^{r,q_0} _{0,s}}
= \| A_{F,H} \|_{M^{r,q_0} _{0,s}} \precsim
\| \hat A_{F,H} \|_{L ^{q_0} _s} \leq \\
   \\
\leq
(\int_{\mathbb{R}^{d}} (\int_{\mathbb{R}^{2d}} | F(x^{\prime}, \omega^{\prime}) H(x^{\prime}, \xi- \omega^{\prime})|
\langle x' \rangle^{t_1 + t_2} dx^{\prime} d \omega^{\prime} )^{q_0} \langle \xi \rangle^s d\xi )^{1/q_0} \leq \\
\\
\leq
(\int_{\mathbb{R}^{d}}
(\int_{\mathbb{R}^{d}}
\| F (\cdot, \omega^{\prime}) \|_{L^p _{t_1}}  \| H (\cdot, \xi - \omega^{\prime}) \|_{L^{p'} _{t_2}} d \omega^{\prime} )^{q_0}
\langle \xi \rangle^s d\xi )^{1/q_0} = \\
\\
=
(\int_{\mathbb{R}^{d}} | (\tilde F * \tilde H) ( \xi)|^{q_0}
\langle \xi \rangle^s  d\xi )^{1/q_0}
\precsim \\
\\
\precsim \| \tilde F \|_{L^{q_1} _{s_1}}  ~ \| \tilde H \|_{L^{q_2}  _{s_2}}
=
\|F \|_{L^{p,q_1}  _{t_1, s_1}} ~ \|H \|_{L^{p',q_2}  _{t_2, s_2}},
\end{array}
\label{eqWindowedProduct6A}
\end{equation}
where
$\tilde F (\omega^{\prime}) = \| F (\cdot, \omega^{\prime}) \|_{L^p _{t_1}} $ and
$ \tilde H (\omega^{\prime}) = \| H (\cdot, \omega^{\prime}) \|_{L^{p'} _{t_2}} $, $ t_1 + t_2 \geq 0$,
and we used the Young inequality for weighted Lebesgue spaces
$$
L^{q_1} _{s_1} *  L^{q_2}  _{s_2} \subseteq L^{q_0}  _{s},
$$
when $ R(q) = 0,$  $ s_1+s_2 \geq 0,$ and $ \text{min} \{s_1, s _2 \} \geq s$,
see e.g. (0.1)--(0.3) in \cite{Toft1} (and also \cite[Theorem 2.2]{Toft1} for a more general situation).
\end{proof}

\begin{remark} \label{rem:extensions}
The results of Theorems \ref{TheoremModulationSpace4} and
 \ref{thm:extension-version1} are different, although partially overlapping. Let us briefly comment on unweighted case for simplicity.
If in  Theorem  \ref{thm:extension-version1}  we assume that
 $ F = V_{g_1} f $ and $ H =  V_{g_2} h $ for some $f,h \in  S_0 (\mathbb{R}^{d})$ and if
$g_1,g_2,g \in S_0 (\mathbb{R}^{d}) \backslash \left\{0 \right\}$, then by Theorem \ref{theoremWindowedProduct1}
it follows that
$$
F \natural_{g} H = V_{g_1} (f) \natural_{g} V_{\overline{g_2}} (h) =
\langle g_2,g_1  \rangle_{L^2 (\mathbb{R}^d)} V_{g} (f \cdot h),
$$
and Theorem \ref{thm:extension-version1} can be viewed as a special case of Theorem \ref{TheoremModulationSpace4}, when  $1\leq p,q_1,q_2 < \infty$.

\par

On the other hand, if the assumptions of Theorem  \ref{thm:extension-version1} are fulfilled
and if $g_1,g_2,g \in S_0 (\mathbb{R}^{d}) \backslash \left\{0 \right\}$, then
the extension of the Gabor product holds also when $ F \in L^{p,q_1} (\mathbb{R}^{2d})
\setminus \mathcal{V}_{g_1}^{p,q_1} (\mathbb{R}^{2d}) $ and
$ H \in L^{p',q_2}  (\mathbb{R}^{2d}) \setminus
 \mathcal{V}_{\overline{g_2}} ^{p',q_2} (\mathbb{R}^{2d}) $ which is not covered by Theorem \ref{TheoremModulationSpace4}. Moreover, unlike Theorem \ref{TheoremModulationSpace4}, Theorem  \ref{thm:extension-version1} holds even if (some of) the involved Lebesgue exponents are infinite.
\end{remark}

\begin{remark}\label{RemarkModulationSpace5}
In view of (\ref{eqRelationSTFTCrossWignerFunction}), both  Theorem \ref{TheoremModulationSpace4} and  Theorem
  \ref{thm:extension-version1}  hold when the Gabor product is replaced by the Wigner product.
\end{remark}

\subsection{Algebras} \label{subsec:algebras}

Under the conditions of Corollary \ref{corollaryHomomorphism}, the product (\ref{eqHomomorphism1}) can be used to define algebras of functions in $\mathbb{R}^{2d}$ from algebras in $\mathbb{R}^d$. Here are two examples.

\begin{example}\label{ExampleClosedAlgebra1}
The space $\mathcal S (\mathbb{R}^d)$ of test functions is an algebra under pointwise multiplication. Let $V_g
\left(\mathcal S (\mathbb{R}^d) \right)$ denote its range under
the action of the STFT $V_g$. Then $V_g \left(\mathcal S
(\mathbb{R}^d) \right)$ is an algebra with the product $ \natural_g $ given by
(\ref{eqHomomorphism1}). The same is true if $\mathcal S (\mathbb{R}^d)$ is replaced by $\mathcal S^{(1)} (\mathbb{R}^d)$.
\end{example}

\begin{example}\label{ExampleClosedAlgebra2}
Another interesting example is Feichtinger's algebra $S_0 (\mathbb{R}^d)$. Recall that by Lemma \ref{lm:Szero} we have
\begin{equation}
S_0 (\mathbb{R}^d) \subset C^0 (\mathbb{R}^d) \cap L^1 (\mathbb{R}^d) \cap \mathcal{F}\left( L^1 (\mathbb{R}^d)\right) \cap L^2(\mathbb{R}^d).
\label{eqFeichtingerAlgebra1}
\end{equation}
Let then $g \in S_0 (\mathbb{R}^d)$ with $\overline{g}=g$ and put $\mathcal{A}_0 (\mathbb{R}^{2d}) = V_g \left(S_0 (\mathbb{R}^d) \right) \subset L^1 (\mathbb{R}^{2d})$. In view of  Theorem \ref{theoremWindowedProduct1}
(where (\ref{eqFeichtingerAlgebra1}) is used) we conclude that $\mathcal{A}_0 (\mathbb{R}^{2d})$ is an algebra with respect to the product $\natural_g$.

On the other hand, if $g$ is in addition an even function, then $\mathcal{A}_0 (\mathbb{R}^{2d})$ is an algebra with respect to the product:
\begin{equation*}
\begin{array}{c}
\left(F \diamondsuit_g H \right) (x, \omega) := \int_{\mathbb{R}^{3d}} \overline{g(x^{\prime} + x^{\prime \prime} -x)} F(x^{\prime}, \omega^{\prime}) \times \\
 \\
 \times H (x^{\prime \prime} , \omega^{\prime}) e^{2 i \pi (\omega^{\prime} - \omega) \cdot (x^{\prime}+ x^{\prime \prime})} dx^{\prime} dx^{\prime \prime } d \omega^{\prime}.
\end{array}
\label{eqFeichtingerAlgebra2}
\end{equation*}
This product is defined for e.g $ F,H \in S_0 (\mathbb{R}^{2d})$ and it extends to appropriate mixed-norm spaces as in Subsection \ref{subsec:extension}.

Indeed, we have for $||g||_{L^2}=1$:
\begin{equation*}
\begin{array}{c}
V_g (f \star h) (x, \omega) = e^{-2 i \pi x \cdot \omega}V_{\widehat{g}} \left( (f \star h)^{\widehat{}} \right) (\omega, -x)=\\
\\
= e^{-2 i \pi x \cdot \omega}V_{\widehat{g}} \left( \widehat{f} \cdot \widehat{h} \right) (\omega, -x)=\\
\\
= e^{-2 i \pi x \cdot \omega} \left(V_{\widehat{g}} \widehat{f} \natural_{\widehat{g}}V_{\overline{\widehat{g}}} \widehat{h}\right) (\omega, -x) =\\
\\
= e^{-2 i \pi x \cdot \omega} \int_{\mathbb{R}^{3d}} \overline{g(x^{\prime} + x^{\prime \prime} -x)} V_{\widehat{g}} \widehat{f} (\omega^{\prime}, - x^{\prime}) \times \\
\\
\times V_{\overline{\widehat{g}}} \widehat{h} (\omega^{\prime}, - x^{\prime \prime}) e^{- 2 i \pi \omega \cdot (x^{\prime} + x^{\prime \prime} -x)} dx^{\prime} d x^{\prime \prime } d \omega^{\prime} =\\
\\
= \int_{\mathbb{R}^{3d}} \overline{g(x^{\prime} + x^{\prime \prime} -x)} V_g f (x^{\prime} , \omega^{\prime}) \times\\
\\
\times V_{\mathcal{I} \overline{g}} h (x^{\prime \prime} , \omega^{\prime}) e^{2 i \pi (\omega^{\prime}- \omega) \cdot (x^{\prime} + x^{\prime \prime})}  dx^{\prime} d x^{\prime \prime } d \omega^{\prime} =\\
\\
= \left( V_g f \diamondsuit_g  V_{\mathcal{I} \overline{g}} h  \right) (x, \omega).
\end{array}
\label{eqFeichtingerAlgebra3}
\end{equation*}

Consequently, $V_g (f \star h)=  V_g f \diamondsuit_g  V_g h$ if $g \in  S_0(\mathbb{R}^d)$ is real and even.
\end{example}

From this algebraic point of view, it is useful to determine an involution suitable for the products $\natural_g$ and $\sharp_g$ (see \cite[Section 1.1]{Folland1}).
First we note that the complex conjugate of  $F \in (\mathcal S^{(1)})^{\prime} (\mathbb{R}^{2d})$ is given by
$$
< \overline{F}, \Phi > := \overline{< F, \overline{\Phi}>}, \qquad \Phi \in \mathcal S^{(1)} (\mathbb{R}^{2d}),
$$
and the action of $\mathcal{Z}_{\xi}$ (see  (\ref{eqWindowedProduct6})) on $F\in (\mathcal S^{(1)})^{\prime} (\mathbb{R}^{2d})$ is given by
$$
< \mathcal{Z}_{\xi} F, \Phi > := < F, \mathcal{Z}_{\xi} \Phi>, \qquad \Phi \in \mathcal S^{(1)} (\mathbb{R}^{2d}).
$$

\begin{definition}\label{definitionInvolution}
Let $F \in (\mathcal S^{(1)})^{\prime} (\mathbb{R}^{2d})$. We define the operation
\begin{equation}
F(x, \omega)^{\ast} = \overline{F (x, - \omega)} = \overline{\left(\mathcal{Z}_0 F \right) (x, \omega)},
\label{eqInvolution}
\end{equation}
where $\mathcal{Z}_{\xi}$ is the operator given by (\ref{eqWindowedProduct6}).
\end{definition}

\begin{proposition}\label{PropositionInvolution}
Let $\mathcal{B} (\mathbb{R}^{2d})  $ be an algebra with respect to the product $\natural_g$ or $\sharp_g$ with $\overline{g}=g$ and $||g||_{L^2}=1$. The operation $\ast$ given by \eqref{eqInvolution} is an involution in $\mathcal{B} (\mathbb{R}^{2d})$.
\end{proposition}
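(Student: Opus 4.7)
The plan is to verify the three defining properties of a (antilinear) involution compatible with the product: (i) $(F^*)^*=F$, (ii) antilinearity $(\alpha F+\beta H)^*=\overline{\alpha}F^*+\overline{\beta}H^*$, and (iii) the anti-multiplicativity relation $(F\natural_g H)^* = H^*\natural_g F^*$ (and similarly for $\sharp_g$). Items (i) and (ii) are immediate from the definition $F^*(x,\omega)=\overline{F(x,-\omega)}$ because complex conjugation is an antilinear involution on $\mathbb{C}$ and $\mathcal{Z}_0$ is an involution on functions. The content of the proposition lies in (iii), which I would verify by a direct computation using the integral representation \eqref{eqWindowedProduct1}.

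Starting from \eqref{eqWindowedProduct1}, I would compute $(F\natural_g H)^*(x,\omega)$ by first substituting $\omega\mapsto-\omega$ inside the integral and then conjugating the whole expression, which produces
$$
(F\natural_g H)^*(x,\omega)=\int_{\mathbb{R}^{3d}}\widehat{g}(\omega'+\omega''+\omega)\,\overline{F(x',\omega')}\,\overline{H(x',\omega'')}\,e^{-2i\pi x\cdot(\omega'+\omega''+\omega)}\,dx'\,d\omega'\,d\omega''.
$$
Next I would change variables $\omega'\mapsto-\omega'$, $\omega''\mapsto-\omega''$. This turns $\overline{F(x',-\omega')}$ into $F^*(x',\omega')$, similarly for $H^*$, flips the sign in the exponential back to $e^{2i\pi x\cdot(\omega'+\omega''-\omega)}$, and replaces the argument of $\widehat{g}$ by $\omega-\omega'-\omega''$. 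The decisive step is to rewrite
$$
\widehat{g}(\omega-\omega'-\omega'')=\overline{\widehat{g}(\omega'+\omega''-\omega)},
$$
which follows from $\overline{g}=g$: by \eqref{eqFourierTransformProperties} one has $\widehat{g}=(\overline{g})^{\widehat{}}=\overline{\mathcal{I}\widehat{g}}$, i.e.\ $\widehat{g}(-\xi)=\overline{\widehat{g}(\xi)}$. After this identification, the resulting integral is exactly the representation of $(F^*\natural_g H^*)(x,\omega)$, so $(F\natural_g H)^*=F^*\natural_g H^*$. Commutativity of $\natural_g$ from Proposition \ref{PropositionProperties1} then yields $F^*\natural_g H^*=H^*\natural_g F^*$, establishing (iii). (One should also check that $\mathcal{B}$ is closed under $*$ in the concrete examples; for $\mathcal{A}_0(\mathbb{R}^{2d})=V_g(S_0(\mathbb{R}^d))$ this is immediate since $(V_g f)^*(x,\omega)=V_{\overline{g}}\overline{f}(x,\omega)=V_g\overline{f}(x,\omega)$ by \eqref{eq2.5.1} and $\overline{f}\in S_0$ by Lemma \ref{lm:Szero}\,iii).)

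The argument for $\sharp_g$ is carried out along exactly the same lines, starting from \eqref{eqWignerProduct1}; the additional phase $e^{i\pi[x\cdot\omega-x'\cdot(\omega'+\omega'')]}$ is symmetric in $\omega'\leftrightarrow-\omega',\omega''\leftrightarrow-\omega''$ once one also reflects $\omega\mapsto-\omega$ and conjugates, so the same reality property of $\widehat{g}$ closes the computation. Alternatively, one can shortcut the second case by invoking the relation \eqref{eqRelationSTFTCrossWignerFunction} between $W_g$ and $V_{\mathcal{I}g}$, noting that the extra phase $e^{i\pi x\cdot\omega}$ is purely imaginary exponent and transforms in a predictable way under the reflection $\omega\mapsto-\omega$ followed by conjugation. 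The only real obstacle is careful bookkeeping of the complex exponentials and of the argument of $\widehat{g}$ under the reflection of the frequency variables; there is no analytic subtlety, as all identities hold pointwise in the integrand and the integrability is inherited from the assumption that $F,H\in\mathcal{B}\subset(\mathcal S^{(1)})^{\prime}(\mathbb{R}^{2d})$ are already members of an algebra on which the product is defined.
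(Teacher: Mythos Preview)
Your proof is correct and follows essentially the same route as the paper's. The only cosmetic difference is that the paper uses the swapping substitution $\omega'\mapsto-\omega''$, $\omega''\mapsto-\omega'$, which lands directly on $H^*\natural_g F^*$, whereas you use the simple reflection $\omega'\mapsto-\omega'$, $\omega''\mapsto-\omega''$ to obtain $F^*\natural_g H^*$ and then invoke commutativity from Proposition~\ref{PropositionProperties1}; both are equally valid.
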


\begin{proof}
Clearly,
\begin{equation*}
\left(F+G \right)^{\ast}=F^{\ast}+G^{\ast}, \hspace{1 cm} (\alpha F)^{\ast} = \overline{\alpha} F^{\ast}, \hspace{1 cm} (F^{\ast})^{\ast} =F
\label{eqProofInvolution1}
\end{equation*}
for all $F \in (\mathcal S^{(1)})^{\prime} (\mathbb{R}^{2d}) $, and $\alpha \in \mathbb{C}$.

Finally, let $F,H \in \mathcal{B} (\mathbb{R}^{2d})$. Then
\begin{equation*}
\begin{array}{c}
\left(F\natural_g H (x, \omega) \right)^{\ast} = \overline{F\natural_g H } (x, - \omega)=\\
\\
= \int_{\mathbb{R}^{3d}} \widehat{g} (\omega^{\prime} + \omega^{\prime \prime} + \omega) \overline{F(x^{\prime}, \omega^{\prime})} ~ \overline{H(x^{\prime}, \omega^{\prime \prime})} e^{-2 \pi i x \cdot(\omega^{\prime} + \omega^{\prime \prime} + \omega)} d x^{\prime} d \omega^{\prime} d \omega^{\prime \prime} =\\
\\
= \int_{\mathbb{R}^{3d}} \widehat{g} (- \omega^{\prime} - \omega^{\prime \prime} + \omega) \overline{H(x^{\prime}, - \omega^{\prime})} ~ \overline{F(x^{\prime}, - \omega^{\prime \prime})} e^{-2 \pi i x \cdot(- \omega^{\prime} - \omega^{\prime \prime} + \omega)} d x^{\prime} d \omega^{\prime} d \omega^{\prime \prime}
\end{array}
\label{eqProofInvolution2}
\end{equation*}
where we performed the substitutions $\omega^{\prime} \to - \omega^{\prime \prime}$ and $\omega^{\prime \prime} \to - \omega^{\prime}$ in the last step. Since, $\overline{g}=g$, we conclude that $\widehat{g} (- \omega) = \overline{\widehat{g} (\omega)}$ and so
\begin{equation*}
\left(F\natural_g H  \right)^{\ast} = H^{\ast} \natural_g F^{\ast}
\label{eqProofInvolution3}
\end{equation*}
and the result follows.

The proof for the Wigner product $\sharp_g$ is similar.
\end{proof}

Definition \ref{definitionInvolution} is obviously motivated by the fact that (cf.(\ref{eq2.5.1}))
\begin{equation*}
V_g \overline{f} (x, \omega) =  \overline{V_{ \overline{g}} f (x, - \omega)} =\overline{V_{g} f (x, - \omega)} = V_g f(x, \omega)^{\ast},
\label{eqInvolution2}
\end{equation*}
if $ \overline{g}=g$. Similarly from (\ref{eq2.4.3}) $W_g \overline{f} (x, \omega) =W_g f(x, \omega)^{\ast}$.

\section{Phase-space representations of the NLSE} \label{sec:3}

In this section we derive three different representations of the cubic NLSE in phase-space. The first two are obtained via  windowed transforms from Section \ref{sec:2}, and the third representation is given by using the Wigner transform. The resulting equation resembles  the Boltzmann equation.

Let $I$ be some open interval in $\mathbb{R}$. For some normed space $X$ with norm $|| \cdot||_X$, we denote by $C(I, X)$ the set of continuous functions from $I$ to $X$.

The Laplacian is given by $\Delta = \sum_{j=1}^d \frac{\partial^2}{\partial x_j^2}$. Finally, let $\psi : \mathbb{R}^d \times I \to \mathbb{C}$. The elliptic NLSE is given by
\begin{equation}
i \frac{\partial \psi}{\partial t} + \Delta \psi + \lambda | \psi|^{2 \sigma} \psi =0
\label{eqNLSE2}
\end{equation}
for an attracting $(\lambda =+1)$ or repulsive $(\lambda =-1)$ power-law nonlinearity. It is subject to the initial condition:
\begin{equation*}
\psi (x,0)= \varphi (x).
\label{eqNLSE3}
\end{equation*}
We then have the following theorem for the existence of solutions in Sobolev space $H^1 (\mathbb{R}^d) $ \cite{Sulem} (cf. Example \ref{ExampleModulationSpace1} for the definition of $H^1 (\mathbb{R}^d) $).

\begin{theorem}\label{theoremNLSE1}
For $0 \le \sigma < \frac{2}{d-2}$ (no conditions on $\sigma$ when $d=1$ or $d=2$) and an initial condition $\varphi \in H^1 (\mathbb{R}^d)$, there exists, locally in time, a unique maximal solution $\psi$ in $ C \left( (- T^*,T^*), H^1 (\mathbb{R}^d) \right)$, where maximal means that if $T^* < \infty$, then $|| \psi||_{H^1} \to \infty$ as $t$ approaches $T^*$. In addition, $\psi$ satisfies the probability and energy conservation laws:
\begin{equation*}
\begin{array}{l}
\mathcal{P} \left[\psi \right] := \int_{\mathbb{R}^d} |\psi (x,t)|^2 d x = \mathcal{P} \left[\varphi \right]\\
\\
H \left[\psi \right] := \int_{\mathbb{R}^d} \left(| \nabla \psi (x,t)|^2 - \frac{\lambda}{\sigma +1} |\psi (x,t)|^{2 \sigma +2} \right) d x = H \left[\varphi \right]
\end{array}
\label{eqNLSE4}
\end{equation*}
and depends continuously on the initial condition $\varphi \in H^1 (\mathbb{R}^d)$.

Moreover, if the initial condition $\varphi$ belongs to the space
\begin{equation*}
\Sigma = \left\{f \in H^1 (\mathbb{R}^d): ~ |x f(x)| \in L^2 (\mathbb{R}^d) \right\}
\label{eqNLSE5}
\end{equation*}
of functions in $H^1 (\mathbb{R}^d) $ with finite variance, the above maximal solution belongs to $C \left((-T^*, T^*), \Sigma \right)$. The variance
\begin{equation*}
V(t) := \int_{\mathbb{R}^d} |x|^2 | \psi (x,t)|^2 dx
\label{eqNLSE6}
\end{equation*}
belongs to $C^2 \left(-T^*, T^* \right)$ and satisfies the identity:
\begin{equation*}
\frac{d^2 V}{dt^2} = 8 H - 4\lambda \frac{d \sigma -2}{ \sigma+1} \int_{\mathbb{R}^d} | \psi (x,t)|^{2 \sigma +2} dx.
\label{eqNLSE7}
\end{equation*}
\end{theorem}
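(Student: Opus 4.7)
The plan is to follow the classical local well-posedness theory for the semilinear Schr\"odinger equation in $H^1(\mathbb{R}^d)$, as presented in \cite{Sulem}, via a Banach fixed-point argument based on Strichartz estimates. Writing $U(t)=e^{it\Delta}$ for the free propagator, Duhamel's formula recasts \eqref{eqNLSE2} as
\begin{equation*}
\psi(t) = U(t)\varphi + i\lambda \int_0^t U(t-s)\bigl(|\psi|^{2\sigma}\psi\bigr)(s)\,ds.
\end{equation*}
The condition $0\le\sigma<2/(d-2)$ (with no restriction when $d=1,2$) is exactly the subcritical Sobolev threshold $H^1\hookrightarrow L^{2\sigma+2}$, which makes the nonlinearity $N(\psi)=|\psi|^{2\sigma}\psi$ locally Lipschitz on bounded sets of $H^1$ with the appropriate gain in integrability for some Strichartz-admissible pair $(q,r)$. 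I would then set up a contraction on a ball of $C([-T,T],H^1)\cap L^q([-T,T],W^{1,r})$, choosing $T>0$ small in terms of $\|\varphi\|_{H^1}$ so that the homogeneous and inhomogeneous Strichartz estimates applied to the Duhamel map produce a strict contraction. Iteration and a standard continuation argument yield the maximal interval $(-T^*,T^*)$ and the blow-up alternative $\|\psi(t)\|_{H^1}\to\infty$ as $t\to T^*$, while comparing two solutions in the same Strichartz norm yields Lipschitz continuous dependence on $\varphi$.

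The mass conservation law is obtained by pairing \eqref{eqNLSE2} with $\bar\psi$ in $L^2$ and taking the imaginary part, which gives $\tfrac{d}{dt}\|\psi\|_{L^2}^2 = 0$; pairing with $\partial_t\bar\psi$ and taking the real part gives $\tfrac{d}{dt}H[\psi]=0$. Since both formal computations require slightly more regularity than $H^1$ provides, I would first execute them on smooth approximants of $\varphi$ (for which higher regularity persists by the same Strichartz scheme) and then pass to the limit using the continuous dependence on data.

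For the $\Sigma$-persistence, the key is the commutation identity $[x_j, i\partial_t + \Delta] = -2\partial_{x_j}$, which yields a closed Strichartz system for the triple $(\psi,\nabla\psi,x\psi)$ and propagates finite variance along the flow on $(-T^*,T^*)$. With $\psi\in C((-T^*,T^*),\Sigma)$ in hand, differentiating $V(t)$, substituting \eqref{eqNLSE2}, and integrating by parts once produces
\begin{equation*}
\frac{dV}{dt} = 4\,\mathrm{Im}\!\int_{\mathbb{R}^d} \bar\psi\,(x\cdot\nabla)\psi\,dx.
\end{equation*}
Differentiating once more, substituting \eqref{eqNLSE2} again, and carrying out the standard integrations by parts, the kinetic contributions collapse to $8\int|\nabla\psi|^2\,dx$ while the nonlinear contributions assemble into $-\tfrac{4\lambda d\sigma}{\sigma+1}\int|\psi|^{2\sigma+2}\,dx$; writing $8\int|\nabla\psi|^2\,dx = 8H[\psi] + \tfrac{8\lambda}{\sigma+1}\int|\psi|^{2\sigma+2}\,dx$ and using energy conservation recovers the stated identity. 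The main obstacle throughout is the rigorous justification of these formal integrations by parts at the $H^1\cap\Sigma$ level of regularity; this is handled by approximating $\varphi$ by Schwartz data, performing the calculations on the resulting smooth solutions, and passing to the limit via continuous dependence in $\Sigma$.
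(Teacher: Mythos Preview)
Your outline is the standard route (Strichartz-based contraction in $H^1$, conservation laws via approximation, and the virial identity for $\Sigma$-data), and it matches what is done in the reference \cite{Sulem}. Note, however, that the paper does not prove this theorem at all: it is simply quoted from \cite{Sulem} as background for the phase-space formulation that follows, so there is no ``paper's own proof'' to compare against. Your sketch is a faithful summary of the cited argument.
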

We will henceforth focus on the cubic equation $(\sigma=1)$.

\subsection{The STFT and the windowed Wigner representations}

Using the following intertwining relations
\begin{equation*}
V_g (x_j \psi) = - \frac{1}{2 \pi i} \frac{\partial}{\partial \omega_j} V_g (\psi), \hspace{1 cm} V_g \left(- i \frac{\partial \psi}{\partial x_j} \right) = \left(2 \pi \omega_j -i \frac{\partial}{\partial x_j} \right) V_g \psi,
\label{eqIntertwiners}
\end{equation*}
$ j=1, \cdots, d$, we obtain upon application of the STFT $V_g$ to (\ref{eqNLSE2}) with $\sigma=1$:
\begin{equation*}
i \frac{\partial F}{\partial t} - \sum_{j=1}^d \left(2 \pi \omega_j - i \frac{\partial}{\partial x_j} \right)^2 F + \lambda F^{\ast} \natural_g F \natural_g F =0,
\label{eqSTFTCubicNLSE}
\end{equation*}
where $F(x, \omega,t) = V_g (\psi) (x,\omega,t)$, $F^{\ast} $ is given by \eqref{eqInvolution}, $g$ is real and $\natural_g$ is as in (\ref{eqWindowedProduct1}). Here,
$$
V_g (\psi) (x,\omega,t) = \int_{\mathbb{R}^d} \psi(y,t) \overline{g(y-x)} e^{-2 \pi i \omega \cdot y} dy, \qquad x,\omega \in \mathbb{R}^d, \; t \in\mathbb{R}.
$$

Similarly, the intertwining relations for the Wigner wave-packet transform are given by the so-called Bopp operators:
\begin{equation*}
\begin{array}{l}
W_g (x_j \psi) = \frac{1}{2} \left(x_j + \frac{i}{\pi} \frac{\partial}{\partial \omega_j} \right) W_g \psi, \\
\\
W_g \left(\frac{1}{2 i \pi} \frac{\partial}{\partial x_j} \psi \right) = \frac{1}{2} \left(\omega_j - \frac{i}{\pi} \frac{\partial}{\partial x_j}  \right) W_g \psi, \qquad
j=1, \cdots, d.
\end{array}
\label{eqBoppOps1}
\end{equation*}

This then leads to the following phase-space representation of the cubic NLSE:
\begin{equation*}
i \frac{\partial F}{\partial t} - \sum_{j=1}^d \left( \pi \omega_j - i \frac{\partial}{\partial x_j} \right)^2 F + \lambda F^{\ast} \sharp_g F \sharp_g F =0,
\label{eqBoppOps2}
\end{equation*}
where this time $F=W_g \psi$, $g$ is real and $\sharp_g$ is given by (\ref{eqWignerProduct1}).

\subsection{The Wigner-Moyal representation}

Our final goal is to derive the Wigner-Moyal equation for the associated Wigner function $W \psi$. The kinetic part $\Delta \psi$ is known to lead to a diffusive term of the form $4 \pi \omega \cdot \nabla_x W \psi (x, \omega)$.

Next, we consider the interaction term:
\begin{equation*}
\begin{array}{c}
i\lambda \int_{\mathbb{R}^d} e^{- 2 i \pi \omega \cdot y} \left\{ \psi \left(x + \frac{y}{2} \right) \left| \psi \left(x - \frac{y}{2} \right) \right|^2 \overline{\psi \left(x - \frac{y}{2} \right) } \right.\\
  \\
  \left. -\overline{\psi \left(x - \frac{y}{2} \right) }  \left| \psi \left(x + \frac{y}{2} \right) \right|^2  \psi \left(x + \frac{y}{2} \right)\right\}dy =\\
  \\
 =i \lambda \int_{\mathbb{R}^d} e^{- 2 i \pi \omega \cdot y} \psi \left(x + \frac{y}{2} \right)  \overline{\psi \left(x - \frac{y}{2} \right) } \\
 \\
 \int_{\mathbb{R}^d} \left\{ W \psi \left(x - \frac{y}{2} , \omega^{\prime \prime} \right) - W \psi \left(x + \frac{y}{2} , \omega^{\prime \prime} \right) d \omega^{\prime \prime} \right\} dy
\end{array}
\label{eqNLSE8}
\end{equation*}
where we used (\ref{eqWignerfunction3}).

Gathering all the results and using (\ref{eqWignerfunction6}), we finally obtain:
\begin{equation*}
\frac{\partial W \psi}{\partial t} = - 4 \pi \omega \cdot \nabla_x W \psi + \mathcal{Q} \left(W \psi, W \psi \right),
\end{equation*}
where the "collision" term is given by:
\begin{equation*}
\begin{array}{c}
\mathcal{Q} (F,G) (x, \omega) := 2^d i \lambda \int_{\mathbb{R}^{3 d}} e^{-4 i \pi y \cdot (\omega -\omega^{\prime})} F (x, \omega^{\prime}) \times \\
\\
 \times \left\{G \left(x + y , \omega^{\prime \prime} \right) - G \left(x - y , \omega^{\prime \prime} \right)  \right\}   d y d \omega^{\prime} d \omega^{\prime \prime}=\\
 \\
 = 2^{d+1}  \lambda \int_{\mathbb{R}^{3 d}}   \sin \left[ 4 \pi (\omega- \omega^{\prime}) \cdot (y-x) \right] F(x, \omega^{\prime}) G(y, \omega^{\prime \prime}) d y d \omega^{\prime} d \omega^{\prime \prime}
\end{array}
\label{eqNLSE9A}
\end{equation*}

We have thus far been somewhat informal in the derivation of the collision term. We may assume for the time being that $F,G \in S_0 (\mathbb{R}^{2d})$. To extend this definition, we shall assume that
\begin{equation}
F(x, \cdot) =F_x (\cdot) \in L^p (\mathbb{R}^d),
\label{eqCollisionTerm1}
\end{equation}
for almost all $x \in \mathbb{R}^d $ and $p \ge 1$, and
\begin{equation}
g_1 (x) = \int_{\mathbb{R}^d} G(x, \omega) d \omega \in \mathcal{F} L^q (\mathbb{R}^d), \hspace{1 cm} q \ge 1.
\label{eqCollisionTerm2}
\end{equation}
By a straightforward computation, we can then show that
\begin{equation*}
\mathcal{Q} (F,G) (x,\omega) = 2^{d/2} i \lambda \left[\left(F_x \star D_{\frac{1}{2}} M_x \widehat{g}_1  \right) (\omega) - \left(F_x \star \mathcal{I} D_{\frac{1}{2}} M_x \widehat{g}_1  \right) (\omega)  \right],
\label{eqCollisionTerm3}
\end{equation*}
for almost all $x \in \mathbb{R}^d$.

Consequently, by Young's Theorem (Theorem \ref{TheoremYoung}), we conclude that under the conditions (\ref{eqCollisionTerm1},\ref{eqCollisionTerm2}), we have:
\begin{equation*}
\mathcal{Q} (F,G) (x,\cdot) \in L^r (\mathbb{R}^d),
\label{eqCollisionTerm4}
\end{equation*}
for almost all $x \in \mathbb{R}^d$ and $r$ such that $\frac{1}{p} + \frac{1}{q} = 1+ \frac{1}{r}$.

The derived equation leads to several conservation laws. We illustrate here the conservation of the normalization. Quantum mechanically this corresponds to the conservation of probability. We remark that if $\psi \in H^1 (\mathbb{R}^d)$, then $W \psi \in L_{0,2}^{1,1} (\mathbb{R}^{2d})$.

We start with the following Lemma.

\begin{lemma}\label{Lemma2}
Suppose that $F,G \in L_{0,2}^{1,1} (\mathbb{R}^{2d})$ and (\ref{eqCollisionTerm2}) holds for $q=1$. Then the following identity holds:
\begin{equation}
\int_{\mathbb{R}^d} \mathcal{Q} (F,G) (x, \omega) d \omega =0,
\label{eqCollisionTerm4.1}
\end{equation}
for a.e. $x \in \mathbb{R}^d$.
\end{lemma}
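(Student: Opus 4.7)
The plan is to exploit the convolution representation of $\mathcal{Q}(F,G)$ derived just before the lemma, rather than working directly with the oscillatory integral involving $\sin[4\pi(\omega-\omega')\cdot(y-x)]$ (whose $\omega$-integral diverges absolutely and would only make sense distributionally). Under the hypotheses, the assumption $F \in L^{1,1}_{0,2}(\mathbb{R}^{2d})$ implies via Fubini that $F(x,\cdot) = F_x \in L^1(\mathbb{R}^d)$ for a.e.\ $x \in \mathbb{R}^d$, and $g_1 \in \mathcal{F}L^1(\mathbb{R}^d)$ gives $\widehat{g}_1 \in L^1(\mathbb{R}^d)$. So for a.e.\ $x$, the representation
$$
\mathcal{Q}(F,G)(x,\omega) = 2^{d/2} i \lambda \bigl[(F_x \star D_{\frac{1}{2}} M_x \widehat{g}_1)(\omega) - (F_x \star \mathcal{I} D_{\frac{1}{2}} M_x \widehat{g}_1)(\omega)\bigr]
$$
holds, with both convolutions in $L^1(\mathbb{R}^d_\omega)$ by Young's inequality (Theorem~\ref{TheoremYoung}) applied with $p=q=r=1$.

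First I would integrate this identity in $\omega$ and invoke the elementary factorization $\int_{\mathbb{R}^d}(f \star g)(\omega)d\omega = (\int f)(\int g)$ for $f,g \in L^1$ (a consequence of Fubini's theorem). This yields
$$
\int_{\mathbb{R}^d} \mathcal{Q}(F,G)(x,\omega)d\omega = 2^{d/2} i \lambda \Bigl(\int_{\mathbb{R}^d} F_x(\omega)d\omega\Bigr) \Bigl(\int_{\mathbb{R}^d} D_{\frac{1}{2}} M_x \widehat{g}_1(\omega)d\omega - \int_{\mathbb{R}^d} \mathcal{I} D_{\frac{1}{2}} M_x \widehat{g}_1(\omega)d\omega\Bigr).
$$
The key observation is then that the reflection $\mathcal{I}$ preserves the Lebesgue integral: by the change of variables $\omega \mapsto -\omega$, one has $\int h(-\omega)d\omega = \int h(\omega)d\omega$ for any $h \in L^1(\mathbb{R}^d)$. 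Applied with $h = D_{\frac{1}{2}} M_x \widehat{g}_1$, the two summands in the bracket coincide and the difference vanishes, proving \eqref{eqCollisionTerm4.1}.

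If desired, one can also verify for concreteness that $\int D_{\frac{1}{2}} M_x \widehat{g}_1(\omega) d\omega = 2^{-d/2} g_1(x)$ by the substitution $\xi = 2\omega$ together with the Fourier inversion formula, but only the above symmetry is required for the cancellation. The only place where care is needed is in the a.e.\ identification of $F_x$ with an $L^1$ function and in justifying the use of Fubini, both of which are guaranteed by the weighted integrability assumption $F \in L^{1,1}_{0,2}$ and by $\widehat{g}_1 \in L^1$; this is the main (though mild) technical point. No deeper obstacle arises because the argument is essentially a parity symmetry, independent of any fine structure of $F$ or $G$ beyond integrability.
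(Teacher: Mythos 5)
Your proof is correct and follows essentially the same route as the paper: both rely on the convolution representation of $\mathcal{Q}(F,G)$ established just before the lemma, the $L^1$ integrability of $F_x$ and $\widehat{g}_1$ guaranteed by the hypotheses and Young's inequality, and Fubini's theorem to integrate out $\omega$. The paper's proof is terser and leaves the final cancellation implicit, whereas you spell out the mechanism (the factorization $\int (f\star g) = (\int f)(\int g)$ and the invariance of the integral under the reflection $\mathcal{I}$), which is exactly the right justification.
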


\begin{proof}
Since $F \in L_{0,2}^{1,1} (\mathbb{R}^{2d}) \subset L^1 (\mathbb{R}^{2d})$, we conclude that $F_x (\cdot) \in L^1 (\mathbb{R}^{d})$ for a.e. $x \in \mathbb{R}^d$. From the previous analysis it follows that $\mathcal{Q} (F,G) (x,\cdot) \in L^1 (\mathbb{R}^{d})$ for a.e. $x \in \mathbb{R}^d$. By Fubini's Theorem, we have:
\begin{equation*}
\begin{array}{c}
\int_{\mathbb{R}^d} \mathcal{Q} (F,G) (x, \omega) d \omega =0,
\end{array}
\label{eqCollisionTerm5}
\end{equation*}
for a.e. $x \in \mathbb{R}^d$, which proves (i).
\end{proof}

Let us now comment on the conservation laws for the equation
\begin{equation}
\frac{\partial F}{\partial t} + 4 \pi \omega \cdot \nabla_x F = \mathcal{Q} (F,F).
\label{eqCollisionTerm6}
\end{equation}
We emphasize again the strong formal similarity between this equation and Boltzmann's equation.

Under the conditions of Lemma \ref{Lemma2} for $F$, we have:
\begin{equation*}
\begin{array}{c}
\frac{\partial}{\partial t} \int_{\mathbb{R}^{2d}} F (x, \omega,t) d x d \omega=\\
\\
= \int_{\mathbb{R}^{2d}} \left(- 4 \pi \omega \cdot \nabla_x F (x, \omega,t) + \mathcal{Q} (F,F) (x, \omega,t \right) d x d \omega.
\end{array}
\label{eqNLSE12}
\end{equation*}
The first term vanishes as it is the integral of a total derivative. The second term also vanishes in view of identity (\ref{eqCollisionTerm4.1}).

It may be shown that under suitable regularity conditions, we have conservation of energy:
\begin{equation*}
\frac{\partial}{\partial t} \int_{\mathbb{R}^{2d}} \left( \omega^2 - \frac{\lambda}{8 \pi^2} \int_{\mathbb{R}^d} F (x, \omega^{\prime},t) d \omega^{\prime} \right) F (x, \omega,t) dx d \omega =0.
\label{eqNLSE13}
\end{equation*}

We conclude with a brief comment on the solutions of this equation. Suppose we choose as the initial distribution the Wigner function $F(x,\omega,0)= W \varphi (x, \omega)$ associated with some wave function $\varphi \in H^1 (\mathbb{R}^d)$ as in Theorem \ref{theoremNLSE1}. Then the solution of (\ref{eqCollisionTerm6}) is given by $W \psi (x,\omega,t)$, where $\psi(x,t)$ is the solution of the cubic NLSE (for $t \in I$). In a future work, we shall study the existence of solutions of (\ref{eqCollisionTerm6}) for initial distributions which are not Wigner functions.

\section*{Acknowledgements}

The authors are grateful to H. G. Feichtinger for valuable comments, suggestions and discussions which helped us in improving the first version of the manuscript.

The work of N. Teofanov is partially supported  by projects TIFREFUS Project DS 15, and MPNTR of Serbia Grant No. Grant No. 451--03--68/2022--14/200125.

\vspace{2cm}

**********************************************************************************************************************************************************************************************************

\textbf{Author's addresses:}

\begin{itemize}
\item \textbf{Nuno Costa Dias and Jo\~ao Nuno Prata: }Grupo de F\'{\i}sica
Matem\'{a}tica, Departamento de Matem\'atica, Faculdade de Ci\^encias, Universidade de Lisboa, Campo Grande, Edif\'{\i}cio C6, 1749-016 Lisboa, Portugal and Escola Superior N\'autica Infante D. Henrique. Av.
Eng. Bonneville Franco, 2770-058 Pa\c{c}o d'Arcos, Portugal.

\item \textbf{Nenad Teofanov:} Department of Mathematics and Informatics, Faculty of Sciences,
University of Novi Sad, Trg D. Obradovi\'ca 4, 21000 Novi Sad, Serbia.
\end{itemize}

**********************************************************************************************************************************************************************************************************

\end{document}